\titlespacing*{\paragraph}{0pt}{4pt plus 1pt minus 1pt}{.75em}
\let\originalleft\left
\let\originalright\right
\renewcommand{\left}{\mathopen{}\mathclose\bgroup\originalleft}
\renewcommand{\right}{\aftergroup\egroup\originalright}
\newcommand{\BQP}{{\mathsf{BQP}}}
\newcommand{\BPP}{{\mathsf{BPP}}}
\newcommand{\C}{{\mathbb{C}}}
\newcommand{\R}{{\mathbb{R}}}
\newcommand{\I}{\mathscr{I}}
\DeclareMathOperator{\poly}{poly}
\DeclareMathOperator{\polylog}{polylog}
\newtheoremstyle{tight}
  {1.5pt} 
  {1.5pt} 
  {\itshape} 
  {} 
  {\bfseries} 
  {.} 
  {.5em} 
  {} 
\theoremstyle{tight}
\newtheorem{theorem}{Theorem}
\newtheorem{lemma}{Lemma}
\newtheorem{corollary}{Corollary}
\newtheorem{proposition}{Proposition}
\newtheorem{problem}{Problem}
\newtheorem{remark}{Remark}
\newtheorem{definition}{Definition}
\newcommand{\eq}[1]{(\ref{eq:#1})}
\newcommand{\alg}[1]{\hyperref[alg:#1]{Algorithm~\ref*{alg:#1}}}
\newcommand{\defn}[1]{\hyperref[defn:#1]{Definition~\ref*{defn:#1}}}
\renewcommand{\sec}[1]{\hyperref[sec:#1]{Section~\ref*{sec:#1}}}
\newcommand{\thm}[1]{\hyperref[thm:#1]{Theorem~\ref*{thm:#1}}}
\newcommand{\lem}[1]{\hyperref[lem:#1]{Lemma~\ref*{lem:#1}}}
\newcommand{\cor}[1]{\hyperref[cor:#1]{Corollary~\ref*{cor:#1}}}
\newcommand{\prb}[1]{\hyperref[prb:#1]{Problem~\ref*{prb:#1}}}
\newcommand{\prop}[1]{\hyperref[prop:#1]{Proposition~\ref*{prop:#1}}}
\newcommand{\fig}[1]{\hyperref[fig:#1]{Figure~\ref*{fig:#1}}}
\newcommand{\appx}[1]{\hyperref[appx:#1]{Appendix~\ref*{appx:#1}}}
\let\@@magyar@captionfix\relax
\pgfplotsset{compat=1.17}
\numberwithin{equation}{section}
\newcommand{\be}{\begin{equation}}
\newcommand{\ee}{\end{equation}}
\begin{document}

\title{Quantum algorithms to simulate quadratic classical Hamiltonians and optimal control}
\author{Hari Krovi }
\date{\today}
\affil[]{ Riverlane Research, Cambridge, MA}

\maketitle

\begin{abstract}
Simulation of realistic classical mechanical systems is of great importance to many areas of engineering such as robotics, dynamics of rotating machinery and control theory. In this work, we develop quantum algorithms to estimate quantities of interest such as the kinetic energy in a given classical mechanical system in the presence of friction or damping as well as forcing or source terms, which makes the algorithm of practical interest. We show that for such systems, the quantum algorithm scales polynomially with the logarithm of the dimension of the system. We cast this problem in terms of Hamilton's equations of motion (equivalent to the first variation of the Lagrangian) and solve them using quantum algorithms for differential equations. We then consider the hardness of estimating the kinetic energy of a damped coupled oscillator system. We show that estimating the kinetic energy at a given time of this system to within additive precision is $\BQP$ hard when the strength of the damping term is bounded by an inverse polynomial in the number of qubits. In other words, when combined with prior work on coupled oscillators, this shows that we can get a quantum advantage even in the presence of damping.

We then consider the problem of designing optimal control of classical systems, which can be cast as the second variation of the Lagrangian. In this direction, we first consider the Riccati equation, which is a nonlinear differential equation ubiquitous in control theory. We give an efficient quantum algorithm to solve the Riccati differential equation well into the nonlinear regime. To our knowledge, this is the first example of any nonlinear differential equation that can be solved when the strength of the nonlinearity is asymptotically greater than the amount of dissipation. We then show how to use this algorithm to solve the linear quadratic regulator problem, which is an example of the Hamilton-Jacobi-Bellman equation. Our results show that quantum algorithms for differential equations, especially for ordinary linear differential equations, can go quite far in solving problems of practical relevance.
\end{abstract}

\section{Introduction}\label{sec:introduction}
Quantum algorithms have been designed for several important problems such as cryptography, quantum Hamiltonian simulation, condensed matter physics, semi-definite programming and optimization. Important primitives have been developed such as quantum phase estimation, quantum walks, Fourier transforms, Hamiltonian simulation and quantum linear systems algorithms (QLSA). The latter two have a close relationship and common techniques can be used in implementing them. In fact, it was shown in \cite{10.1145/3313276.3316366, PRXQuantum.2.040203} that one can include quantum walks, phase estimation and Hamiltonian simulation to show that they can all be cast as the problem of implementing functions of Hamiltonians. This has recently been generalized to a large set of non-Hermitian matrices in \cite{low2024quantum}. Hamiltonian simulation essentially solves the Schr\"{o}dinger equation, which is a linear ordinary differential equation (ODE). Efficient techniques for this have been developed in a series of papers \cite{Lloyd_Ham_sim, BAC07,Ham_sim_opt, Low2019hamiltonian, chakraborty_et_al:LIPIcs:2019:10609} with current techniques having an optimal dependence on the parameters. Hamiltonian simulation algorithms for time-dependent Hamiltonians have been developed in \cite{PhysRevA.99.042314,Berry2020timedependent, Kalev2021quantumalgorithm}. 

The HHL algorithm to solve linear systems of equations was described in the seminal paper \cite{HHL09}. This algorithm has been generalized in subsequent works (see \cite{dervovic2018quantum} for a review), where the matrix representing the linear system can be non-Hermitian, rectangular and even singular. The dependence on various parameters such as the condition number and error has been improved in \cite{ambainis:LIPIcs:2012:3426,CKS15,QLSA_linear_kappa}. This algorithm has been applied to problems in topological data analysis \cite{LGZ,TDA}, machine learning \cite{PCA} and more recently in solving differential equations (described in detail below). An important subroutine developed in the process is a protocol to load classical data called block-encoding. This method takes classical data such as matrix entries of sparse and structured matrices and creates a unitary that represents the matrix in the quantum algorithm. Block-encoding has been used in earlier quantum algorithms such as \cite{CKS15} and recently many explicit methods have been developed \cite{10.1145/3313276.3316366,chakraborty_et_al:LIPIcs:2019:10609, vanApeldoorn2020quantumsdpsolvers, Sunderhauf2024blockencoding, 10012045}. We use these techniques in this paper as well.

An approach to solve differential equations using the linear systems algorithms was given in \cite{Ber14}. This was made into a high-precision algorithm in \cite{BCOW17} using truncated Taylor series to approximate $\exp(At)$. These algorithms have been generalized to allow for arbitrary time-independent matrices in \cite{krovi2022improved}. An important quantity used in this work is the norm of the exponential of the matrix i.e., $\|\exp(At)\|$. It was shown that as long as this norm of the exponential of the matrix $A$ over the time domain of interest (denoted $C(A)$ in \cite{krovi2022improved}) is polynomially bounded, one can bound the condition number of the resulting linear system. Using a stronger condition that the log-norm of $A$ is negative, this has been extended to time-dependent matrices in \cite{time_marching, berry2022quantum}. More recently, an approach to reduce time-dependent differential equations to Hamiltonian simulation is given in \cite{an2023quantum, LCHS} where optimal dependence on parameters is achieved assuming $A$ has a negative log-norm. It is interesting to note that for one of the applications considered here (of damped coupled oscillators), the matrix involved in the differential equation does not have a negative log-norm and needs the weaker condition of bounded $C(A)$. A general theory of quantum algorithms for linear ODEs is given in \cite{ode_theory} where a fast-forwarding theorem similar to the one in Hamiltonian simulation is proved. Recently, there has been work on using Hamiltonian simulation algorithms to solve general ODEs. In \cite{jin2022quantum}, the authors reduce a general class of partial differential equations to Hamiltonian simulation.

Many of the above algorithms essentially solve a \emph{quantum} problem i.e., given an initial state as a quantum state, and sparse access oracles, the goal is to create a quantum state that encodes the normalized solution state. However, in practical applications, one usually has access to classical data that must be encoded into the quantum state and at the end of the simulation we would also need to extract classical data out of the quantum state. Many quantum algorithms are designed within a specific mathematical framework and applying them to problems of practical interest is not always straightforward. The comprehensive review \cite{dalzell2023quantum} covers the above mentioned algorithms and many more from the point of view of end-to-end applications. In considering such end-to-end applications, one needs proof or strong evidence of three different aspects namely, quantum efficiency, classical hardness and practical utility.

The recent work on estimating the kinetic energy of coupled oscillators \cite{babbush2023exponential} is an example of such an end-to-end application. In that paper, the authors consider a problem of practical interest and show that quantum algorithms for Hamiltonian simulation can be used to provide evidence of a quantum advantage. The specific problem considered is that of estimating kinetic energy of a network of coupled oscillators. The problem of estimating the kinetic energy (or more precisely, the decision version where one has to decide whether the kinetic energy is above some value or below another with the promise that the gap is a constant independent of the number of qubits) has been shown to be $\BQP$ complete and therefore, one does not expect an efficient classical algorithm unless $\BQP=\BPP$. While this problem has a quantum advantage, an important caveat is that the oscillators are assumed to be ideal and there is no damping in the system. Going beyond this idealized setting i.e., simulating oscillators in the presence of damping would increase the practical utility of the quantum algorithm. In this work, we show that even in the presence of damping in coupled oscillators (or dissipation in other systems that can be modeled by the similar equations), there exist efficient quantum algorithms to approximate the kinetic energy of the system.

One might wonder if the presence of damping makes the problem of approximating the kinetic energy of an oscillator network amenable to classical algorithms i.e., it may be that the problem of kinetic energy estimation of damped coupled oscillators is in $\BPP$. We show that this is unlikely to be the case. Specifically, we show that even in the presence of damping, approximating the kinetic energy of an arbitrary sparse oscillator network is $\BQP$ hard (provided that the strength of damping is inverse polynomial in the number of qubits). This means that even under the realistic conditions, this problem has a quantum advantage. In order to construct a quantum algorithm for this problem, we cast this problem as ordinary differential equation (ODE) and use the recently developed quantum algorithm for ODEs \cite{krovi2022improved}.

In order to apply quantum algorithms for linear ODEs of the form $\dot{x}=Ax+b$, one needs to overcome certain obstacles. As pointed out in \cite{babbush2023exponential}, if the quantum algorithm for a linear ODE requires that $A$ has a negative log-norm, then one cannot get an efficient algorithm. The algorithm in \cite{krovi2022improved} requires that the norm of the exponential of $A$ is bounded and works even when the log-norm of $A$ is positive. However, this encoding means that the norm of the exponential of $A$ is proportional to the condition number of the matrix $V$, which encodes the spring constants. The algorithm of \cite{babbush2023exponential} did not depend on the condition number of $V$, but rather only the norm of $V$. This is the price one has to pay to be able to simulate damping (which is not amenable to Hamiltonian simulation). Fortunately, the work of \cite{babbush2023exponential} also showed the the hardest cases of the coupled oscillator problem (i.e., the instances which are $\BQP$ hard) have a constant condition number for $V$. This also carries over for the case with damping. We also show that in a different basis, we can design a quantum algorithm to estimate the kinetic energy with a slightly worse dependence on error, but no dependence on the condition number of $\sqrt{V}$. Therefore, depending on the condition number of $\sqrt{V}$, one can choose one of these bases. These algorithms are applicable to systems other than just damped forced coupled oscillators. For the specific case of damped and forced coupled oscillators, we show that one can remove the dependence on the condition number of $V$ and the worse dependence on error by using the specific structure of the problem (by leveraging the results for block-encoding in \cite{babbush2023exponential}).

Quantum algorithms for nonlinear differential equations have been constructed in \cite{Liue2026805118, krovi2022improved,Xue_2021,LPG20} and quantum algorithms for partial differential equations (PDEs) in \cite{CLO20, jin2022b, heat_eq}. For nonlinear ODEs, quantum algorithms such as \cite{Liue2026805118, krovi2022improved} can be applied to very general types of nonlinear ODEs (with a polynomial nonlinearity). However, these algorithms can only handle a small amount of nonlinearity i.e., they require $R<1$, where $R$ is (roughly) the ratio of the strength of the nonlinearity to the strength of dissipation. In \cite{Liue2026805118}, it was also shown that no quantum algorithm can efficiently solve nonlinear ODEs for $R>\sqrt{2}$. This was improved to $R\geq 1$ in \cite{lewis2023limitations} matching the algorithm. While this is true for arbitrary nonlinear differential equations, one might hope that for specific nonlinear ODEs, higher nonlinearity may be efficiently solvable using quantum algorithms. However, there are currently no quantum algorithms that can solve nonlinear ODEs for $R>1$. In this paper, we remedy this situation and show that a specific nonlinear differential equation called the Riccati equation used in optimal control can be solved for $R=\polylog(N)$, where $N$ is the dimension of the matrices involved in the equation. The solution to the Riccati equation, which is ubiquitous in classical optimal control, has practical applications such as the stability of magnetohydrodynamics \cite{Glasser}.

The solution to the Riccati equation can also be used to solve another problem in optimal control known as the linear quadratic regulator problem (LQR). This problem is about finding a controller that can regulate a process such as in an airplane by minimizing a cost function. It turns out that this can be cast as a Riccati equation by viewing it as a special case of the Hamilton-Jacobi-Bellman (HJB) equation. The HJB equation is a non-linear partial differential equation that generalizes the Hamilton-Jacobi equation from the theory of calculus of variations. It gives the necessary and sufficient conditions for optimality of a control problem. In its generality, this problem can be highly nonlinear. However, in the case of the LQR problem, it becomes tractable using quantum algorithms.

The differential equations used in classical mechanical systems described above and those in optimal control are linked via a theory known as Calculus of Variations \cite{gelfand2000calculus, liberzon2012calculus} as mentioned above. This theory has two equivalent approaches - a Lagrangian formalism and a Hamiltonian formalism. We use the Hamiltonian formulation of this theory since it gives us linear ODEs. We can then apply quantum algorithms for linear ODEs to solve the above problems of simulating classical mechanical systems under realistic conditions and of solving the Riccati and Hamilton-Jacobi-Bellman equation (for the LQG). Interestingly, in the above applications, we need the linear ODE solvers that are based on linear systems rather then Hamiltonian simulation. This is because we need to extract classical information from the unnormalized final state (which can be done via the so-called history state that keeps track of the solution at various times). Recent approaches via Hamitonian simulation directly produce the normalized final state and it would be interesting to see if they can be modified for these applications.

The rest of the paper is organized as follows. In \sec{prelims}, we give a basic background in calculus of variations, the Hamiltonian formalism, optimal control, Riccati equation and the HJB equation. Then in \sec{comp_prelims}, we present some background and definitions needed on oracle access and block encoding as well as results on quantum linear systems algorithms that we will need later on. In \sec{class_ham_sim}, we give our quantum algorithm to estimate the kinetic energy of damped coupled oscillators and other realistic mechanical systems. We also give a proof of the $\BQP$ hardness of additive approximation of the kinetic energy. Then in \sec{optimal_control}, we give quantum algorithms to solve the vector and matrix versions of the Riccati equation and its application to the LQR problem. Finally, in \sec{conclusions}, we present our conclusions and open questions.

\section{Preliminaries}\label{sec:prelims}

\subsection{Calculus of variations}\label{sec:Ham}
The starting point of the theory of calculus of variations is the Lagrangian which is a functional that depends on the position $q$, the velocity $\dot{q}$ and time $t$. The quantities $q$ and $\dot{q}$ may also depend on time. The goal is to find the minimum of the following integral of the Lagrangian.
\begin{equation}
    J(q)=\int_{t_i}^{t_f} L(q,\dot{q},t) dt\,,
\end{equation}
subject to some initial and final conditions on $q(t_i)$ and $q(t_f)$. In classical mechanics, this is called the \emph{principle of least action}. The choice of $q$ and $\dot{q}$ is not unique. Using canonical transformations, one can move to other coordinate systems such as polar coordinates. In this paper, we use the above ``Cartesian" coordinates so that the solution corresponds to physical positions and velocities. 

For the above minimization problem, one can derive a necessary condition similar to the condition that the vanishing at a point of the first derivative of a function gives its extremum. For the functional above, this is equivalent to the vanishing of the first variation, which leads to the famous Euler-Lagrange equations of motion. These equations can be written as
\begin{equation}
    \frac{\partial L}{\partial q_i} = \frac{d}{dt}\frac{\partial L}{\partial \dot{q}_i}\,,
\end{equation}
for all coordinate indices $i$. For example, when the mechanical system is a set of masses connected by springs, the Lagrangian describes a system of coupled oscillators. The equations of motion for this system were recently explored in \cite{babbush2023exponential}, where quantum algorithms for simulation and the complexity of estimating energies were given. The equations of motion in this case become
\begin{equation}
    M\ddot{q}(t) = Fq(t)\,,
\end{equation}
where $M$ is a diagonal matrix of masses and $F$ is a matrix that captures the force due to the spring constants.

There is a different set of equations, called Hamilton's canonical equations, that are first order unlike the Euler-Lagrange equations (which are second order equations). Hamilton's equations are equivalent to the Euler-Lagrange equations of motion and are derived from the classical Hamiltonian. To define the classical Hamiltonian, we first need to define the momentum $p$ which is conjugate to the position coordinate vector $q$ as follows.
\begin{equation}
    p_i=\frac{dL}{d\Dot{q}_i}\,,
\end{equation}
where $p_i$ are the components of $p$. Then one takes the Legendre transform to obtain the Hamiltonian as follows.
\begin{equation}
    \mathcal{H}(q,p,t)=p\dot{q} - L(q,\dot{q},t)\,,
\end{equation}
where $p\dot{q}$ is the dot product of $p$ and $\dot{q}$. The set of $2N$ coordinates $q_i$ and $p_i$ constitute the classical phase space of the mechanical system. For instance, the classical Hamiltonian of a system of coupled oscillators can be written as
\begin{equation}
    \mathcal{H} = \frac{1}{2}p^T M^{-1}p + q^T K q\,,
\end{equation}
where $K$ is the potential term of the spring constants. The equations of motion coming from the classical Hamiltonian, also known as Hamilton's canonical equations, can be written as
\begin{equation}\label{eq:Ham_eqs}
    \dot{q} = \nabla_p \mathcal{H}(q,p,t)\,,\hspace{0.1in} \dot{p} = -\nabla_q \mathcal{H}(q,p,t)\,,
\end{equation}
where
\begin{equation}
    \nabla_q = \begin{pmatrix}
        \partial_{q_1}\\ \vdots\\\partial_{q_N}
    \end{pmatrix}
    \,,
    \hspace{0.1in}
    \nabla_p = \begin{pmatrix}
        \partial_{p_1}\\ \vdots\\\partial_{p_N}
    \end{pmatrix}\,.
\end{equation}
This set of equations is equivalent to the Euler-Lagrange equations but are first order rather than second order equations. In the following, we present some known results \cite{de2006symplectic} on quadratic Hamiltonians that are relevant to this paper.

\subsection{Quadratic Hamiltonians}\label{sec:quad}
Suppose we have a quadratic Hamiltonian of the form
\begin{equation}
    \mathcal{H}(q,p,t) = \frac{1}{2}(q^T Q_1 q + q^T Q_2 p + p^T Q_3 q + p^T Q_4 p)\,.
\end{equation}
We can write this compactly as
\begin{equation}\label{eq:Ham}
    \mathcal{H}(z,t) = \frac{1}{2}z^T  Q z\,,
\end{equation}
where 
\begin{equation}
    Q=\begin{pmatrix}
        Q_1&Q_2\\Q_3&Q_4
    \end{pmatrix}\,,
    \hspace{0.1in}
    z=\begin{pmatrix}
        q\\p
    \end{pmatrix}\,.
\end{equation}
We can then exploit the rich symplectic structure of the phase space of such systems. Below, we give some basic information about this structure that we will use in our algorithms later.

For any integer $n$, a symplectic matrix $S$ is a $2n\times 2n$ matrix (over the reals) such that 
\begin{equation}
    S^T J S =J\,,
\end{equation}
where $S^T$ is the transpose of $S$ and $J$ is defined as the following block matrix
\begin{equation}
    J = \begin{pmatrix}
        0 & I\\-I & 0
    \end{pmatrix}\,.
\end{equation}
The set of all symplectic matrices forms a group under multiplication denoted Sp$(2n, \mathbb{R})$. This continuous group also has the structure of a Lie group whose Lie algebra is given by
\begin{equation}
    \mathfrak{sp}(2n,\mathbb{R}) = \{X\in M_{2n\times 2n}:JX + X^TJ=0\}\,.
\end{equation}
In terms of block matrices, the symplectic algebra can be written as 
\begin{equation}
    X=\begin{pmatrix}
        A&B\\C&D
    \end{pmatrix}\,,
\end{equation}
such that 
\begin{align}
    &A=-D^T\\
    &C=C^T\\
    &B=B^T\,.
\end{align}
The exponential map takes the lie algebra to the Lie group of symplectic matrices i.e., $\exp(X)=S$, where $X\in\mathfrak{sp}(2n,\R)$ and $S\in$Sp$(2n,\R)$. Without any additional constraints on $X$, the exponential map is not a diffeomorphism, which means that one cannot exponentiate to get to an arbitrary symplectic matrix. In terms of these quantities, we can write Hamilton's equation \eq{Ham_eqs} as follows
\begin{equation}\label{eq:dz_dt}
    \frac{dz}{dt} = JQz\,,
\end{equation}
where $Q$ defines the quadratic Hamiltonian in \eq{Ham}. The solution to this
\begin{equation}\label{eq:z_soln}
    z(t)=\exp(JQt)z(0)\,,
\end{equation}
where $z(0)$ is the initial condition at $t=0$.

When $Q$ is a positive definite matrix, Williamson's theorem \cite{de2006symplectic} implies that $JQ$ is diagonalizable (not necessarily unitarily diagonalizable) and its eigenvalues are all purely imaginary. Loosely speaking, $J$ plays the role of the complex unit $i=\sqrt{-1}$. When $Q$ is a positive definite matrix, we can convert the evolution in \eq{z_soln} to a quantum Hamiltonian evolution as follows. First, define $H=i\sqrt{Q}J\sqrt{Q}$. It is easy to see that when $Q$ is positive, $H$ is Hermitian (using the fact that $J^\dag=-J$). Now define
\begin{equation}
    y(t)=\sqrt{Q}z(t)\,.
\end{equation}
Then \eq{z_soln} becomes
\begin{equation}
    y(t) = \sqrt{Q}\exp(JQt)(\sqrt{Q})^{-1}y(0) = \exp(\sqrt{Q}J\sqrt{Q}t)y(0) = \exp(iHt)y(0)\,.
\end{equation}
This shows that $y(t)$ evolves according to the quantum Hamiltonian $H$. However, $y(t)$ may not always be easy to deal with from the point of view of extracting useful information in a quantum algorithm. At times, we might be interested in obtaining the solution
\begin{equation}
    x = \begin{pmatrix}
        q\\\dot{q}
    \end{pmatrix}\,.
\end{equation}
It can be seen that
\begin{equation}
    z=\begin{pmatrix}
        I&0\\0&M
    \end{pmatrix}
    x\,,
\end{equation}
where recall that 
\begin{equation}
    z=\begin{pmatrix}
        q\\p
    \end{pmatrix}\,.
\end{equation}
This means that from \eq{dz_dt}
\begin{equation}
   \begin{pmatrix}
        I&0\\0&M
    \end{pmatrix} \frac{dx}{dt} = 
    JQ\begin{pmatrix}
        I&0\\0&M
    \end{pmatrix}x\,,
\end{equation}
This gives us
\begin{equation}
    \frac{dx}{dt} = \begin{pmatrix}
        I&0\\0&M^{-1}
    \end{pmatrix} JQ \begin{pmatrix}
        I&0\\0&M
    \end{pmatrix} x\,.
\end{equation}
When $Q$ has a special block-diagonal form below (such as in the case of coupled oscillators)
\begin{equation}
    Q=\begin{pmatrix}
        V&0\\0&M^{-1}
    \end{pmatrix}\,,
\end{equation}
we have
\begin{equation}\label{eq:dx_dt}
    \frac{dx}{dt} = \begin{pmatrix}
        0&I\\-M^{-1}V&0
    \end{pmatrix} x\,.
\end{equation}

\subsection{Quadratic Hamiltonians with dissipation}
Non-conservative systems more accurately model real-world mechanical systems. In this subsection, we consider dissipation, which could arise due to drag, damping or other resistive forces on conservative systems. In such systems, one defines the Lagrangian $L$ in the usual way without including dissipation and then adds terms such as the Rayleigh dissipation function \cite{goldstein2002classical}. However, the Rayleigh dissipation function does not account for forcing or source terms, which are added separately. All of this modifies the first variation as follows.
\begin{equation}
    \frac{d}{dt}\frac{\partial L}{\partial\dot{q}_i} - \frac{\partial L}{\partial q^i} = r_i\,,
\end{equation}
where $r_i$ is models the role of dissipation and sources. The equation of motion can be written as
\begin{equation}
    M\ddot{q} + R\dot{q} + Vq + f=0\,.
\end{equation}
In the above equation, $M$ comes from the kinetic energy term and corresponds the mass matrix, $R$ is a matrix capturing damping or dissipation, $V$ is the potential energy matrix and $f$ corresponds to sources or forcing terms. Several interesting systems such as damped coupled oscillators \cite{goldstein2002classical}, dynamics of rotating machinery \cite{friswell2010dynamics} and gyroscopic systems \cite{LANCASTER2013686} have equations of motion of this type. To convert this system of second order differential equations to Hamilton's first order equations, consider a Lagrangian as follows.
\begin{equation}
    L = \frac{1}{2}\dot{q}^TM\dot{q} - \frac{1}{2}q^TVq\,.
\end{equation}
We can define the Hamiltonian in the same way as before (via a Legendre transformation) as 
\begin{equation}
    H = p\cdot \dot{q} - L\,,
\end{equation}
where
\begin{equation}
    p_i = \frac{\partial L}{\partial\dot{q}_i} = M_i\dot{q}_i\,.
\end{equation}
Now one can define the following modification of Hamilton's canonical equations \cite{goldstein2002classical}.
\begin{align}
    &\dot{q}_j = \frac{\partial H}{\partial p_j}\,, \\
    &\dot{p}_j = -\frac{\partial H}{\partial q_j} + r_j\,,
\end{align}
where $r_j$ captures the dissipation and source terms and comes from the matrix $R$. In our case, these are the terms involving $R$ and $f$. This gives us
\begin{equation}
    r_j = -\sum_i R_{j i}\dot{q}_i - f_j\,.
\end{equation}
This becomes
\begin{equation}\label{eq:z_basis}
    \frac{dz}{dt} = \begin{pmatrix}
        0 & M^{-1}\\-V & -RM^{-1}
    \end{pmatrix}z -\tilde{f}\,,
\end{equation}
where
\begin{equation}
    z = \begin{pmatrix}
        q\\p
    \end{pmatrix}\,,\hspace{0.1in} \text{and}\hspace{0.1in} \tilde{f} = \begin{pmatrix}
        0\\f
    \end{pmatrix}\,.
\end{equation}
Next, we consider $y=\sqrt{Q}z$. In this basis, we get
\begin{equation}
    \frac{dy}{dt} = \sqrt{Q}\begin{pmatrix}
        0 & M^{-1}\\-V & -RM^{-1}
    \end{pmatrix}\sqrt{Q^{-1}}y - \sqrt{Q}\tilde{f}\,,
\end{equation}
Let $b=-\sqrt{Q}\tilde{f}$. Using
\begin{equation}
    Q=\begin{pmatrix}
        V&0\\0&M^{-1}
    \end{pmatrix}\,,
\end{equation}
we get
\begin{equation}\label{eq:y_basis}
    \frac{dy}{dt} = \begin{pmatrix}
        0 & \sqrt{VM^{-1}}\\
        -\sqrt{M^{-1}V} & -\sqrt{M^{-1}}R\sqrt{M^{-1}}
    \end{pmatrix}y+b\,.
\end{equation}
Finally, switching to $x=(q,\dot{q})^T$, we get
\begin{equation}\label{eq:x_basis}
    \frac{dx}{dt} = \begin{pmatrix}
        0 & I\\-M^{-1}V & -M^{-1}R
    \end{pmatrix}x + b\,,
\end{equation}
where (it turns out due to the form of $\tilde{f}$)
\begin{equation}
    b = -\begin{pmatrix}
        I & 0 \\0& M^{-1}
    \end{pmatrix} \tilde{f}\,.
\end{equation}
This can be written as 
\begin{equation}\label{eq:Ham_eq_diss}
    \frac{dx}{dt}=Ax+b\,,
\end{equation}
where 
\begin{equation}
    A = \begin{pmatrix}
        0 & I\\-M^{-1}V & -M^{-1}R
    \end{pmatrix}\,.
\end{equation}
and $b$ is the transformed source term.

\subsection{Optimal control}\label{sec:optimal}
In this subsection, we continue with calculus of variations and discuss applications to optimal control. In this setting, a \emph{control} parameter is included in the Lagrangian which controls the position variable through a differential equation. The goal is to find the value of the control parameter as a function of time such that a cost functional is minimized. Such problems have many practical applications in operations research \cite{liberzon2012calculus}.

In mathematical terms, the problem of optimal control can be described as follows. Let $q(t)$ be the position variable as before and $u(t)$ be the control parameter. Suppose that the two are related by an ODE as follows.
\begin{equation}
    \dot{q} = F(q,u,t)\,,
\end{equation}
where $F$ is some function (which may be nonlinear). Later, we will relax that condition and in fact, assume a linear relationship. For the general problem definition however, we will assume $F$ to be arbitrary. The problem is to find the optimal $u(t)$ such that the total cost $J(u)$ is minimized, where $J(u)$ is defined as
\begin{equation}
    J(u) = \int_{t_i}^{t_f} L(q,\dot{q},t) dt + K(t_f,q_f)\,,
\end{equation}
Here $L$ is the Lagrangian (also the running cost) and  $K(t_f,q_f)$ is the terminal cost function that depends on the terminal time $t_f$ and the terminal position $q_f$. Here the final state $q_f$ and the final time $t_f$ are either fixed or free. This cost functional is sometimes called the Bolza functional \cite{liberzon2012calculus}. One can also convert this to a problem of maximization by taking the negative of the Lagrangian.

While it is not easy to solve the above problem for arbitrary functions and cost functionals, there are useful necessary and sufficient conditions that can be used to check if a given curve is optimal. Often, these conditions narrow down the search space making it possible to find a local minimum. A local minimum is defined as the set of all $u^\ast$ such that $J(u^\ast)\leq J(u)$ for all $|u^\ast - u|\leq \epsilon$.

There are three main necessary conditions for any curve $u(t)$ to be a minimum. The first of these conditions is that the first variation of the Lagrangian must equal zero leading to the Euler-Lagrange equations described above. The second is called the Legendre condition and it states that the second derivative of the Lagrangian with respect to $\dot{q}$ is positive. More precisely, define the matrix $L_{\dot{q}\dot{q}}$ as
\begin{equation}
    [L_{\dot{q}\dot{q}}]_{i,j} = \frac{\partial^2\mathcal{H}}{\partial \dot{q}_i\partial \dot{q}_j}\,,
\end{equation}
then the Legendre condition states that
\begin{equation}
    L_{\dot{q}\dot{q}}\geq 0\,.
\end{equation}
Finally, the third necessary condition is called Weierstrass condition and it states that the Lagrangian must be locally convex. To state it in mathematical form, define the so called Weierstrass excess function as follows.
\begin{equation}
    E(q,v,w,t) =L(q,v,t) - L(q,w,t) - (w-v)L_v(q,v,t)\,,
\end{equation}
where $v=\dot{q}$. Then the necessary condition states that $E(q,v,w,t)\geq 0$ for all $t\in [t_i,t_f]$ and for all $w$.

Sufficient conditions are similar to the above necessary conditions but stronger. The following set of conditions is sufficient for a curve $q(t)$ to be the (strong local) minimum.
\begin{enumerate}
    \item $q(t)$ and $\dot{q}(t)$ satisfy the Euler-Lagrange equations (or equivalently, $q$ and $p$ satisfy Hamilton's canonical equations),
    \item $L_{\dot{q}\dot{q}}>0$,
    \item $E(x,v,w,t)\geq 0$, for all $w$ and where $|x-q|\leq \epsilon$ and $|v-\dot{q}|\leq \epsilon$.
    \item There are no conjugate points in the interval $[t_i,t_f]$.
\end{enumerate}
The first condition is just the first variation and the second comes from taking the second variation of the Lagrangian. As pointed out above, these two conditions are similar to the vanishing of the first derivative and the positivity of the second derivative as sufficient conditions for the existence of a minimum of a smooth function. The third condition is same as before. Here the last condition is sometimes called the Jacobi condition, which we explain next since this is relevant to our algorithm for the Riccati equation.

Conjugate points are points on a manifold which need not minimize geodesics. The Jacobi equation can be used to determine if there exist conjugate points in any interval. Applied to the above Lagrangian, the definition states that $t_0$ and $t_1$ are conjugate if there exists a function $h(t)$ satisfying
\begin{equation}
    \frac{d}{dt}(L_{\dot{q}\dot{q}}\dot{h} + L_{\dot{q}q}h) -L_{q\dot{q}}\dot{h} - L_{qq}h = 0\,,
\end{equation}
such that $h(t_0)=0$ and $h(t_1)=0$. It turns out that the non-existence of conjugate points is related to the existence of a solution of the Riccati equation (which we explain in \sec{conj}).

The two main approaches to solving problems in optimal control are the Pontryagin principle approach (also called the maximum principle) and the Hamilton-Jacobi-Bellman approach. The approach using the Pontryagin principle uses Hamilton's canonical equations along with inequality conditions, where as the Hamilton-Jacobi-Bellman equation is a partial differential equation. These are described in detail in \cite{liberzon2012calculus} for instance. Here, we focus on a special case called the linear quadratic regulator (LQR) problem, where the cost function or Lagrangian is quadratic and the ODE that describes the relationship between the position $q(t)$ and the control $u(t)$ is linear. We give a quantum algorithm for this problem by first solving a closely related differential equation called the Riccati equation. Since our quantum algorithm to solve the LQR problem uses the Riccati equation, we describe this equation next. The Riccati equation is intimately related to the optimal control problem as explained in \cite{clarke_zeidan_1986}. 

\subsection{Riccati equation and flows on the Grassmannian}\label{sec:Grass}
The Grassmannian (denoted Gr$(N,M)$) is the set of $M$ dimensional vector spaces of an $N$ dimensional vector space (over the real or complex numbers). While the field for these vector spaces can be arbitrary such as the quaternions, we restrict attention to the complex numbers in this paper. The Grassmannian has the structure of a smooth manifold since it can be viewed as the homogeneous space GL$(N)/($GL$(N-M)\times $GL$(M))$. The action of the general linear group on the Grassmannian is transitive and gives rise to a vector bundle where the base manifold is the Grassmannian and the fibers are isomorphic to GL$(M)$. This vector bundle is called the Stiefel manifold $\mathbb{V}(N,M)$. It can be viewed as the set of $M$ frames i.e., the set of tuples of $M$ linearly independent vectors in $\C^N$. This set of points constitute the non-compact Stiefel manifold. The compact version of this consists of the set of \emph{orthonormal} $M$ frames.

Any point in $\mathbb{V}(N,M)$ can be represented by an $N\times M$ matrix $y$ consisting of linearly independent columns. Two such points have the same span if they differ by an element of the general linear group. This means that the projection map $\pi:\mathbb{V}(N,M)\rightarrow$Gr$(N,M)$ takes $y$ to the vector space consisting of the span of the columns of $y$. One can think of the matrix $y$ as representing a point on the Grassmannian up to multiplication by an element of GL$(M)$ i.e., $y$ and $\tilde{y}$ represent the same point on Gr$(N,M)$ if $\tilde{y} = yg$, where $g\in$GL$(M)$. Strictly speaking, the point $y$ lies in the Stiefel manifold and points in GL$(M)$ is on the fiber above the point $\pi(y)$ on the Grassmannian.

In terms of their action on vectors $z$ of complex numbers in $\C^N$, the coordinates $y$ can be defined as
\begin{equation}\label{eq:z_12}
    z_1 = y\cdot z_2\,,
\end{equation}
where $z_1$ is the set of the first $N-M$ components of $z$ and $z_2$ is the rest of the $M$ components. Choosing these coordinates $y$ is equivalent to choosing a set of $M$ independent vectors in $\mathbb{C}^N$. To view the Riccati equation as a flow on $Gr(M,N)$, first consider a vector field on the Steifel manifold given by
\begin{equation}
    z' =  A z\,,
\end{equation}
where $A$ is some $N\times N$ matrix and $z'$ is the derivative of $z$ (with respect to some parameter $t$). Let $A$ have the block form
\begin{equation}\label{eq:A_mx}
    A=\begin{pmatrix}
        F_1 & F_0\\F_2&F_3
    \end{pmatrix}\,,
\end{equation}
where $F_0$ is $N-M\times M$, 
Then we have
\begin{align}\label{eq:z_12_matrix}
    z_1' = F_1 z_1 + F_0z_2\,,\\
    z_2' = F_2z_1 + F_3z_2\,.
\end{align}
Taking the derivative of \eq{z_12} and using \eq{z_12_matrix}, we get
\begin{equation}
    F_1z_1 + F_0z_2 = z_1' = y'z_2 + yz_2' = y'z_2 + y(F_2z_1 + F_3z_2)\,.
\end{equation}
Using \eq{z_12} again, we get
\begin{equation}
    y'z_2 = (F_1y + F_0 - yF_2y - yF_3)z_2\,.
\end{equation}
Since $z_2$ is arbitrary, we obtain the Riccati equation as a flow defined on the Grassmannian. With this description of the Riccati equation, it becomes easy to see a natural linearization method which we describe next.

\subsection{Riccati equation and the M\"{o}bius transformation}\label{sec:Mobius}
Here, we introduce a well-known transformation called the M\"{o}bius transformation which linearizes the Riccati equation. This is a slight generalization of the one derived in \cite{SS99} and \cite{HLAVATY1984246}. This version is more convenient for our quantum algorithm since it allows for a constant term.
\begin{proposition}\label{prop:Riccati_ODE}
Define the following initial value problem for a linear differential equation.
\begin{equation}\label{eq:linear_Riccati}
    \dot{x}=Ax+b\,,
\end{equation}
where 
\begin{align}
    x(t)=\begin{pmatrix} u(t) \\v(t)\end{pmatrix} \,,\hspace{0.3in}\label{eq:u_v}
    A = \begin{pmatrix} F_1 & F_0\\F_2 & F_3\end{pmatrix}\,,\hspace{0.3in}
    b =\begin{pmatrix} e \\f\end{pmatrix}\,,
\end{align}
and where $u$ and $e$ are $N\times M$ matrices, $v$ and $f$ are $M\times M$ matrices. The initial conditions are $u(0)=y_0$ and $v(0)=I$. Then whenever $v(t)^{-1}$ is defined, $y(t)=(u(t)+w)v(t)^{-1}$ is a solution to the matrix Riccati equation \eq{Riccati_ivp}, where $w$ can be picked to be any time-independent vector and $e$ and $f$ are defined as
\begin{equation}\label{eq:e_f}
    e=F_1w\,\text{ and }\,f=F_2w\,.
\end{equation}
\end{proposition}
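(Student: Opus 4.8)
The plan is a direct matrix computation that parallels the Grassmannian derivation in \sec{Grass}: first reduce the affine linear system \eq{linear_Riccati} to a homogeneous one by a constant shift, then differentiate the matrix quotient $y=(u+w)v^{-1}$ and read off the Riccati equation.

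Writing \eq{linear_Riccati} in the block form \eq{u_v} gives the pair
\begin{align*}
  \dot u &= F_1 u + F_0 v + e,\\
  \dot v &= F_2 u + F_3 v + f.
\end{align*}
Introduce the constant shift $\tilde u = u + w$, so that $\dot{\tilde u}=\dot u$. Substituting the choices $e=F_1 w$ and $f=F_2 w$ from \eq{e_f}, the inhomogeneous terms are exactly absorbed into the linear terms, and the pair $(\tilde u,v)$ satisfies the \emph{homogeneous} system $\dot{\tilde u}=F_1\tilde u+F_0 v$, $\dot v=F_2\tilde u+F_3 v$. This is the role of $w$: a nonzero constant term $b$ (which is convenient for the linear-ODE solver used later) can be carried along without changing which Riccati equation is solved, provided $b$ lies in the affine family dictated by \eq{e_f}.

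Next, on any interval where $v(t)$ is invertible --- an open neighbourhood of $t=0$ since $v(0)=I$ --- I would differentiate $y=\tilde u\, v^{-1}$ using the matrix product rule together with $\frac{d}{dt}v^{-1}=-v^{-1}\dot v\, v^{-1}$, being careful to preserve the order of the non-commuting factors. Substituting the homogeneous equations for $\dot{\tilde u}$ and $\dot v$ and repeatedly using $\tilde u\, v^{-1}=y$ and $v\, v^{-1}=I$ collapses the expression to $\dot y = F_0 + F_1 y - y F_3 - y F_2 y$, which is the matrix Riccati equation \eq{Riccati_ivp}; evaluating at $t=0$ gives $y(0)=(u(0)+w)v(0)^{-1}=y_0+w$, recovering its initial condition.

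The computation is routine; the only points needing care are (i) not reordering matrix factors when differentiating $v^{-1}$ or when cancelling $v^{-1}$ against $v$, (ii) conformability of the blocks ($u,e$ are $N\times M$, $v,f$ are $M\times M$, $w$ is $N\times M$, so $F_1 w$ is $N\times M$ and $F_2 w$ is $M\times M$, as required), and (iii) stating the domain caveat precisely --- the identity holds only while $v(t)$ stays nonsingular, and the breakdown of invertibility of $v(t)$ is exactly the place where the Riccati solution can exhibit finite escape time. There is no genuine obstacle here: the content is that the classical M\"obius linearization of \cite{SS99,HLAVATY1984246} is unaffected by the added constant term, once that term is restricted to the form \eq{e_f}.
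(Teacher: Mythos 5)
Your proof is correct and follows essentially the same route as the paper: differentiate $y=(u+w)v^{-1}$ with the matrix product rule and $\frac{d}{dt}v^{-1}=-v^{-1}\dot v\,v^{-1}$, substitute the block ODEs, and absorb the constant via $e=F_1w$, $f=F_2w$. Your preliminary reduction to a homogeneous system in $\tilde u=u+w$ is a clean bookkeeping device that the paper skips, but the underlying computation is identical.

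One thing your more careful calculation exposes, and which is worth flagging, is a sign inconsistency that the paper glosses over. You (correctly) obtain $\dot y=F_0+F_1y-yF_2y-yF_3$, which agrees with the Grassmannian derivation in \sec{Grass}, whereas \eq{Riccati_ivp} is written with $+yF_2y+yF_3$. The paper's own proof silently drops the minus sign from $\dot y=\dot u v^{-1}-(u+w)v^{-1}\dot v v^{-1}$ in the substitution step, which happens to compensate and match the (mis-signed) target. So you should not write ``which is the matrix Riccati equation \eq{Riccati_ivp}'' without noting the sign flip on $F_2$ and $F_3$ (or, equivalently, that the linearizing matrix should be $A=\bigl(\begin{smallmatrix}F_1 & F_0\\ -F_2 & -F_3\end{smallmatrix}\bigr)$ to hit \eq{Riccati_ivp} exactly). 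Similarly, as you yourself compute, $y(0)=y_0+w$, not $y_0$, so either $w=0$ or the stated initial condition needs to be $u(0)=y_0-w$; the paper's line ``$y(0)=u(0)v(0)^{-1}=y_0$'' quietly drops the $w$. Neither issue affects the structure of your argument, but flagging them strengthens the write-up.
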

\begin{proof}
From the definition of $y$ as
\begin{equation}
    y = (u+w)v^{-1}\,,
\end{equation}
we see that taking the derivative of $y$ gives us
\begin{equation}\label{eq:y_dot}
    \dot{y}=\dot{u}v^{-1} - (u+w)v^{-1}\dot{v}v^{-1}\,.
\end{equation}
From \eq{linear_Riccati} and \eq{u_v}, we have
\begin{align}
    &\dot{u} = F_1u+F_0v+e\\
    &\dot{v} = F_2u+F_3v+f\,.
\end{align}
Replacing $\dot{u}$ and $\dot{v}$ in \eq{y_dot} above, we get
\begin{align}
    \dot{y}&=(F_1u+F_0v +e)v^{-1} + (u+w)v^{-1}(F_2u+F_3v+f)v^{-1}\\
    &= F_0+ F_1y + yF_2y +yF_3\,,
\end{align}
where we used the definitions of $e$ and $f$ from \eq{e_f}. The last equation is the Riccati equation with the initial condition $y(0)=u(0)v(0)^{-1}=y_0$.
\end{proof}

When using the M\"{o}bius transformation to solve the Riccati equation, one can run into singularities i.e., places where $v(t)$ is not invertible. This does not mean that the solution to the Riccati equation does not exist since $y(t)$ is always defined. In \cite{SS99}, it was shown that this is due to the solution crossing coordinate patches in the Grassmannian manifold. If one considers the solution on the single patch, then there is no singularity in the solution. It turns out that in our applications, the quantity $v(t)$ is invertible and in fact, the condition number can be assumed to be well-behaved. As we explain next, for the problem of finding the solution of an optimal control problem, the fact that the Jacobi condition is satisfied is equivalent to $v$ in \eq{u_v} being non-singular. 

\subsection{Conjugate points, Jacobi's condition and the Riccati equation}\label{sec:conj}
In this subsection, we describe the relationship between conjugate points and the Riccati equation. Given a Lagrangian $L$, conjugate points are defined as follows.
\begin{definition}
A point $t=c$ is called a conjugate point in the interval $(a,b)$ if there exists a non-trivial solution to the following boundary value problem in $(a,c)$.
\begin{align}
    &\frac{d}{dt}[L_{\dot{q}\dot{q}}\dot{h} +L_{\dot{q}q}h ] - L_{q\dot{q}}\dot{h} - L_{qq}h = 0\label{eq:Jacobi}\\
    &h(a)=h(c)=0\,,
\end{align}
where $L(q,\dot{q},t)$ is the Lagrangian and terms such as $L_{\dot{q}q}$ are its partial derivatives.
\end{definition}
The Jacobi condition for an interval $(a,b)$ states that there are no conjugate points in that interval. It turns out that there exists a Riccati equation corresponding to the Jacobi equation \cite{Reid}. It can be defined as follows (via its linearization)
\begin{align}\label{eq:Riccati_Jacobi}
    &\dot{U}(t)=AU(t) + BV(t)\\
    &\dot{V}(t)=CU(t) - A^\dag V(t)\,,
\end{align}
where $A, B$ and $C$ are defined as
\begin{align}
    &A = -L_{\dot{q}\dot{q}}^{-1}L_{\dot{q}q}\\
    &B = L_{\dot{q}\dot{q}}^{-1}\\
    &C = L_{qq}-L_{q\dot{q}}L_{\dot{q}\dot{q}}^{-1}L_{\dot{q}q}\,.
\end{align}
The above is the matrix Riccati equation. Define the corresponding vector Riccati equation
\begin{align}\label{eq:Riccati_Jacobi_vec}
    &\dot{u}(t)=Au(t) + Bv(t)\\
    &\dot{v}(t)=Cu(t) - A^\dag v(t)\,,
\end{align}
where $u$ and $v$ are vectors. By defining 
\begin{align}
    &u = h\\
    &v = L_{\dot{q}\dot{q}}\dot{h} +L_{\dot{q}q}h\,,
\end{align}
the Jacobi equation \eq{Jacobi} is equivalent to the set of equations \eq{Riccati_Jacobi}. To see this, notice that the Jacobi equation is
\begin{equation}
    \dot{v} - L_{q\dot{q}}\dot{u} - L_{qq}u = 0\,.
\end{equation}
Now Riccati equation can be written as 
\begin{align}
    &v = B^{-1}(\dot{u} - Au)\\
    &\dot{v} = Cu - A^\dag v\,,
\end{align}
which leads to
\begin{equation}
    \dot{v} = (C - A^\dag B^{-1}A)u - A^\dag B^{-1}\dot{u}\,.
\end{equation}
Now using the definitions of $A, B$ and $C$, we have
\begin{align}
    &C - A^\dag B^{-1}A = L_{qq}\\
    &A^\dag B^{-1} = L_{q\dot{q}}\,,
\end{align}
which shows that the Jacobi equation is equivalent to the Riccati equation. The following lemma was proved in \cite{clarke_zeidan_1986}.

\begin{lemma}\label{lem:conjugate}
If the Jacobi condition is satisfied i.e., there are no conjugate points, then $V(t)$ in the Riccati equation \eq{Riccati_Jacobi} is invertible.
\end{lemma}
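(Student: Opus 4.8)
The plan is to prove the contrapositive: if $V(t)$ becomes singular somewhere in the interval, then a conjugate point must exist. The engine is the dictionary, established in the paragraphs above, between the Jacobi boundary value problem \eq{Jacobi} and the linearized matrix Riccati system \eq{Riccati_Jacobi}. Concretely, I would first pin down the boundary data for $(U(t),V(t))$ that is inherited from the variational/control problem — the relation between $u$ and $v$ at the initial time coming from the transversality condition — and then record the resulting correspondence: for any constant vector $\xi$, the function $h:=U(\cdot)\xi$ solves the Jacobi equation \eq{Jacobi}, it automatically satisfies the boundary condition at the initial endpoint, and $L_{\dot q\dot q}\dot h + L_{\dot q q}h = V(\cdot)\xi$. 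The one nontrivial point in setting up this dictionary is showing $h\not\equiv 0$ whenever $\xi\neq 0$; this is exactly where the Legendre condition $L_{\dot q\dot q}>0$ enters, since $h\equiv 0$ would force $V(\cdot)\xi\equiv 0$, contradicting invertibility of $V$ at the normalization point.

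Granting the dictionary, the argument is short. Suppose $V(t_1)$ is singular for some $t_1$ in the interval and pick $0\neq\xi\in\ker V(t_1)$. Then $h:=U(\cdot)\xi$ is a nontrivial solution of \eq{Jacobi} that meets the boundary condition at the initial time, and $V(t_1)\xi=0$ makes the transversality/vanishing condition hold at $t_1$ as well; hence $t_1$ is a conjugate point of the accessory (Jacobi) problem, contradicting the Jacobi condition, so $V(t)$ is invertible throughout. As a cross-check one can instead run the classical ``completing the square'' computation: the absence of conjugate points makes the second variation $\int (\dot h^\dag L_{\dot q\dot q}\dot h + 2 h^\dag L_{q\dot q}\dot h + h^\dag L_{qq} h)\,dt$ positive definite on admissible $h$, and, after substituting the Riccati variable built from $U$ and $V$, that positivity is tied to the Riccati solution — equivalently $V(t)^{-1}$ — existing on the whole interval without finite-time blow-up.

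The main obstacle is the bookkeeping in the first paragraph. The bare definition of a conjugate point in \eq{Jacobi} is a two-point Dirichlet problem $h(a)=h(c)=0$, whereas the Riccati system carries a one-sided (transversality-type) condition inherited from the control problem, and one must verify that under the normalization actually used it is the block $V$ — rather than $U$ — whose singularity records conjugacy. This matching, and the consequent equivalence between disconjugacy on $[a,b]$ and nonsingularity of $V$, is precisely the disconjugacy theory for linear Hamiltonian systems developed by Reid \cite{Reid} and adapted to the calculus of variations by Clarke and Zeidan \cite{clarke_zeidan_1986}; once the normalization is fixed, the remaining steps (that $h$ solves \eq{Jacobi}, that it is nontrivial, and the contradiction) are the routine verifications sketched above.
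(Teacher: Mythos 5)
The paper itself provides no proof; the lemma is simply cited from \cite{clarke_zeidan_1986}. Your reconstruction is plausible in outline, but it has a concrete gap exactly at the point you yourself flag as ``the main obstacle,'' and I do not think it is resolvable by bookkeeping alone under the paper's own definitions.

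The paper's conjugate-point definition is the two-point Dirichlet one: $c$ is conjugate to $a$ if a nontrivial $h$ solves \eq{Jacobi} with $h(a)=h(c)=0$. In your contrapositive you take $\xi\in\ker V(t_1)$ and set $h:=U(\cdot)\xi$. The condition $V(t_1)\xi=0$ translates to $L_{\dot{q}\dot{q}}\dot h(t_1)+L_{\dot{q}q}h(t_1)=0$, a transversality-type condition at $t_1$; it does not give $h(t_1)=U(t_1)\xi=0$, which is what the Dirichlet definition needs. Nor, with the normalization $u(0)=y_0$, $v(0)=I$ that the paper actually uses for the M\"{o}bius linearization, is $h(a)=0$ secured, since the columns of $U$ need not vanish at the left endpoint. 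In fact, under the classical Dirichlet normalization $U(a)=0$, $V(a)=I$, conjugate points are exactly the zeros of $\det U$, not of $\det V$, and the two can differ: for $\ddot h+\omega^2 h=0$ one gets $u(t)=\sin(\omega t)/\omega$ and $v(t)=\cos(\omega t)$, so $v$ is singular at $t=\pi/(2\omega)$ while the first conjugate point is only at $t=\pi/\omega$. So ``no Dirichlet conjugate points $\Rightarrow V$ invertible'' fails in that regime. What Clarke and Zeidan actually establish ties nonsingularity of the $V$-block to a \emph{different}, strengthened Jacobi/disconjugacy condition adapted to the control problem's transversality data with a specific normalization; the lemma is imprecise as stated in the paper, and the matching you defer to Reid and Clarke--Zeidan is exactly the step that makes or breaks the argument, so the sketch cannot be regarded as a proof.
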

In this paper, we assume a quantitative version of the Jacobi condition where $V(t)$ is invertible and has a condition number bounded by $\kappa_V$.

\subsection{From the Riccati equation to the Hamilton-Jacobi-Bellman equation}\label{sec:HJB}
The Hamilton-Jacobi-Bellman equation is a nonlinear partial differential equation of the form below.
\begin{equation}\label{eq:HJB}
    \frac{\partial V^\ast (x(t),t)}{\partial t}=-\min_{u(t)}\Big(C(x(t),u(t),t)+ \Big[\frac{\partial V^\ast (x(t),t)}{\partial t}\Big]^\dag f(x(t),u(t),t)\Big)\,.
\end{equation}
Here $C(x(t),u(t),t)$ is a cost function, which we assume is a quadratic function of the following form.
\begin{equation}
    C(x(t),u(t),t)=u^\dag(t)R(t)u(t) + x^\dag(t)Q(t)x(t)\,,
\end{equation}
and that $V^\ast$ is of the form
\begin{equation}
    V^\ast (x(t),t)=x^\dag(t)P(t)x(t)\,,
\end{equation}
and $f(x,u,t)$ has a linear form i.e.,
\begin{equation}
    f(x(t),u(t),t)=F(t)x(t)+G(t)u(t)\,.
\end{equation}
Using this and re-writing the HJB equation, we get
\begin{equation}\label{eq:HJB_special}
    x^\dag \dot{P}x(t)=-\min_{u(t)}[u^\dag R u+x^\dag Qx + 2x^\dag PFx + 2x^\dag PG u]\,.
\end{equation}
The next result can be found in \cite{anderson2007optimal}.
\begin{lemma}\label{lem:Riccati_HJB}
Using the quadratic form for the cost function above and the linear form for $f$, the solution to the HJB equation \eq{HJB} can be obtained from a matrix Riccati equation.
\end{lemma}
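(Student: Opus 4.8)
The plan is to carry out the inner minimization in \eq{HJB_special} explicitly by completing the square and then to read off a matrix ODE for $P(t)$. The key observation is that the bracket on the right of \eq{HJB_special} contains no time derivative of $u$, so $\min_{u(t)}$ is a pointwise minimization: for each $t$ and each value of $x$ one minimizes the quadratic map $u \mapsto u^\dag R u + 2 x^\dag P G u + \big(x^\dag Q x + 2 x^\dag P F x\big)$. Under the standing LQR nondegeneracy assumption $R(t) > 0$ (the specialization to this problem of the Legendre condition $L_{\dot q\dot q}>0$), this map is strictly convex, so its unique minimizer satisfies $2 R u + 2 G^\dag P x = 0$, i.e. $u^\ast = -R^{-1} G^\dag P x$, which is the candidate optimal feedback law.

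First I would substitute $u^\ast$ back into the bracket. The two $u$-dependent terms collapse,
\[
(u^\ast)^\dag R u^\ast + 2 x^\dag P G u^\ast = -\, x^\dag P G R^{-1} G^\dag P x ,
\]
so \eq{HJB_special} becomes $x^\dag \dot P x = -\,x^\dag\big(Q + P F + F^\dag P - P G R^{-1} G^\dag P\big)x$, where I have symmetrized $2 x^\dag P F x = x^\dag(PF + F^\dag P)x$ using that $P(t)$ is Hermitian. Next, since $V^\ast(x,t) = x^\dag P(t) x$ forces $P(t)$, hence $\dot P(t)$, to be Hermitian, and a Hermitian matrix is uniquely determined by its associated quadratic form, the requirement that this identity hold for all $x$ is equivalent to the matrix equation
\[
\dot P = -Q - F^\dag P - P F + P G R^{-1} G^\dag P .
\]
This is precisely a matrix Riccati equation in the form used in \sec{Mobius}: matching with $\dot y = F_0 + F_1 y + y F_2 y + y F_3$ and $y = P$ gives $F_0 = -Q$, $F_1 = -F^\dag$, $F_3 = -F$, and $F_2 = G R^{-1} G^\dag$.

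Finally I would dispose of the boundary data and of the reduction to an initial value problem: the terminal cost pins down $P(t_f)$ (equal to the terminal-cost matrix), so the displayed Riccati equation is a terminal value problem, and the substitution $\tau = t_f - t$ turns it into an initial value problem to which \prop{Riccati_ODE} and then the quantum linear-ODE solver apply; from its solution one recovers both $V^\ast(x,t) = x^\dag P(t)x$ and the optimal control $u^\ast(t) = -R(t)^{-1}G(t)^\dag P(t)x(t)$. The only genuinely delicate point is justifying that $\min_{u(t)}$ may be treated as a pointwise, attained, unique minimization rather than an optimization over the function $u(\cdot)$; this rests on $R(t) > 0$ uniformly on $[t_i,t_f]$, which I would carry as a hypothesis. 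Everything else is routine linear algebra — completion of the square and the uniqueness of a Hermitian matrix given its quadratic form.
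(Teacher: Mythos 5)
Your proposal is correct and follows essentially the same route as the paper: identify the minimizer $\bar u=-R^{-1}G^\dag Px$ of the strictly convex quadratic in $u$ (the paper does this by completing the square, you by the first-order condition, which is the same computation), substitute it back into \eq{HJB_special}, and use that the resulting quadratic-form identity holds for all $x$ to extract the matrix Riccati equation $\dot P = PGR^{-1}G^\dag P - PF - F^\dag P - Q$, from which the optimal control is recovered. Your added remarks on Hermitian symmetrization and on converting the terminal-value problem to an initial-value problem are consistent with what the paper handles in the remark following the lemma and in \sec{matrix_Riccati}.
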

\begin{proof}
Using the identity
\begin{align}
    u^\dag Ru + x^\dag Qx + 2x^\dag PFx + 2x^\dag PGu = &(u+R^{-1}G^\dag Px)^\dag R(u+R^{-1}G^\dag Px) +\nonumber\\
    & x^\dag (Q-PGR^{-1}G^\dag P+PF+F^\dag P)x\,.
\end{align}
Since $R$ is positive definite and the second term above does not depend on $u$, the minimum of \eq{HJB_special} is obtained at 
\begin{equation}
    \bar{u}(t)=-R^{-1}G^\dag Px\,.
\end{equation}
Using this in \eq{HJB_special}, we get
\begin{equation}
    x^\dag \dot{P}x(t)=-x^\dag (Q-PGR^{-1}G^\dag P+PF+F^\dag P)x\,.
\end{equation}
Since this holds for all $x(t)$, we have
\begin{equation}\label{eq:mre-hjb}
    \dot{P}=PGR^{-1}G^\dag P - PF - F^\dag P -Q\,.
\end{equation}
This is a matrix Riccati equation. Using its solution $P(t)$, we can get the solution of the HJB equation by setting
\begin{equation}\label{eq:mjb-soln}
    \bar{u}(t)=-R^{-1}G^\dag Px(t)\,.
\end{equation}
\end{proof}
In applications to control problems such as linear quadratic control, one needs to set a boundary condition rather than an initial condition i.e., $P(t_f)=P_f$ and obtain a solution $P(t)$ for all time $t\leq t_f$.

\section{Computational preliminaries and problem statements}\label{sec:comp_prelims}
\subsection{Block encoding}\label{sec:block}

We first define a useful concept called block-encoding (see \cite{10.1145/3313276.3316366} for more details) and cite some results from the literature that we use later in the paper.
\begin{definition}\label{defn:block_encoding}
Suppose $A$ is an $a$-qubit operator, $\alpha,\epsilon\in \mathbb{R}_+$ and $b\in \mathbb{N}$, then an $a+b$ qubit unitary $U$ is said to be an $(\alpha, b, \epsilon)$ block encoding of $A$ if
\begin{equation}
    \|A-\alpha(\bra{0}^b\otimes I)U(\ket{0}^b\otimes I)\|\leq \epsilon\,.
\end{equation}
\end{definition}
\begin{definition}\label{defn:Hermitian_complement}
We denote by $\bar{A}$ the Hermitian complement of $A$ defined below 
\begin{equation}
    \bar{A}=\begin{pmatrix}0&A\\A^\dag &0
    \end{pmatrix}\,.
\end{equation}
\end{definition}
Next, we state the relevant results such as block-encoding a sparse matrix, adding, multiplying and inverting block-encoded matrices.
\begin{lemma}[\cite{10.1145/3313276.3316366}, Lemma 48]\label{lem:sparse_matrix}
Suppose $A$ is an $s_r$-row-sparse and $s_c$-column-sparse $a$-qubit matrix where each element of $A$ has absolute value at most 1 and where we have oracle access to the positions and values of the nonzero entries, then we have an implementation of $(\sqrt{s_rs_c},a+3,\epsilon)$ block encoding of $A$ with a single use of $O_r$ and $O_c$ and two uses of $O_A$ and using $O(a+\log^{2.5}(s_rs_c/\epsilon))$ elementary gates and $O(b+\log^{2.5}(s_rs_c/\epsilon))$ ancillas.
\end{lemma}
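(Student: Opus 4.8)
This is the textbook sparse-access block encoding, so the plan is to assemble the unitary $U$ explicitly out of the three oracles and then verify \defn{block_encoding} by a direct overlap computation. Write $O_r,O_c$ for the ``which-nonzero'' oracles (given an index and a label $\ell$, return the $\ell$-th nonzero column/row index in a given row/column) and $O_A$ for the value oracle (returning a finite-bit approximation of the entry $A_{jk}$). The target is a unitary of the form $U=U_L^{\dagger}\,\mathrm{SWAP}\,U_R$ on $b=a+3$ ancilla qubits together with the $a$-qubit system register, where $\mathrm{SWAP}$ exchanges the system register with an $a$-qubit ancilla register and $U_R,U_L$ are oracle-built isometries; the claim to check is that the $(k,j)$ entry of $(\bra{0}^{b}\otimes I)U(\ket{0}^{b}\otimes I)$ equals $A_{kj}/\sqrt{s_rs_c}$.

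First I would spell out $U_R$ as the composition: (i) prepare the uniform superposition $\tfrac{1}{\sqrt{s_c}}\sum_{\ell\in[s_c]}\ket{\ell}$ on a sparsity-label register with $O(\log s_c)$ gates; (ii) apply $O_c$ once to replace each label $\ell$ by the actual row index $k$ with $A_{kj}\neq0$; (iii) apply $O_A$ to write a $t$-bit approximation of $A_{kj}$ into a scratch register; (iv) imprint that value on a flag qubit by a controlled rotation (plus a controlled phase if $A_{kj}\notin\R$), mapping $\ket{0}\mapsto A_{kj}\ket{0}+\sqrt{1-|A_{kj}|^2}\ket{1}$; (v) uncompute the scratch register with a second call to $O_A$. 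This produces, up to index-dependent garbage, the state $\tfrac{1}{\sqrt{s_c}}\sum_{k:A_{kj}\neq0}\ket{k}\big(A_{kj}\ket{0}+\sqrt{1-|A_{kj}|^2}\ket{1}\big)$. The isometry $U_L$ is the mirror construction: one call to $O_r$ to range over the column indices $j$ with $A_{kj}\neq0$, but \emph{without} step (iv), producing only $\tfrac{1}{\sqrt{s_r}}\sum_{j:A_{kj}\neq0}\ket{j}\ket{0}$ times garbage — hence the total count of one $O_r$, one $O_c$, and two $O_A$. Then I would compute $\bra{0}^{b}\bra{k}U\ket{0}^{b}\ket{j}=\bra{\psi_k}\mathrm{SWAP}\ket{\phi_j}$ with $\ket{\phi_j}=U_R\ket{0}^{b}\ket{j}$ and $\ket{\psi_k}=U_L\ket{0}^{b}\ket{k}$: the $\mathrm{SWAP}$ aligns the row index produced inside $\ket{\phi_j}$ with the index register read off by $U_L^{\dagger}$, the two uniform superpositions contribute $1/\sqrt{s_rs_c}$, the flag-qubit overlap contributes exactly $A_{kj}$, and the index-dependent garbage registers are orthogonal unless all indices coincide, so the overlap is $A_{kj}/\sqrt{s_rs_c}$. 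Hermiticity of $A$ is nowhere used, so the same $U$ block-encodes the non-Hermitian matrices appearing later in the paper.

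The one step that costs effort — and the source of both the $\epsilon$ and the $\log^{2.5}(s_rs_c/\epsilon)$ in the gate bound — is (iv): the map $\ket{x}\ket{0}\mapsto\ket{x}(x\ket{0}+\sqrt{1-|x|^2}\ket{1})$ cannot be realized exactly, so one computes $\arcsin(x)$ to $t=O(\log(s_rs_c/\epsilon))$ bits by reversible classical arithmetic and synthesizes the rotation from $t$ controlled single-qubit rotations; the standard circuit bound for evaluating such elementary functions to $t$ bits is $O(t^{2.5})=O(\log^{2.5}(s_rs_c/\epsilon))$ gates with a comparable number of scratch ancillas, which is where the non-integer exponent enters. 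Then I would propagate the error: every entry of the unnormalised subblock is computed with amplitude error $O(2^{-t})$, and since the perturbation matrix is again $s_r$-row- and $s_c$-column-sparse with max-entry $O(2^{-t})$, its operator norm is $O(\sqrt{s_rs_c}\,2^{-t})$, so choosing $2^{-t}=O(\epsilon/\sqrt{s_rs_c})$ gives $\|A-\sqrt{s_rs_c}(\bra{0}^{b}\otimes I)U(\ket{0}^{b}\otimes I)\|\le\epsilon$. Finally the accounting: the state-preparation layers and the index-oracle calls touch $O(a)$ qubits, contributing $O(a)$ gates and one use each of $O_r,O_c$ and two of $O_A$; step (iv) contributes $O(\log^{2.5}(s_rs_c/\epsilon))$ gates and ancillas; and the ``block'' ancilla register is an $a$-qubit copy of the row index (needed across $\mathrm{SWAP}$), the flag qubit, and $O(1)$ further bookkeeping qubits, which packs into $a+3$ qubits once $s_r,s_c\le 2^a$, with the arithmetic scratch counted separately in the ancilla bound. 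The routine parts are the oracle bookkeeping and the overlap calculation; the parts needing care are the precision analysis in (iv) and confirming that the uncomputation in (v) returns the scratch register cleanly, so that only the intended index-dependent garbage survives to be removed by orthogonality.
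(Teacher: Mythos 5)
The paper does not actually prove this lemma --- it is quoted directly from \cite{10.1145/3313276.3316366} (Lemma 48) as an external result --- so there is no internal proof to compare against; judged on its own, your reconstruction is correct and is essentially the standard construction from that reference: the $U_L^{\dagger}\,\mathrm{SWAP}\,U_R$ structure with uniform superpositions over the sparsity labels, a single use each of $O_r$ and $O_c$, two uses of $O_A$ (compute and uncompute around the controlled rotation, with the extra phase for complex entries), the entrywise-to-operator-norm error bound $O(\sqrt{s_r s_c}\,2^{-t})$ giving $t=O(\log(s_r s_c/\epsilon))$, and the $O(t^{2.5})$ reversible-arithmetic cost that produces the $\log^{2.5}$ factor; your overlap computation indeed yields $A_{kj}/\sqrt{s_r s_c}$ and the resource accounting (including the padding convention for rows/columns with fewer than $s$ entries, absorbed into the $a+3$ ancillas) matches the statement. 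One incidental note: the ``$b$'' in the stated ancilla count is a typo inherited from the citation and should read $a$.
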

The next result is from \cite{chakraborty_et_al:LIPIcs:2019:10609} and gives the complexity of inverting a matrix.
\begin{lemma}[\cite{chakraborty_et_al:LIPIcs:2019:10609}, Lemma 9]\label{lem:matrix_inversion}
Let $A$ be a matrix with condition number $\kappa\geq 2$. Let $H$ be the Hermitian complement of $A$ and suppose that $I/\kappa\leq H\leq I$. Let
\begin{equation}
    \delta = o(\epsilon/\kappa^2\log^3(\kappa^2/\epsilon))\,.
\end{equation}
If $U$ is an $(\alpha, a, \delta)$ block encoding of $H$ that has gate complexity $T_U$, then we can implement a 
\begin{equation}
    (2\kappa,a+O(\log(\kappa^2\log(1/\epsilon))),\epsilon)
\end{equation}
block encoding of $H^{-1}$ with gate complexity
\begin{equation}
    O(\alpha\kappa(a+T_U)\log^2(\kappa^2/\epsilon))\,.
\end{equation}
\end{lemma}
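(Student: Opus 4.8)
The plan is to realise $H^{-1}$, up to a known normalisation, as a polynomial transformation of $H$ carried out through the quantum singular value transformation (QSVT) of \cite{10.1145/3313276.3316366}. Since $I/\kappa\le H\le I$ (or, when $H$ is a Hermitian complement, its nonzero spectrum lies in $[-1,-1/\kappa]\cup[1/\kappa,1]$, where an odd polynomial handles both signs), inverting $H$ amounts to applying the odd function $x\mapsto 1/x$ on that spectral window. The first ingredient is an approximating polynomial. Because the given unitary $U$ block-encodes the sub-normalised matrix $H/\alpha$, whose eigenvalues lie in $\pm[\tfrac{1}{\alpha\kappa},\tfrac{1}{\alpha}]$, I want an odd real polynomial $p$ with $\|p\|_{[-1,1]}\le 1$, degree $d=O\big(\alpha\kappa\log(\alpha\kappa/\epsilon)\big)$, such that $p(H/\alpha)$ is internally $\ll\epsilon$-close to $\tfrac{1}{2\kappa}H^{-1}$; such a $p$ is standard, obtained by smoothly truncating the Chebyshev expansion of $1/x$~\cite{CKS15}. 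The effective condition number of this approximation problem is $\alpha\kappa$ rather than $\kappa$ precisely because $U$ encodes $H/\alpha$, and this is what injects the factor $\alpha$ into the final query count; the factor $\tfrac{1}{2\kappa}$, forced by $\|p\|\le 1$, is exactly the normalisation $2\kappa$ appearing in the output block encoding.

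The second ingredient is the QSVT circuit itself. Feeding $U$ into the alternating-phase-modulation circuit for $p$ yields an $\big(2\kappa,\,a+O(\log(\kappa^2\log(1/\epsilon))),\,\epsilon\big)$ block encoding of $H^{-1}$: the circuit makes $O(d)$ uses of $U$ and $U^\dagger$ interleaved with $O(d)$ single-qubit rotations, whose phase angles are the QSVT phase factors and are classically precomputable to the required accuracy, plus $O(a)$ gates per layer to reflect about $\ket{0}^a$; the extra $O(\log(\kappa^2\log(1/\epsilon)))$ ancillas carry the auxiliary qubit(s) and the small registers needed to realise the transform at the stated precision. Hence the gate complexity is $O\big(d\,(a+T_U)\big)=O\big(\alpha\kappa(a+T_U)\log(\alpha\kappa/\epsilon)\big)$, and pushing the internal precision low enough for the overall error to be at most $\epsilon$ supplies the second logarithmic factor, giving $O\big(\alpha\kappa(a+T_U)\log^2(\kappa^2/\epsilon)\big)$.

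The delicate step — and the origin of the hypothesis $\delta=o\big(\epsilon/\kappa^2\log^3(\kappa^2/\epsilon)\big)$ — is tracking how the block-encoding error propagates. Two contributions must be controlled: the approximation error $\|p(H/\alpha)-\tfrac{1}{2\kappa}H^{-1}\|$, which the choice of $d$ drives well below $\epsilon$; and the fact that $U$ actually encodes some $\tilde H$ with $\|\tilde H-H\|\le\delta$ instead of $H$. For the latter I would combine the robustness of the degree-$d$ polynomial eigenvalue transform under perturbations of the block-encoded operator with the intrinsic sensitivity of inversion, $\|\tilde H^{-1}-H^{-1}\|\le\|H^{-1}\|\,\|\tilde H^{-1}\|\,\|\tilde H-H\|$, which contributes a factor $\sim\kappa^2$; together with the final rescaling by $\tfrac{1}{2\kappa}$ this makes the scaled-inverse error $\tilde O(\kappa^2\delta)$ up to logarithms, so that $\delta=\tilde O(\epsilon/\kappa^2)$ with the stated logarithmic corrections is enough. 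I expect the main obstacle to be exactly this simultaneous bookkeeping — choosing the truncation degree as a function of $\alpha\kappa$ and $\epsilon$, passing the $O(\delta)$ perturbation through a degree-$d$ polynomial, and absorbing the $\kappa^2$ conditioning of inversion — after which the three claimed resource bounds (normalisation $2\kappa$, ancilla count, and gate count) drop out by assembling standard QSVT components.
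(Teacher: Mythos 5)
The paper gives no proof of this statement---it is imported verbatim as Lemma~9 of \cite{chakraborty_et_al:LIPIcs:2019:10609}---and your sketch follows essentially the same route as that cited source: an odd polynomial approximation of $1/x$ of degree $O(\alpha\kappa\log(\alpha\kappa/\epsilon))$ applied to the block-encoded $H/\alpha$ via QSVT, with the subnormalisation $2\kappa$ forced by $\|p\|_{[-1,1]}\le 1$ and the admissible input error $\delta=o\big(\epsilon/\kappa^2\log^3(\kappa^2/\epsilon)\big)$ dictated by the $\kappa^2$ sensitivity of inversion, $\|\tilde H^{-1}-H^{-1}\|\le\|H^{-1}\|\|\tilde H^{-1}\|\|\tilde H-H\|$; this is correct in outline. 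The only loose ends are bookkeeping ones: the second logarithmic factor in the gate count and the $O(\log(\kappa^2\log(1/\epsilon)))$ ancilla overhead are features of the particular LCU/phase-factor implementation in \cite{chakraborty_et_al:LIPIcs:2019:10609} rather than consequences you derive, and the hypothesis $I/\kappa\le H\le I$ must be read (as you implicitly do) as a condition on $|\mathrm{spec}(H)|$, since a Hermitian complement has symmetric $\pm$ spectrum.
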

\begin{remark}
In the above lemma, when $H$ is a Hermitian complement of a matrix $A$, then after inversion, the resulting block encoded matrix satisfies
\begin{equation}
    \|A^{-1}-2\kappa(\bra{0}^b\otimes \bra{1}\otimes I)U(\ket{0}^b\otimes\ket{0}\otimes I)\|\leq \epsilon\,,
\end{equation}
where the middle qubit corresponds to the qubit needed for the Hermitian complement. By appending an $X$ gate to the block encoding for that qubit, this can be brought into the above form of block encoding of $A^{-1}$ (as in \defn{block_encoding}).
\end{remark}
The next result gives the complexity of performing matrix arithmetic on block-encoded matrices.
\begin{lemma}[\cite{chakraborty_et_al:LIPIcs:2019:10609}, Lemma 6, \cite{10.1145/3313276.3316366}, Lemmas 52, 54]\label{lem:matrix_arithmetics}
If $A$ has an $(\alpha,a,\epsilon)$ block encoding with gate complexity $T_A$ and $B$ has a $(\beta,b,\delta)$ block encoding with gate complexity $T_B$, then
\begin{enumerate}
    \item $\bar{A}$ has an $(\alpha,a+1,\epsilon)$ block encoding that can be implemented with gate complexity $O(T_A)$.
    \item $A+B$ has an $(\alpha+\beta,a+b,\beta\epsilon+\alpha\delta)$ block encoding that can be implemented with gate complexity $O(T_A+T_B)$.
    \item $AB$ has an $(\alpha\beta,a+b,\alpha\delta+\beta\epsilon)$ block encoding with gate complexity $O(T_A+T_B)$.
\end{enumerate}
\end{lemma}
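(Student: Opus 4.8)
The plan is to prove each of the three parts by the standard block-encoding-arithmetic recipe: exhibit an explicit circuit built from $O(1)$ invocations of $U_A$ and $U_B$ (possibly run in reverse) together with $O(1)$ elementary gates, check it is unitary, read off which sub-block it encodes, and bound the error using the triangle inequality and submultiplicativity of the operator norm. Throughout I would write $\tilde A = \alpha(\bra{0}^{a}\otimes I)U_A(\ket{0}^{a}\otimes I)$ and $\tilde B = \beta(\bra{0}^{b}\otimes I)U_B(\ket{0}^{b}\otimes I)$, so that $\|A-\tilde A\|\le\epsilon$ and $\|B-\tilde B\|\le\delta$ by hypothesis, and $\|\tilde A\|\le\alpha$, $\|\tilde B\|\le\beta$ since these are rescaled sub-blocks of unitaries. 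For part~1, I would introduce the single extra qubit $s$ on which $\bar A$ acts beyond $A$ and take $V = \ket{0}\!\bra{1}_s\otimes U_A + \ket{1}\!\bra{0}_s\otimes U_A^\dagger$; a one-line computation gives $V^\dagger V = \ket{1}\!\bra{1}_s\otimes U_A^\dagger U_A + \ket{0}\!\bra{0}_s\otimes U_A U_A^\dagger = I$, so $V$ is unitary, and it costs $O(T_A)$ gates given (controlled) access to $U_A$. Projecting onto the all-zero ancilla yields $\ket{0}\!\bra{1}_s\otimes(\tilde A/\alpha) + \ket{1}\!\bra{0}_s\otimes(\tilde A^\dagger/\alpha)$, whose difference from $\bar A/\alpha$ is the Hermitian complement of $(A-\tilde A)/\alpha$; since $\|\overline E\| = \|E\|$ for every $E$, the block-encoding error is exactly $\|A-\tilde A\|\le\epsilon$, and $V$ uses only the $a$ ancillas of $U_A$, which is a fortiori an $(\alpha, a+1, \epsilon)$ block encoding.

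For part~2, I would use a one-qubit linear-combination-of-unitaries construction: a single-qubit rotation $P$ with $P\ket{0} = \tfrac{1}{\sqrt{\alpha+\beta}}(\sqrt{\alpha}\ket{0}+\sqrt{\beta}\ket{1})$, a selector-controlled operation $W = \ket{0}\!\bra{0}\otimes U_A + \ket{1}\!\bra{1}\otimes U_B$ in which $U_A$ and $U_B$ share a common $\max(a,b)$-qubit ancilla register (so that together with the selector the ancilla count is at most $a+b$, after padding), and $V = (P^\dagger\otimes I)\,W\,(P\otimes I)$. This is unitary, costs $O(T_A+T_B)$ gates, and its all-zero-ancilla block is $\tfrac{\alpha}{\alpha+\beta}\cdot\tfrac{\tilde A}{\alpha} + \tfrac{\beta}{\alpha+\beta}\cdot\tfrac{\tilde B}{\beta} = \tfrac{\tilde A+\tilde B}{\alpha+\beta}$, so with normalization $\alpha+\beta$ the error is $\|\tilde A+\tilde B-(A+B)\|\le\epsilon+\delta$, which is at most $\beta\epsilon+\alpha\delta$ whenever $\alpha,\beta\ge1$, giving the stated parameters. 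For part~3 I would simply compose the encodings on disjoint ancilla registers: $V = (U_A\otimes I_b)(I_a\otimes U_B)$, which is manifestly unitary, costs $O(T_A+T_B)$ gates, and whose $(a+b)$-qubit all-zero-ancilla block is $\tfrac{\tilde A}{\alpha}\cdot\tfrac{\tilde B}{\beta}$; normalizing by $\alpha\beta$, the block-encoded operator is $\tilde A\tilde B$ and the error is $\|\tilde A\tilde B - AB\|\le\|\tilde A\|\,\|\tilde B-B\| + \|\tilde A-A\|\,\|B\|\le\alpha\delta+\beta\epsilon$, using $\|\tilde A\|\le\alpha$ and $\|B\|\le\beta$ (the latter up to a negligible higher-order correction in $\epsilon\delta$).

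I do not expect a genuine obstacle here: the arguments are all short. The only real care needed is the qubit bookkeeping — precisely which ancilla and system registers each sub-unitary touches, and arranging parts~2 and~3 so the total ancilla count is $a+b$ rather than $a+b+1$ — together with, in part~1, verifying unitarity of the off-diagonal $V$ and noting that the Hermitian-complement structure does not inflate the error because $\|\overline E\|=\|E\|$. A minor point is that the clean linear-combination-of-unitaries circuit in part~2 actually achieves the tighter error $\epsilon+\delta$; the weaker $\beta\epsilon+\alpha\delta$ stated in the lemma is harmless and is what the cited references record.
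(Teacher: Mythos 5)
This lemma is imported by the paper without proof (it simply cites Lemma 6 of \cite{chakraborty_et_al:LIPIcs:2019:10609} and Lemmas 52--54 of \cite{10.1145/3313276.3316366}), and your constructions — the off-diagonal unitary $\ket{0}\!\bra{1}\otimes U_A+\ket{1}\!\bra{0}\otimes U_A^\dagger$ for the Hermitian complement, the one-qubit LCU for the sum, and the composition $(U_A\otimes I)(I\otimes U_B)$ for the product — are exactly the standard arguments used in those references, carried out correctly. The small slack you flag (needing $\alpha,\beta\geq 1$ to pass from $\epsilon+\delta$ to $\beta\epsilon+\alpha\delta$, and the $\epsilon\delta$ correction from $\|B\|\leq\beta+\delta$ in the product bound) is also present in the cited statements and is harmless for how the lemma is used in this paper.
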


Next, we need the following theorem to implement the QLSA.
\begin{theorem}[\cite{QLSA_linear_kappa}, Theorem 19]\label{thm:block_QLSA}
Let $A$ be such that $\|A\|=1$ and $\|A^{-1}\|=\kappa$. Given a oracle block encoding of $A$ and an oracle for implementing $\ket{b}$, there exists a quantum algorithm which produces the normalized state $A^{-1}\ket{b}$ to within an error $\epsilon$ using 
\begin{equation}
    O(\kappa\log 1/\epsilon)
\end{equation}
calls to the oracles.
\end{theorem}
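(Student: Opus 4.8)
The plan is to assemble the solver from four standard pieces: Hermitian dilation; an adiabatic ``randomization method'' path whose zero-eigenspace interpolates between $\ket b$ and a flagged copy of $A^{-1}\ket b$; a qubitized discretization of that path analyzed by a discrete adiabatic theorem; and a final filtering/amplification step to reach additive error $\epsilon$. (A purely QSVT route akin to \lem{matrix_inversion} would produce a block encoding of $A^{-1}$ using $O(\kappa\,\polylog(\kappa/\epsilon))$ oracle calls, but applying it to $\ket b$ and post-selecting then costs a further $O(\kappa)$ amplitude-amplification rounds, i.e.\ $O(\kappa^2\,\polylog)$ overall; to reach the optimal linear-in-$\kappa$ count one needs the adiabatic approach.) For the reduction, combine \defn{Hermitian_complement} with item~1 of \lem{matrix_arithmetics} to turn the given block encoding of $A$ into one of $\bar A$, a Hermitian operator of norm $1$ with spectrum in $[-1,-1/\kappa]\cup[1/\kappa,1]$; producing $A^{-1}\ket b$ is then equivalent to producing $\bar A^{-1}\ket 1\ket b$ and discarding the flag qubit, so we may assume $A$ is Hermitian with $\|A\|=1$, $\|A^{-1}\|=\kappa$.

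\emph{Adiabatic path and discretization.} Adjoin an auxiliary qubit and, for $s\in[0,1]$, set $A(s)=(1-s)\,\sigma_z\otimes I+s\,\sigma_x\otimes A$ and let $H(s)=\bigl(\ket 0\!\bra 1\bigr)\otimes\bigl(A(s)(I-\ket b\!\bra b)\bigr)+\mathrm{h.c.}$ One checks that $H(s)$ has a two-dimensional null space containing a state $\ket{\psi(s)}$ with $\ket{\psi(0)}\propto\ket{\bar 0}\ket b$ and $\ket{\psi(1)}\propto\ket{\bar 1}A^{-1}\ket b$, and that its gap above $0$ obeys $\Delta(s)\gtrsim\sqrt{(1-s)^2+s^2/\kappa^2}$, so that after reparametrizing by a gap-adapted schedule the ``adiabatic length'' $\int_0^1\|H'\|\,\Delta^{-2}$ of the path is $O(\kappa)$. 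On a grid $s_j=j/T$, build a block encoding of $H(s_j)$ by composing the block encoding of $A$ with the arithmetic of \lem{matrix_arithmetics} (sum, product, Hermitian complement; the projector $I-\ket b\!\bra b$ comes from the $\ket b$-oracle), qubitize it to a walk operator $W(s_j)$ using $O(1)$ oracle calls, and apply $\prod_{j=1}^{T}W(s_j)$ to $\ket{\psi(0)}$. The discrete adiabatic theorem of \cite{QLSA_linear_kappa} bounds the deviation of the output from $\ket{\psi(1)}$ by $O(\text{adiabatic length}/T)$ plus terms that, for a schedule whose derivatives vanish to high order at $s=0,1$, decay superpolynomially in $1/T$; hence $T=O(\kappa\log(1/\epsilon))$ suffices for fidelity $1-\epsilon$. (Equivalently, run $T=O(\kappa)$ steps for constant fidelity and then apply QSVT eigenstate filtering on $H(1)$ --- a degree-$O(\kappa\log(1/\epsilon))$ polynomial close to $1$ at $0$ and tiny on $[\Delta(1),1]$ --- to boost the fidelity to $1-\epsilon$.)

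\emph{Extraction and accounting.} Measuring the dilation and flag ancillas projects onto the accepted branch, on which the construction retains $\Omega(1)$ amplitude --- this is precisely why one conjugates by $I-\ket b\!\bra b$, which keeps $\|A^{-1}\ket b\|\ge 1/\|A\|=1$ from being diluted below an inverse constant --- so $O(1)$ rounds of amplitude amplification output the normalized state $A^{-1}\ket b$ to error $\epsilon$. Each $W(s_j)$ uses $O(1)$ calls to the oracles for $A$ and $\ket b$, there are $O(\kappa\log(1/\epsilon))$ of them, and amplification multiplies by $O(1)$, for $O(\kappa\log(1/\epsilon))$ oracle calls in total.

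\emph{Main obstacle.} The crux is the discrete adiabatic estimate: one needs a tracking bound sharp enough to pay only the gap integral $O(\kappa)$ rather than the naive $O(\kappa^2)$, and to turn precision $\epsilon$ into a mere $\log(1/\epsilon)$ overhead --- which calls either for a boundary-cancellation argument for the schedule or for the auxiliary eigenstate-filtering step --- all while controlling how block-encoding and qubitization errors accumulate across the $T$ steps. Everything else is bookkeeping on top of \lem{sparse_matrix} and \lem{matrix_arithmetics}.
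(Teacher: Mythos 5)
The paper offers no proof of \thm{block_QLSA}: it is quoted verbatim from \cite{QLSA_linear_kappa} (Theorem~19 there) and used as an off-the-shelf primitive, so there is no in-paper argument to compare against. To the level of detail given, your sketch is a faithful reconstruction of the proof in that reference --- Hermitian dilation via $\bar A$, the interpolating operator built from $(1-s)\,\sigma_z\otimes I+s\,\sigma_x\otimes A$ and conjugation by $I-\ket b\!\bra b$, a qubitized walk discretization analyzed by a discrete adiabatic theorem so that one pays only the gap integral $O(\kappa)$ rather than $O(\kappa^2)$, and either a boundary-cancellation schedule or a final eigenstate-filtering step to turn constant fidelity into error $\epsilon$ at only $\log(1/\epsilon)$ extra cost --- and you correctly identify the sharp discrete adiabatic estimate as the crux. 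The one place that deserves a sentence of care is the success-probability accounting: the $\Omega(1)$ post-selection amplitude is governed by the overlap of the evolved state with the relevant null vector $\ket{\bar 1}\otimes A^{-1}\ket b/\|A^{-1}\ket b\|$ and by how that vector is normalized inside the two-dimensional kernel, and the bound $\|A^{-1}\ket b\|\ge 1/\|A\|=1$ (which follows from $\|\ket b\|\le\|A\|\,\|A^{-1}\ket b\|$) enters at that point rather than quite as you phrase it; this is a one-line fix, not a gap, and does not change the $O(\kappa\log(1/\epsilon))$ conclusion.
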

\begin{remark}
In using \lem{matrix_inversion} and \thm{block_QLSA}, we will apply them to matrices with norms that are $O(1)$ rather than $1$. This does not change the asymptotic scaling of the gate complexities.
\end{remark}
The next theorem from \cite{chakraborty_et_al:LIPIcs:2019:10609} gives a quantum algorithm to estimate the norm of the solution state.
\begin{theorem}\label{thm:norm_estimate}
Let $Ax=b$ be an $N\times N$ linear system with an invertible matrix $A$ with sparsity $s$ and condition number $\kappa$. Given an oracle that computes s normalized version of $b$, there is a quantum algorithm that outputs a quantity $\tilde{x}$ such that
\begin{equation}
    \Big|\tilde{x} - \|A^{-1}b\|\Big|\leq \gamma \|A^{-1}b\|\,,
\end{equation}
with constant probability and with gate complexity 
\begin{equation}
    O(\frac{\kappa}{\gamma}(sT_U\log^2\frac{\kappa}{\gamma} + T_b)\log^3\kappa\log\log\kappa)\,,
\end{equation}
where 
\begin{equation}
    T_U = O(\log N + \log^{2.5}(\frac{s\kappa}{\gamma}\log \frac{\kappa}{\gamma}))\,.
\end{equation}
\end{theorem}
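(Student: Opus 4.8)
The plan is to reduce the estimation of $\|A^{-1}b\|$ to amplitude estimation applied to a quantum linear-systems circuit whose ``success amplitude'' is a \emph{known} multiple of $\|A^{-1}b\|$, and then to control the cost so that only $\tilde O(\kappa/\gamma)$ queries are needed. First I would block-encode $A$: since (after the rescaling that makes $\|A\|=1$) the entries of the $s$-sparse matrix $A$ have magnitude at most $1$ and we have the sparse-access oracles, \lem{sparse_matrix} produces an $(s, O(\log N), \delta)$ block encoding $U_A$ with gate complexity $T_U = O(\log N + \log^{2.5}(s/\delta))$, where $\delta$ will be chosen polynomially small in $\gamma/\kappa$ so that the hypotheses of \lem{matrix_inversion} are met. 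Feeding $U_A$ (through its Hermitian complement, as in the remark after \lem{matrix_inversion}) into \lem{matrix_inversion} yields a $(2\kappa, m, \epsilon)$ block encoding $U_{A^{-1}}$ of $A^{-1}$ with $m = O(\log N + \log(\kappa^2\log(1/\epsilon)))$; preparing $\ket{b}$ costs $T_b$.

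Next I would exhibit the amplitude that carries the norm. By \defn{block_encoding}, $(\bra{0}^m\otimes I)\,U_{A^{-1}}\,(\ket{0}^m\otimes\ket{b}) = \tfrac{1}{2\kappa}A^{-1}\ket{b} + O(\epsilon)$, so $U_{A^{-1}}(\ket{0}^m\ket{b}) = \tfrac{1}{2\kappa}\ket{0}^m\!\left(A^{-1}\ket{b}\right) + \ket{\Psi^\perp}$ with $\ket{\Psi^\perp}$ flagged by ancilla $\neq \ket{0}^m$, and the probability of observing the flag is $p = \|A^{-1}b\|^2/(4\kappa^2) + O(\epsilon/\kappa)$, i.e. the success amplitude is $a := \|A^{-1}b\|/(2\kappa)$ up to a negligible additive error. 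Crucially, $\|A\|=1$ and $\|b\|=1$ give $1 = \|b\|/\|A\| \le \|A^{-1}b\| \le \|A^{-1}\| = \kappa$, so $a \in [1/(2\kappa), 1/2]$ is a known a priori window. Running amplitude estimation on $a$ with $M$ iterations (each a constant number of calls to $U_{A^{-1}}$, $U_{A^{-1}}^\dagger$ and the $\ket{b}$-preparation) returns $\tilde a$ with $|\tilde a - a| = O(1/M)$ and constant success probability; taking $M = \Theta(\kappa/\gamma)$ gives additive error $O(\gamma/\kappa) \le O(\gamma)\cdot a$, hence relative error $O(\gamma)$, and $\tilde x := 2\kappa\tilde a$ then satisfies $|\tilde x - \|A^{-1}b\|| \le O(\gamma)\|A^{-1}b\|$ after rescaling $\gamma$ by the implied constant. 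Choosing $\epsilon,\delta = \poly(\gamma/\kappa)$-small keeps all block-encoding errors well below the $O(\gamma/\kappa)$ budget.

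The subtlety — and the reason the target bound is $\tilde O(\kappa/\gamma)$ rather than the $\tilde O(\kappa^2/\gamma)$ that the composition above naively yields, since \lem{matrix_inversion} implements $U_{A^{-1}}$ at cost $\tilde O(s\kappa\, T_U)$ — is that one must not amplify $A^{-1}$ to a monolithic norm-$\Theta(1)$ block encoding before estimating. The hard part will be to interleave amplitude estimation with \emph{variable-time amplitude amplification} (Ambainis; Childs--Kothari--Somma, the machinery already underlying \thm{block_QLSA}): decompose the action of $A^{-1}$ into $O(\log\kappa)$ eigenvalue-magnitude branches, amplify each adaptively at per-branch cost $\tilde O(s\,T_U)$, and replace the outermost amplification step by amplitude estimation, so that the norm-estimation overhead multiplies by only $1/\gamma$ instead of $\kappa/\gamma$; the branch structure together with the median-boosting of amplitude estimation contributes the $\log^3\kappa\log\log\kappa$ and $\log^2(\kappa/\gamma)$ polylog factors, giving the stated complexity $O\!\big(\tfrac{\kappa}{\gamma}(sT_U\log^2\tfrac{\kappa}{\gamma} + T_b)\log^3\kappa\log\log\kappa\big)$. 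I expect the bulk of the work to be precisely this bookkeeping inside the variable-time framework: tracking how $\delta$ from \lem{sparse_matrix} propagates through \lem{matrix_inversion} and then through amplitude estimation, and verifying that it is exactly the a priori lower bound $\|A^{-1}b\|\ge\|b\|/\|A\|$ that converts an additive amplitude-estimation guarantee into the desired multiplicative one. Everything else is assembling \lem{sparse_matrix}, \lem{matrix_inversion}, \lem{matrix_arithmetics} and \thm{block_QLSA} with standard amplitude estimation.
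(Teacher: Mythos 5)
There is nothing in the paper to compare against: \thm{norm_estimate} is imported verbatim from \cite{chakraborty_et_al:LIPIcs:2019:10609} (it is introduced with ``the next theorem from [CGJ] gives\dots'') and the paper supplies no proof of it. Your sketch reconstructs essentially the proof route taken in that reference: block-encode the sparse matrix via \lem{sparse_matrix}, invert it via \lem{matrix_inversion}, note that applying the resulting $(2\kappa,\cdot,\cdot)$ block encoding of $A^{-1}$ to $\ket{b}$ puts $\|A^{-1}b\|/(2\kappa)$ into a flagged amplitude, use $\|A^{-1}b\|\ge\|b\|/\|A\|$ to turn an additive amplitude-estimation guarantee into the multiplicative one, and then invoke variable-time amplitude amplification/estimation (Ambainis, Childs--Kothari--Somma, as packaged in CGJ) to avoid paying the $\widetilde{O}(s\kappa T_U)$ cost of the inverter inside every Grover iterate, which is exactly what brings the count down from $\widetilde{O}(\kappa^{2}/\gamma)$ (or $\widetilde{O}(\kappa^{3/2}/\gamma)$ if one uses the sharper $\sqrt{a}/M$ error bound of amplitude estimation rather than your cruder $1/M$) to the stated $\widetilde{O}(\kappa/\gamma)$. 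You correctly identify this as the crux, but you only gesture at the variable-time bookkeeping, which is where all of the technical content of the cited result lives; that is acceptable here precisely because the paper under review also treats the theorem as a black box, but a self-contained proof would have to carry out the branch-by-branch error and stopping-time analysis you defer.
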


Next we define the oracle access framework that is commonly assumed in quantum algorithms in the literature. Suppose the matrix $A$ in linear ODEs such as \eq{Ham_eq_diss} is a block $2\times 2$ matrix composed of blocks $F_0$ through $F_3$, we assume that the entries of these matrices are accessed through the following oracles.
\begin{definition}\label{defn:oracles}
We assume that each $F_m$ can be accessed through the following oracles.
\begin{align}
    &O_r\ket{i,k}=\ket{i,r_{ik}}\,,\\
    &O_c\ket{i,k}=\ket{i,c_{ik}}\,,\\
    &O_a\ket{i,j}\ket{0}^{\otimes b}=\ket{i,j,F_m(i,j)}\,,
\end{align}
where $r_{ij}$ (respectively $c_{ji}$) is the $j^{th}$ nonzero entry in the $i^{th}$ row (resp. column). If there are fewer than $j$ entries, then $r_{ij}$ (resp. $c_{ji}$) is set to $j+2^n$. Here $F_m(i,j)$ is a binary representation of the $(i,j)$ matrix entry of $F_m$. Here $F_m$ is one of $F_0$, $F_1$, $F_2$ or $F_3$.

Finally, we define an oracle $O_{init}$ to prepare a normalized version of the initial state $z_0$. We assume that the norm of $z_0$ is known. Specifically, let $O_{init}$ be any unitary that maps $\ket{1}\ket{\phi}$ to $\ket{1}\ket{\phi}$ for any state $\ket{\phi}$ and $\ket{0}\ket{0}$ to $\ket{0}\ket{\bar{z}_{0}}$, where $\bar{z}_{0}=z_{0}/\|z_{0}\|$.
\end{definition}

\begin{remark}
One interesting and useful aspect of linearization of the Riccati equation using a M\"{o}bius transformation is that we do not need to be able to prepare a quantum state proportional to $F_0$ as was the case with Carleman linearization \cite{Liue2026805118}. We only need to access its entries via an oracle, which is easier.
\end{remark}

\subsection{Problem statements}\label{sec:probs}
In this subsection, we define precisely the problems for which we have efficient algorithms. The first problem is to simulate the evolution of a classical physical system with a quadratic Hamiltonian with dissipation and forcing terms.
\begin{problem}\label{prb:class_mech}
Given oracle access to matrices in the following equation of motion
\begin{equation}\label{eq:M_R_V_b_eq}
    M\ddot{q} + R\dot{q} + Vq +f = 0\,,
\end{equation}
and an oracle to prepare the normalized versions of the initial state $q(0)$ and $f$. We are also given estimates of their norms and promised that the damping matrix $R$ is positive. The problem is to 
\begin{enumerate}
    \item produce the normalized quantum state proportional to $(q(T),p(T))^T$ at some time $T$, and
    \item output an estimate of the kinetic energy at time $T$ to within an additive error $\epsilon$.
\end{enumerate}
\end{problem}
In optimal control, we are first interested in the following initial value problem of the Riccati equation.
\begin{problem}\label{prb:Riccati_vector}
Consider the following initial value problem for the vector Riccati equation
\begin{equation}\label{eq:Riccati_ivp}
    \dot{y}=F_0+F_1y+yF_2y+yF_3\,,\hspace{1cm} y(0)=y_0\,,
\end{equation}
where $y$ is an $N$ dimensional vector with initial value $y_0$ and the coefficients $F_0$, $F_1$, $F_2$ and $F_3$ are time-independent. Given oracle access to the entries of $F_i$, the problem is to output a quantum state proportional to $y(T)$ for some $T$ to within a given $\ell_2$-norm error $\epsilon$.
\end{problem}
The next problem we consider is to find the solution of the matrix Riccati equation as a block-encoded unitary.
\begin{problem}\label{prb:Riccati_matrix}
Given the matrix Riccati equation 
\begin{equation}
    \dot{y}=F_0+F_1y+yF_2y+yF_3\,,\hspace{1cm} y(0)=y_0\,,
\end{equation}
where $y$ is an $N\times M$ dimensional matrix with initial value matrix $y_0$ and the coefficients $F_0$, $F_1$, $F_2$ and $F_3$ are time-independent. Given oracle access to the entries of $F_i$ and $y_0$, the problem is to output a block encoded unitary proportional to $y(T)$ for a given $T$ to within a given $\ell_2$-norm error $\epsilon$.
\end{problem}

\subsection{Norm of the exponential}\label{sec:C(A)}
When implementing quantum algorithms to solve linear ODEs of the form
\begin{equation}
    \dot{z} = Az + b\,,
\end{equation} 
It was shown in \cite{krovi2022improved} that the gate complexity depends on the norm of the exponential of the matrix $A$ i.e., it depends on
\begin{equation}
    C(A) = \sup_{t\in [0,T]}\|\exp(At)\|\,,
\end{equation}
where $T$ is the time to which we need to simulate the dynamics. Here, we prove some bounds on this norm when $A$ is a block $2\times 2$ matrix as is the case in all our applications here.

We first consider the case when $F_0=0$.
\begin{lemma}\label{lem:C(A)_F_0}
Suppose 
\begin{equation}
    A = \begin{pmatrix}
        F_1 & 0\\F_2 & F_3
    \end{pmatrix}\,.
\end{equation}
Then we have
\begin{equation}
    C(A)\leq C_d(1+C_d\|F_2\|T)\,,
\end{equation}
where 
\begin{equation}\label{eq:C_d}
    C_d=\max\{\sup_{t\in [0,T]}\|\exp(F_1t)\|,\sup_{t\in [0,T]}\|\exp(F_3t)\|\}\,.
\end{equation}
\end{lemma}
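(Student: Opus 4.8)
The plan is to compute $\exp(At)$ explicitly and exploit the block lower-triangular structure of $A$. Writing $Y(t)=\exp(At)$ in block form and using $\dot Y = AY$ with $Y(0)=I$, the equations for the blocks decouple: the first block column $(Y_{11},Y_{21})$ and the second $(Y_{12},Y_{22})$ evolve independently, and since $Y_{12}$ satisfies $\dot Y_{12}=F_1 Y_{12}$ with $Y_{12}(0)=0$ we get $Y_{12}\equiv 0$, while $Y_{11}(t)=\exp(F_1 t)$ and $Y_{22}(t)=\exp(F_3 t)$. Hence $\exp(At)$ is again block lower triangular with these diagonal blocks.

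For the off-diagonal block, $Y_{21}$ solves the inhomogeneous linear ODE $\dot Y_{21}=F_3 Y_{21}+F_2\exp(F_1 t)$ with $Y_{21}(0)=0$, so by the variation-of-parameters (Duhamel) formula
\[
  Y_{21}(t)=\int_0^t \exp\!\big(F_3(t-s)\big)\,F_2\,\exp(F_1 s)\,\d s\,.
\]
For every $s\in[0,t]\subseteq[0,T]$ both $t-s$ and $s$ lie in $[0,T]$, so the definition \eq{C_d} of $C_d$ gives $\|\exp(F_3(t-s))\|\le C_d$ and $\|\exp(F_1 s)\|\le C_d$; therefore $\|Y_{21}(t)\|\le C_d^2\|F_2\|\,t\le C_d^2\|F_2\|\,T$ for all $t\in[0,T]$.

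Finally, to combine the blocks I would split
\[
  \exp(At)=\begin{pmatrix}\exp(F_1 t)&0\\0&\exp(F_3 t)\end{pmatrix}+\begin{pmatrix}0&0\\ Y_{21}(t)&0\end{pmatrix}\,.
\]
The first summand is block diagonal, so its operator norm equals $\max\{\|\exp(F_1 t)\|,\|\exp(F_3 t)\|\}\le C_d$; the second summand has norm $\|Y_{21}(t)\|\le C_d^2\|F_2\|T$. The triangle inequality then gives $\|\exp(At)\|\le C_d+C_d^2\|F_2\|T=C_d(1+C_d\|F_2\|T)$ for all $t\in[0,T]$, and taking the supremum over $t$ yields the claim.

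The argument is essentially routine; the only point needing a little care is this last step. Bounding the operator norm of the $2\times 2$ block matrix crudely by the sum of the norms of its three blocks would produce the weaker constant $2C_d+C_d^2\|F_2\|T$, and it is the decomposition into a block-diagonal part (whose norm is the maximum, not the sum, of the diagonal block norms) plus a strictly block-triangular part that recovers the sharper leading constant $C_d$.
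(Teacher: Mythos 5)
Your proposal is correct and follows essentially the same route as the paper: both arrive at the identity $\exp(At)=\exp(A_1t)+\int_0^t\exp(A_1(t-s))A_2\exp(A_1s)\,\d s$ with $A_1=\diag(F_1,F_3)$ and $A_2$ the strictly lower-triangular block, and then bound the two terms by $C_d$ and $C_d^2\|F_2\|T$ respectively. The only difference is cosmetic---you obtain the identity by solving the block ODE with Duhamel's formula, while the paper invokes Dyson's series and notes that the higher-order terms vanish by nilpotency---so the resulting bound is identical.
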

\begin{proof}
Split the matrix $A$ as follows
\begin{equation}\label{eq:A_split}
    A=\begin{pmatrix} F_1 & 0\\0 & F_3\end{pmatrix} + \begin{pmatrix} 0 & 0\\F_2 & 0\end{pmatrix}=A_1+A_2\,,
\end{equation}
where $A_1$ contains the dissipative part of the evolution and $A_2$ contains the nonlinear term. From the definition in \eq{C_d} above, we see that $C(A_1)= C_d$. Now using Dyson's formula (\cite{bhatia1996matrix} or \cite{kato2013perturbation}), we have
\begin{equation}
    \exp(At) = \exp(A_1t) + \int_0^t \exp(A_1(t-s))A_2\exp(As) ds\,.
\end{equation}
We can define a recursive version of this by using the formula to define $\exp(As)$. We get (see \cite{kato2013perturbation})
\begin{equation}
    \exp(At) = \sum_{n=0}^\infty U_n(t)\,,
\end{equation}
where
\begin{equation}
    U_{n+1}(t) = \int_0^t \exp(A_1(t-s))A_2 U_n(s) ds\,,
\end{equation}
and $U_0(t) = \exp(A_1t)$. It turns out that in our case, $U_n=0$ for $n\geq 2$. To see this, note that
\begin{equation}
    \exp(A_1t)A_2\exp(A_1t')A_2 = 0\,,
\end{equation}
since $A_1$ is a block diagonal matrix and $A_2$ is a strictly lower triangular block matrix. Therefore, we have
\begin{equation}\label{eq:exp(At)}
    \exp(At) = \exp(A_1t) + \int_0^t \exp(A_1(t-s))A_2\exp(A_1s) ds\,.
\end{equation}
Taking norms, we get 
\begin{equation}\label{eq:C(A)_F_0_eq}
    C(A) \leq C_d(1+C_d T\|F_2\|)\,,
\end{equation}
where we have used \eq{C_d} and the fact that $\|A_2\|\leq \|F_2\|$.
\end{proof}
Note that we get a similar result when $F_2=0$ instead of $F_0$. We now show a weaker result when both $F_2$ and $F_0$ are non-zero.
\begin{lemma}\label{lem:C(A)_F_neq_0}
Assume that $F_0\neq 0$ and $F_2\neq 0$. Also assume that
\begin{equation}
    \|\exp(At)\|\leq \exp(\mu t)\,.
\end{equation}
Then we have 
\begin{equation}
    \|\exp(At)\|\leq \exp((\mu + \|F_2\| + \|F_0\|)t)\,.
\end{equation}
\end{lemma}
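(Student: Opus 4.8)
The plan is to notice that the stated conclusion is in fact \emph{weaker} than the stated hypothesis, so that the proof reduces to a one-line monotonicity argument and the block structure of $A$ plays no role. First I would record the two elementary facts: the operator norms $\|F_0\|$ and $\|F_2\|$ are nonnegative real numbers, and the time variable runs over $t\in[0,T]$, so $t\ge 0$ throughout. Consequently $\mu t\le(\mu+\|F_2\|+\|F_0\|)t$ for every $t$ in the interval of interest.

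Second, I would apply monotonicity of the real exponential to this inequality to get $\exp(\mu t)\le\exp\bigl((\mu+\|F_2\|+\|F_0\|)t\bigr)$, and then chain it with the assumed bound $\|\exp(At)\|\le\exp(\mu t)$ to conclude $\|\exp(At)\|\le\exp\bigl((\mu+\|F_2\|+\|F_0\|)t\bigr)$. That is exactly the assertion, and no appeal to Dyson's formula, to Gr\"onwall's inequality, or to \lem{C(A)_F_0} is needed.

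Accordingly there is no real obstacle: the only things to check are the nonnegativity of the spectral norm and of $t$ on $[0,T]$, both immediate. For context I would add a short remark that the estimate one genuinely wants when both $F_0$ and $F_2$ are nonzero --- in order to control $C(A)$ --- is the perturbation bound obtained by splitting $A$ into its block-diagonal part $\diag(F_1,F_3)$ plus the block off-diagonal remainder assembled from $F_0$ and $F_2$, expanding $\exp(At)$ by Dyson's formula as in \lem{C(A)_F_0}, and closing with Gr\"onwall; that argument produces the same right-hand side but with $\mu$ now bounding $\|\exp(\diag(F_1,F_3)\,t)\|$. I would prove only the displayed statement, however, for which monotonicity suffices.
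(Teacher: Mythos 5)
You have correctly diagnosed that the lemma \emph{as literally written} is vacuous: the hypothesis $\|\exp(At)\|\le\exp(\mu t)$ already dominates the claimed conclusion for $t\ge 0$, so monotonicity of $\exp$ closes it in one line. That observation is sound. But you should be explicit that this reveals a typo in the statement rather than a genuine theorem: the hypothesis is meant to read $\|\exp(A_1 t)\|\le\exp(\mu t)$, where $A_1=\diag(F_1,F_3)$ is the block-diagonal part of $A$. This is unambiguous from the paper's own proof, which splits $A=A_1+A_2$ with $A_2$ carrying the $F_0,F_2$ off-diagonal blocks, takes $\mu$ to be the log-norm of $A_1$ (and later, of $F_1$), and then invokes the standard perturbation bound from Kato: if $\|\exp(Bs)\|\le\exp(\beta s)$ then $\|\exp((B+E)s)\|\le\exp((\beta+\|E\|)s)$, applied with $B=A_1$, $E=A_2$, $\beta=\mu$, and the (slightly loose) estimate $\|A_2\|\le\|F_0\|+\|F_2\|$. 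Your closing remark sketches exactly this intended argument; the only cosmetic difference is that you derive the perturbation bound yourself via Dyson's formula and Gr\"onwall, whereas the paper cites it directly from Kato. The two routes are equivalent, since Dyson plus Gr\"onwall is precisely how that bound in Kato is proved. So: your ``official'' proof is correct for the displayed statement, but the displayed statement is not the one the paper needs; and your appended remark is, in substance, the paper's actual proof of the intended statement. When a claimed lemma reduces to a tautology, that is a strong signal to re-read the surrounding proof and theorems for the intended hypothesis rather than to accept the trivial reading, and you did correctly identify the intended reading — I would just promote that from a remark to the main content.
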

\begin{proof}
As before, split $A$ as follows
\begin{equation}\label{eq:A_split2}
    A=\begin{pmatrix} F_1 & 0\\0 & F_3\end{pmatrix} + \begin{pmatrix} 0 & F_0\\F_2 & 0\end{pmatrix}=A_1+A_2\,,
\end{equation}
we have that $A_1$ contains the dissipative part of the evolution and $A_2$ contains the nonlinear and constant terms. From \cite{kato2013perturbation}, for any two matrices $B$ and $E$ if $\|\exp(Bs)\|\leq \exp(\beta s)$, then
\begin{equation}
    \|\exp((B+E)s)\|\leq \exp((\beta + \|E\|)s)\,.
\end{equation}
We now use the fact that $\|\exp(A_1t)\|\leq \exp(\mu t)$, where $\mu$ is the log norm of $A_1$. Using this, we see that
\begin{equation}
    \|\exp((A_1+A_2)t)\|\leq \exp((\mu + \|F_0\|+\|F_2\|)t)\,.
\end{equation}
\end{proof}

\begin{remark}
In order to have $C(A)$ be polynomially bounded i.e., $C(A)=\poly(n)$, we need
\begin{equation}\label{eq:C(A)_cond}
    \exp((\mu + \|F_0\| + \|F_2\|)t) = O(n^k)\,,
\end{equation}
for some $k$ for all $t\in [0,T]$. If $\mu+\|F_0\|+\|F_2\|>0$, then we need
\begin{equation}\label{eq:mu_cond}
    (\mu+\|F_0\|+\|F_2\|)T = O(k\log n)\,.
\end{equation}
If $\mu+\|F_0\|+\|F_2\|<0$, then the condition in \eq{C(A)_cond} is satisfied already.
\end{remark}

\begin{remark}\label{rem:R_comparison}
We compare the above quantity to $R$, which was defined in \cite{Liue2026805118} (see also \cite{krovi2022improved} for a generalized version) as follows.
\begin{equation}
    R=\frac{1}{|\mu|}\Big(\|F_2\|\|u_{in}\| + \frac{\|F_0\|}{\|u_{in}\|}\Big)\,,
\end{equation}
where $\mu$ is the log norm of $F_1$, which is assumed to be negative and $\|u_{in}\|$ is the norm of the initial state. This means we need to re-define $F_0$ and $F_2$ as $\bar{F_0}=F_0/\|u_{in}\|$ and $\bar{F_2}=\|u_{in}\|F_2$. The requirement that $R<1$ (by Carleman linearization) is equivalent to
\begin{equation}\label{eq:mu_old}
    \mu +\|\bar{F_2}\| + \|\bar{F_0}\| \leq 0\,.
\end{equation}
In contrast, here we need (from \eq{mu_cond})
\begin{equation}\label{eq:mu_new}
    \mu +\|F_2\| + \|F_0\| = \frac{1}{T}O(\log n)\,,
\end{equation}
which grows logarithmically in $n$, which is the number of bits needed to encode the problem i.e., $n=\log N$. The condition \eq{mu_new} is more general than (and includes) \eq{mu_old}. In many cases, it can lead to more nonlinearity and even positive log-norm $\mu$.
\end{remark}

\begin{remark}
When $\mu(A_1)<0$ and $F_0=0$, we have an even weaker constraint on the nonlinearity. From \eq{C(A)_F_0_eq}, we have
\begin{equation}
    1+\|F_2\|T \leq \poly(n)\,,
\end{equation}
since $C_d\leq 1$. This means that the strength of the nonlinear term can be as large as $\poly(n)/T$.
\end{remark}

\section{Quantum algorithms for classical Hamiltonians}\label{sec:class_ham_sim}
In this section, we present our quantum algorithms to solve the problem of preparing the final state after classical evolution and also to estimate the kinetic energy as described in \prb{class_mech}. In this approach, we essentially solve the differential equation coming from Hamilton's canonical equations in \eq{Ham_eq_diss}, namely an equation of the type
\begin{equation}
    \frac{dx}{dt} = A x + b\,,
\end{equation}
using the quantum ODE solver recently developed in \cite{krovi2022improved}. Specifically, we have the following result.
\begin{theorem}\label{thm:Ham_canon}
There exists a quantum algorithm that solves \prb{class_mech} of preparing a quantum state given oracle access to $M$, $V$, $R$ and an oracle to prepare a normalized version of $f$. Let $d$ be the dimension of these matrices and $s$ be the sparsity. Let
\begin{equation}
    k=O(\log T\|A\|(1+\frac{T\|f\|}{\|x(T)\|})\,,\hspace{0.1in}\text{and}\hspace{0.1in} \kappa_L = T\|A\|C(A)\,.
\end{equation}
The query complexity of this algorithm is given by 
\begin{equation}
    T_Q=O(g \kappa_L \log\kappa_L \poly(s,k,\log d,\log1/\epsilon))\,,
\end{equation}
and gate complexity $T_G$ is greater by a factor of at most
\begin{equation}
    O(\poly(k,\log(\frac{1}{\epsilon}),\log(\|A\|T))\,,
\end{equation}
where 
\begin{equation}
    C(A)\leq \max\{\|\sqrt{V}\|,\|\sqrt{M}\|\}\cdot\max\{\|\sqrt{V^{-1}}\|,\|\sqrt{M^{-1}}\|\}\,,
\end{equation}
\begin{equation}
    g=\frac{\max_{t\in [0,T]}\|x(t)\|}{\|x(T)\|}\,,
\end{equation}
where $x(t)$ is the solution at time $t$, $T$ is the final time and
\begin{equation}
    \|A\|\leq 2\max\{\|M^{-1}V\|,\|M^{-1}R\|,1\}\,.
\end{equation}
The query complexity of estimating the kinetic energy is
\begin{equation}
    T_K=O( \frac{1}{\epsilon}\kappa_L \poly(\log\kappa_L,s,k,\log d))\,,
\end{equation}
with the gate complexity greater by a fraction of at most $\poly(\log 1/\epsilon,\log T\|A\|,k)$.
\end{theorem}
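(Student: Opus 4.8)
The plan is to realise \prb{class_mech} as the linear ODE $\dot{x}=Ax+b$ of \eq{Ham_eq_diss}, solve it with the history-state quantum ODE solver of \cite{krovi2022improved}, and then read off the kinetic energy by amplitude estimation together with the norm-estimation routine of \thm{norm_estimate}. The complexity of \cite{krovi2022improved} is governed by $\|A\|$, by $C(A)=\sup_{t\in[0,T]}\|\exp(At)\|$, by the sparsity $s$, and by the truncation order $k$ and the ratio $g=\max_{t}\|x(t)\|/\|x(T)\|$; the substantive point is the estimate of $C(A)$, and the rest is block-encoding arithmetic plus a direct appeal to the cited results.

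First I would build a block encoding of $A=\begin{pmatrix}0&I\\-M^{-1}V&-M^{-1}R\end{pmatrix}$ of \eq{x_basis}: from the sparse-access oracles, \lem{sparse_matrix} gives block encodings of $M$, $V$ and $R$; \lem{matrix_inversion} gives one of $M^{-1}$ (the masses are positive and the assumed norm bounds keep $M$ well conditioned); and \lem{matrix_arithmetics} assembles $M^{-1}V$, $M^{-1}R$, the Hermitian complements and finally the $2\times 2$ block matrix $A$, with the forcing $b=-(0,M^{-1}f)$ prepared from the given oracle for $f$. The subnormalisation constants accumulated along the way are $\poly(s)$ and enter the final gate count only through the stated $\poly(s,\dots)$ and $\poly(\log(\|A\|T),\dots)$ factors. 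For the operator norm, splitting $A$ into its block-diagonal and block-antidiagonal parts (whose norms are the maxima of the respective block norms) gives $\|A\|\le\|M^{-1}R\|+\max\{1,\|M^{-1}V\|\}\le 2\max\{\|M^{-1}V\|,\|M^{-1}R\|,1\}$.

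The heart of the argument is the bound on $C(A)$. I would pass to the $y$-coordinates of \eq{y_basis}: writing $P=\diag(\sqrt{V},\sqrt{M})$, one has $y=Px$ and hence $A=P^{-1}A_yP$, where $A_y$ is the generator of \eq{y_basis}. A short computation shows $A_y+A_y^\dag=\diag(0,\,-2\sqrt{M^{-1}}R\sqrt{M^{-1}})\le 0$, the negativity being precisely the hypothesis that $R$ is positive (this is the only place the promise on $R$ is used), so the logarithmic norm of $A_y$ is nonpositive and $\|\exp(A_yt)\|\le 1$ for all $t\ge 0$. Hence
\begin{equation}
C(A)=\sup_{t\in[0,T]}\big\|P^{-1}\exp(A_yt)P\big\|\le\|P\|\,\|P^{-1}\|=\max\{\|\sqrt{V}\|,\|\sqrt{M}\|\}\cdot\max\{\|\sqrt{V^{-1}}\|,\|\sqrt{M^{-1}}\|\}\,,
\end{equation}
which is the bound asserted in the theorem. (The simpler estimate available when $F_0=0$ does not apply, since both off-diagonal blocks of $A$ are nonzero, so this similarity argument is essential.)

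Given these two bounds, the solver of \cite{krovi2022improved} applied to $\dot{x}=Ax+b$ has an associated linear system of condition number $O(T\|A\|C(A))=O(\kappa_L)$ and truncated-Taylor order $k=O(\log(T\|A\|(1+T\|f\|/\|x(T)\|)))$, and it outputs the normalised state $x(T)/\|x(T)\|$ — extracted from the history state by projecting onto the final time slice, which contributes the factor $g$ — with the claimed query and gate complexities; the state $(q(T),p(T))$ then follows by one further application of the block encoding of $\diag(I,M)$ if desired, settling part~(i). For part~(ii), the kinetic energy is $\tfrac12\,\dot q(T)^\dag M\,\dot q(T)$; applying a block encoding of $\diag(0,\sqrt{M})$ to $x(T)/\|x(T)\|$ flags a component of success probability $\dot q(T)^\dag M\dot q(T)/(2\alpha^2\|x(T)\|^2)$ for the known subnormalisation $\alpha$, and amplitude-estimating this probability, then multiplying by an estimate of $\|x(T)\|$ from \thm{norm_estimate}, yields the kinetic energy; converting the resulting relative precision into an additive error $\epsilon$ on an $O(1)$-size quantity is what produces the $1/\epsilon$ factor in $T_K$. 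The one real obstacle is the $C(A)$ estimate above — recognising that the correct normalising similarity is $\diag(\sqrt{V},\sqrt{M})$ and that in those coordinates the damped, forced system is manifestly dissipative (nonpositive logarithmic norm, using $R\ge 0$); the block-encoding bookkeeping, the weighting operators and the invocations of \cite{krovi2022improved}, \thm{norm_estimate} and amplitude estimation are then routine.
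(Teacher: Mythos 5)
Your proposal is correct and follows the same approach as the paper's proof. In particular, your bound on $C(A)$ — conjugating $A$ by $P=\diag(\sqrt{V},\sqrt{M})$ to obtain the generator $A_y$ of the $y$-coordinate flow, observing that $A_y+A_y^\dag=\diag(0,-2\sqrt{M^{-1}}R\sqrt{M^{-1}})\le 0$ by the positivity of $R$, and concluding $\|\exp(A_yt)\|\le 1$ and hence $C(A)\le\|P\|\,\|P^{-1}\|$ — is exactly the paper's argument, merely phrased in terms of nonpositive logarithmic norm rather than via the matrix inequality $\|\exp(B)\|\le\|\exp((B+B^\dag)/2)\|$ of Bhatia that the paper invokes; the two are equivalent. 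The only cosmetic difference is in the kinetic-energy readout: the paper performs a Hadamard test directly on the history state against a block encoding of $\tfrac12\ket{m}\!\bra{m}\otimes\diag(0,M)$ and separately estimates the norm $N_\psi$ of the full history state via \thm{norm_estimate}, whereas you propose projecting onto the final time slice first, amplitude-estimating the weighted overlap, and then rescaling by the norm of $x(T)$. These give the same $1/\epsilon$ scaling and the same factor of $g$, but note that \thm{norm_estimate} as stated returns the norm of the entire linear-system solution (the history state), not $\|x(T)\|$ directly, so the rescaling step would need the same bookkeeping the paper uses to relate $N_\psi$ to $\|x(T)\|$ through $g$.
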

\begin{proof}
Using the result from \cite{krovi2022improved}, we can obtain a quantum state $\ket{\psi}$ that is close to the solution of the equation
\begin{equation}\label{eq:Ham_eq_linear}
    \frac{dx}{dt} = A x + b\,,
\end{equation}
such that $\|\ket{\psi} - x(T)/\|x(T)\|\|\leq \epsilon$. To do so, we need a block encoding of $A$ and an oracle to prepare the normalized version of $b$ and the normalized initial state. A block-encoding of $A$ can implemented by first block-encoding the sparse matrices
\begin{equation}
    \begin{pmatrix}
        0&I\\-M^{-1}V&0
    \end{pmatrix}\,\hspace{0.1in}\text{and}\hspace{0.1in}
    \begin{pmatrix}
        0&0\\0&-M^{-1}R
    \end{pmatrix}\,,
\end{equation}
and then adding the two block-encoded unitaries. The initial state $x(0)$ in this basis can be prepared easily using oracles to prepare $q(0)$ and $\dot{q}(0)$ as normalized states similar to the description in \cite{krovi2022improved}. The oracle to prepare $b$ can be done using the method described in \cite{babbush2023exponential}. Recall that 
\begin{equation}
    b=\begin{pmatrix}
        0\\M^{-1}f
    \end{pmatrix}\,.
\end{equation}
The method in \cite{babbush2023exponential} which uses inequality testing from \cite{Ineq_testing} can construct an oracle for $b$ with error $\epsilon$ with $O(\sqrt{\|M^{-1}\|}\polylog(d,1/\epsilon))$ queries to the oracle for $f$.

Using this, the final state $\ket{\psi}$ can be obtained with query complexity 
\begin{equation}
    O(g T \|A\| C(A) \log(\|A\|C(A)T) \poly(s,k,\log d,\log(1/\epsilon)))\,,
\end{equation}
where
\begin{equation}
    C(A) = \sup_{t\in [0,T]}\exp(At)\,,
\end{equation}
and 
\begin{equation}
    g= \frac{\max_{t\in [0,T]}\|x(t)\|}{\|x(T)\|}\,.
\end{equation}
Next, we estimate $C(A)$ in terms of the norms of $M$, $R$, $V$ and $f$.
\paragraph{Dissipation $R=0$.} First, as a special case, we consider the case when $R=0$. In this case, for the classical mechanical system, we have from \eq{dx_dt} that 
\begin{equation}
    A=\begin{pmatrix}
        0&I\\
        -M^{-1}V& 0
    \end{pmatrix}\,.
\end{equation}
The complexity of the algorithm depends on $C(A)$, which we now estimate. We have
\begin{equation}
    A=I_M^{-1}JQI_M\,,
\end{equation}
where
\begin{align}
    &I_M = \begin{pmatrix}
        I & 0\\ 0&M
    \end{pmatrix}\,,
    &Q = \begin{pmatrix}
        V &0\\0&M^{-1}
    \end{pmatrix}\,.
\end{align}
Therefore, we have
\begin{equation}
    A=I_M^{-1}Q^{-1/2}(Q^{1/2}JQ^{1/2})Q^{1/2}I_M\,.
\end{equation}
This means
\begin{equation}
    \exp(At) = (\sqrt{Q}I_M)^{-1}\exp(Q^{1/2}JQ^{1/2}t)(\sqrt{Q}I_M)\,.
\end{equation}
Taking norms and using the fact that $\sqrt{Q}J\sqrt{Q}$ is an anti-Hermitian operator (as discussed in \sec{quad}), we get
\begin{equation}
    \|\exp(At)\| \leq \|(\sqrt{Q}I_M)^{-1}\|\|\exp(Q^{1/2}JQ^{1/2}t)\|\|\sqrt{Q}I_M\| = \kappa(\sqrt{Q}I_M)\,,
\end{equation}
where in the above equation, we used the fact that the exponential of $\sqrt{Q}J\sqrt{Q}$ has unit norm and where $\kappa(\sqrt{Q}I_M)$ is the condition number of $\sqrt{Q}I_M$. Finally, we have
\begin{equation}
    \sqrt{Q}I_M = \begin{pmatrix}
        \sqrt{V} &0\\0&\sqrt{M}
    \end{pmatrix}\,.
\end{equation}

\paragraph{Dissipation $R\neq 0$.} In this case, we have that 
\begin{equation}
    A=\begin{pmatrix}
        0 & I\\-M^{-1}V & -M^{-1}R
    \end{pmatrix}\,.
\end{equation}
To bound $C(A)$ in this case, first we split $A$ as follows
\begin{equation}
    A=\begin{pmatrix}
        0 & I\\-M^{-1}V & -M^{-1}R
    \end{pmatrix}
    = I_M^{-1}\Bigg[\begin{pmatrix}
        0&M^{-1}\\-V&0
    \end{pmatrix} + \begin{pmatrix}
        0 & 0\\0& -RM^{-1}
    \end{pmatrix}\Bigg]I_M \,.
\end{equation}
The first term inside the square brackets is just $JQ$ and let us denote the second term $A_2$. Suppose we conjugate $A_2$ by $Q^{1/2}$, we get
\begin{equation}
    Q^{1/2}A_2Q^{-1/2} = \begin{pmatrix}
        0 & 0 \\ 0 & -M^{-1/2}RM^{-1/2}
    \end{pmatrix}\,.
\end{equation}
Let
\begin{equation}
    A_R = \begin{pmatrix}
        0 & 0 \\ 0 & -M^{-1/2}RM^{-1/2}
    \end{pmatrix}\,,
\end{equation}
and similarly conjugating $JQ$, we get
\begin{equation}
    A_I = Q^{1/2}JQ^{1/2}
\end{equation}

Using this, we can see that
\begin{equation}
    A = I_M^{-1}Q^{-1/2}(Q^{1/2}JQ^{1/2})Q^{1/2}I_M + I_M^{-1}Q^{-1/2}A_RQ^{1/2}I_M \,.
\end{equation}
Therefore
\begin{equation}
    \exp(At) = (\sqrt{Q}I_M)^{-1} \exp(A_I + A_R) (\sqrt{Q}I_M)\,.
\end{equation}
Hence
\begin{equation}
    \|\exp(At)\| \leq \kappa(\sqrt{Q}I_M) \|\exp(A_R+A_I)\|\,.
\end{equation}
Now a result from \cite{bhatia1996matrix} states that for any matrix $B$
\begin{equation}
    \|\exp(B)\|\leq \|\exp((B+B^\dag)/2)\|\,.
\end{equation}
Letting $B=A_R+A_I$, we get $(B+B^\dag)/2 = A_R$ (because $A_I$ is anti-Hermitian). Therefore, we have
\begin{equation}
    C(A)\leq \kappa(I_M\sqrt{Q}) C(A_R)\,.
\end{equation}
Since we assume that $R$ is a positive matrix, we have $M^{-1}RM^{-1}$ is positive and hence $A_R$ is negative. This means $\|\exp(A_Rt)\|\leq 1$. Therefore, 
\begin{equation}\label{eq:C_A}
    C(A)\leq \kappa(\sqrt{Q}I_M)\,.
\end{equation}
From the block structure of $A$, we can bound $\|A\|$ as follows. We have
\begin{equation}
    \|A\| = \sup_{v:\|v\|=1} \|Av\|\,.
\end{equation}
We can assume block structure for $v$ as $v=(v_1,v_2)^T$ such that $\|v_1\|^2 + \|v_2\|^2=1$. We now have
\begin{align}
    \|A\| &= \sup_{(v_1,v_2)^T:\|v_1\|^2 + \|v_2\|^2=1} \|(v_2,-M^{-1}Vv_1-M^{-1}Rv_2)^T\| \\
    &\leq \sup_{(v_1,v_2)^T:\|v_1\|^2 + \|v_2\|^2=1} \sqrt{\|v_2\|^2 + \max\{\|M^{-1}V\|,\|M^{-1}R\|\}^2\|v_1+v_2\|^2}\\
    &\leq \sup_{(v_1,v_2)^T:\|v_1\|^2 + \|v_2\|^2=1} \sqrt{2}\max\{\|M^{-1}V\|,\|M^{-1}R\|,1\}\|v_1+v_2\|\,.
\end{align}
We also have $\|v_1+v_2\|^2\leq 2(\|v_1\|^2+\|v_2\|^2)$. Using this, we can see that
\begin{equation}
    \|A\|\leq 2\max\{\|M^{-1}V\|,\|M^{-1}R\|,1\}\,.
\end{equation}
\paragraph{Estimation of kinetic energy.} The kinetic energy of the system at a given time is defined as
\begin{equation}
    K(t)=\frac{1}{2}\dot{q}(t)^\dag M \dot{q}(t)\,.
\end{equation}
To estimate the kinetic energy of the final state, we need to use the so called history state which is the quantum state produced by the algorithm (from which we can get the normalized solution state by measuring the time register). Here we will not measure the time register, but perform a Hadamard test on the history state. This state can be written as 
\begin{equation}
    \ket{\psi} = \frac{1}{N_\psi}\Bigg(\sum_{i=0}^m \ket{i,x_i}\Bigg)\,,
\end{equation}
where $N_\psi$ is the normalization and $m=T\|A\|$ denotes the number of time steps needed to obtain an accurate solution (to within an additive error). Using \thm{norm_estimate}, we can estimate the norm of the above state to some multiplicative error $\gamma$ i.e., there exists a quantum algorithm that produces $\hat{N}_\psi$ such that
\begin{equation}\label{eq:N_hat_psi}
    |\hat{N}_\psi - N_\psi|\leq \gamma N_\psi\,.
\end{equation}
We can now measure the state $\ket{\psi}$ using the following procedure. We first block-encode as $U_M$, the matrix 
\begin{equation}
    \frac{1}{2}\ket{m}\bra{m}\otimes \begin{pmatrix}
        0 & 0\\ 0 & M
    \end{pmatrix}
\end{equation}
Then we need to estimate the inner product
\begin{equation}
    \frac{1}{2}\dot{q}^\dag M \dot{q}=\frac{1}{2}\begin{pmatrix}
        q&\dot{q}
    \end{pmatrix}
    \begin{pmatrix}
        0 & 0\\ 0 & M
    \end{pmatrix}
    \begin{pmatrix}
        q\\ \dot{q}
    \end{pmatrix}\,,
\end{equation}
by estimating the inner product
\begin{equation}
    \bra{\psi}U_M\ket{\psi} = \frac{1}{N^2_\psi\|M\|}\frac{1}{2}\hat{\dot{q}}^\dag M \hat{\dot{q}}\,,
\end{equation}
where $\hat{\dot{q}}$ is the velocity obtained from the numerical solution and $\|M\|$ is the sub-normalization in the block-encoding. In other words, the inner product is a scaled version of the kinetic energy. We can now use the estimate of the norm to get the kinetic energy $\hat{K} = \|M\|\hat{N}_\psi^2\bra{\psi}U_M\ket{\psi}$. The total error in this estimate is
\begin{equation}
    |\hat{K} - K| = \frac{1}{2}\Bigg|\frac{\hat{N}^2_\psi}{N^2_\psi}\hat{\dot{q}}^\dag M \hat{\dot{q}}-\dot{q}^\dag M \dot{q}\Bigg|\,.
\end{equation}
This can be written as
\begin{equation}
    |\hat{K} - K| = \frac{1}{2}\Bigg|\frac{\hat{N}^2_\psi}{N^2_\psi}(\hat{\dot{q}}^\dag M \hat{\dot{q}}-\dot{q}^\dag M \dot{q}) + (\frac{\hat{N}^2_\psi}{N^2_\psi}-1)\dot{q}^\dag M \dot{q}\Bigg|\,.
\end{equation}
Using \eq{N_hat_psi}, we have $\hat{N}_\psi\leq (1+\gamma)N_\psi$ and this means
\begin{equation}
    |\hat{N}_\psi^2-N_\psi^2|=|\hat{N}_\psi-N_\psi|(\hat{N}_\psi+N_\psi)\leq \gamma(2+\gamma)N_\psi^2\,.
\end{equation}
Therefore,
\begin{equation}\label{eq:kinetic_diff}
    |\hat{K} - K| \leq \frac{1}{2}(1+\gamma(2+\gamma))|(\hat{\dot{q}}^\dag M \hat{\dot{q}}-\dot{q}^\dag M \dot{q})| + \frac{1}{2}\gamma(2+\gamma)|\dot{q}^\dag M \dot{q}|\,.
\end{equation}
Now let us focus on the difference of inner products in the first term. This can be written as
\begin{equation}
    |\hat{\dot{q}}^\dag M \hat{\dot{q}} - \dot{q}^\dag M \dot{q}| =\sum_i m_i |\hat{\dot{q}}_i^2 - \dot{q}_i^2|\leq m_{\max}  \sum_i|\hat{\dot{q}}_i^2 - \dot{q}_i^2|\,,
\end{equation}
where $m_{\max}$ is the maximum value of the diagonal matrix of masses $M$. Using Cauchy-Schwartz inequality, we can write this as 
\begin{equation}
    m_{\max}  \sum_i|\hat{\dot{q}}_i^2 - \dot{q}_i^2|\leq m_{\max} \sqrt{\sum_i(\hat{\dot{q}}_i - \dot{q}_i)^2}\sqrt{\sum_i(\hat{\dot{q}}_i + \dot{q}_i)^2} = m_{\max} \|\hat{\dot{q}} - \dot{q}\|\|\hat{\dot{q}} + \dot{q}\|\,,
\end{equation}
where $\|\cdot\|$ is the $\ell_2$ norm. Using the fact that
\begin{equation}
    \|\hat{\dot{q}} - \dot{q}\|\leq \epsilon_1\|\dot{q}\|\,,
\end{equation}
for some $\epsilon_1$ we get 
\begin{equation}
    |\hat{\dot{q}}^\dag M \hat{\dot{q}} - \dot{q}^\dag M \dot{q}|\leq m_{\max}\epsilon_1(2+\epsilon_1)\|\dot{q}\|^2\,.
\end{equation}
Putting it together and choosing $\gamma$ such that $\gamma= \epsilon_1 = \epsilon/8$, we get
\begin{equation}
    |\hat{K} - K|\leq \epsilon K_{\max}\,,
\end{equation}
where
\begin{equation}
    K_{\max} = m_{\max}\|\dot{q}\|_2^2\,.
\end{equation}
The gate complexity of kinetic energy estimation follows from the complexity of preparation of the state $\ket{\psi}$, norm estimation of $N_\psi$, block encoding $U_M$ and the Hadamard test.
\end{proof}

\begin{remark}
The run-time for the case with zero damping i.e., when $R=0$, depends on the condition number of the matrix $\sqrt{Q}$. The fact that for the encoding used here, the run-time scales as the inverse of the smallest eigenvalue is already pointed out in \cite{babbush2023exponential}, where it was argued that it is at least linear. Here we show that it is no more than linear (up to log factors).

It is interesting to note that the problem shown to be $\BQP$ complete in \cite{babbush2023exponential} also has a constant condition number for $\sqrt{Q}$ and hence the above algorithm and encoding would be efficient for this case.
\end{remark}
\begin{remark}
It is also worth noting that the run-time of the quantum algorithm from \cite{krovi2022improved} for the linear differential equation \eq{Ham_eq_linear} is dependent on the norm of the exponential of $A$ (denoted $C(A)$ here) and does not require $A$ to have a negative log-norm.
\end{remark}
Next we give a result that improves the above kinetic energy estimation by giving a multiplicative or relative approximation to the kinetic energy if there is no source term i.e., $b=0$.
\begin{corollary}
If $b=0$ in \eq{Ham_eq_linear}, then there exists a quantum algorithm that outputs an estimate $\hat{K}$ for the kinetic energy such that
\begin{equation}
    |K-\hat{K}|\leq \epsilon K\,,
\end{equation}
where $K$ is the actual kinetic energy.
\end{corollary}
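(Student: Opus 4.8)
The plan is to write $K$ as a product of two quantities, each estimable to \emph{relative} precision, and then multiply the estimates; the additive procedure in the proof of \thm{Ham_canon} is lossy precisely because it bounds the error $\hat{\dot q}^\dag M\hat{\dot q}-\dot q^\dag M\dot q$ by $m_{\max}\|\dot q\|^2$ rather than by $\dot q^\dag M\dot q=2K$, which injects a factor $\kappa(M)$ (equivalently, $K_{\max}$ in place of $K$). To remove this I would use the exact identity $\|\sqrt M\,\dot q(T)\|^2=\dot q(T)^\dag M\,\dot q(T)=2K$; in the rescaled basis $y=\sqrt Q\,z$ of \eq{y_basis} this reads simply $\|y_2(T)\|^2=2K$ for the velocity block $y_2$. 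When $b=0$ (i.e.\ $f=0$), \eq{y_basis} is the homogeneous equation $\dot y=A_yy$, so $y(T)=\exp(A_yT)y(0)\neq 0$, and $A_y$ is the sum of the anti-Hermitian matrix $\sqrt Q\,J\sqrt Q$ and the negative-semidefinite block $\diag(0,-\sqrt{M^{-1}}R\sqrt{M^{-1}})$ (using $R\ge 0$); by the inequality $\|\exp B\|\le\|\exp((B+B^\dag)/2)\|$ already invoked in the proof of \thm{Ham_canon} this gives $C(A_y)\le 1$, now with no dependence on the condition number of $\sqrt V$. (One could instead stay in the $(q,\dot q)$-basis and block-encode $\sqrt M$ in place of $M$, which is immediate for diagonal $M$.)

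With that in place, the algorithm runs the solver of \cite{krovi2022improved} on the homogeneous equation to prepare a state close to the normalized history state $\ket{\psi}=\tfrac{1}{N}\sum_{i=0}^m\ket{i,y_i}$, $N^2=\sum_i\|y_i\|^2$, and then performs two estimations. First, relative-precision amplitude estimation on the projector onto the final time step and the velocity register, whose accepting probability is $p=\|y_2(T)\|^2/N^2=2K/N^2$; this returns $\hat p$ with $|\hat p-p|\le\tfrac{\epsilon}{3}p$. Second, \thm{norm_estimate}, applied to the linear system that encodes $\ket{\psi}$, returns $\hat N$ with $|\hat N-N|\le\tfrac{\epsilon}{7}N$, hence $\hat N^2$ within relative error $\tfrac{\epsilon}{3}$ of $N^2$. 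The output is $\hat K=\tfrac12\,\hat N^2\,\hat p$, and since $K=\tfrac12N^2p$ holds with no slack the two relative errors compose, so a routine first-order estimate gives $|\hat K-K|\le\epsilon K$.

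The step I expect to be the main obstacle is reconciling the solver's approximation error with the relative precision amplitude estimation demands: an $\epsilon_1$-error in the prepared history state contributes only an \emph{additive} $O(\epsilon_1)$ error to $p$, so one must take $\epsilon_1=\Theta(\epsilon p)=\Theta(\epsilon K/N^2)$. That is cheap inside the solver, whose cost is polylogarithmic in $1/\epsilon_1$, but it forces the amplitude-estimation stage to carry the usual $1/\sqrt p=N/\|y_2(T)\|$ overhead, a ratio-of-norms factor in the spirit of the $g$ of \thm{Ham_canon}, and this is the ``slightly worse dependence on error'' relative to the additive estimate. I would also want to confirm that \thm{norm_estimate}, stated for $\|A^{-1}b\|$, applies to the norm of the history vector of \cite{krovi2022improved} with only polynomial overhead and that its padding copies of $y(T)$ fold harmlessly into the bookkeeping; here the hypothesis $b=0$ is what keeps the accounting clean, since the prepared state is exactly $\exp(A_yT)y(0)$ and no source term enters the error analysis.
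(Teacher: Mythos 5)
Your proposal is correct in substance but it is a genuinely different route from the paper's. The paper keeps the algorithm of \thm{Ham_canon} completely unchanged (history state in the $(q,\dot q)$-basis, Hadamard test against the block-encoded mass matrix, norm estimation via \thm{norm_estimate}) and obtains the relative bound purely by sharpening the error analysis: when $b=0$ it invokes the componentwise guarantee $|\dot q_i-\hat{\dot q}_i|\le\epsilon_1\dot q_i$ (\lem{q_i_diff}, derived from Theorem~3 of \cite{krovi2022improved}), so that $\sum_i m_i|\dot q_i^2-\hat{\dot q}_i^2|\le\epsilon_1(2+\epsilon_1)\,\dot q^\dag M\dot q$ and the additive bound $\epsilon K_{\max}$ of \thm{Ham_canon} collapses to $\epsilon K$ with $\gamma=\epsilon_1=\epsilon/16$. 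You instead re-engineer the estimator so that $K$ is \emph{exactly} $\tfrac12 N^2 p$ for a history-state norm $N$ and an accepting probability $p$, each estimable to relative precision, using the identity $2K=\|\sqrt M\,\dot q(T)\|^2$ (the $y_2$ block in the basis of \eq{y_basis}, or equivalently $\sqrt M$ applied in the $x$-basis). What your route buys: it does not rest on the rather strong componentwise relative-error claim of \lem{q_i_diff}, the relative-error composition is structurally transparent, and your observation that the solver error need only be $\Theta(\epsilon p)$ (polylogarithmic cost) is correct; the paper itself already uses \thm{norm_estimate} on the history vector, so that ingredient is unproblematic. What it costs: working in the $y$-basis requires block-encoding $\sqrt{M^{-1}V}$ via \thm{sqrt_A}, whose $1/\epsilon$ overhead is exactly the trade-off the paper isolates in \thm{class_mech_qpe} (your parenthetical alternative --- staying in the $x$-basis and block-encoding the diagonal $\sqrt M$ --- avoids this and is the cleaner instantiation), and relative-precision amplitude estimation carries the $1/\sqrt p$ overhead you flag; since the corollary asserts only the existence of an algorithm with relative error and states no complexity, neither of these affects correctness, but the paper's analysis-only fix inherits the complexity of \thm{Ham_canon} with no new subroutines, at the price of leaning entirely on \lem{q_i_diff}, which is precisely where the hypothesis $b=0$ enters for the paper.
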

\begin{proof}
From \eq{kinetic_diff} in the proof of the $b\neq 0$ case, we have that the estimate $\hat{K}$ satisfies
\begin{equation}
    |K-\hat{K}|\leq \frac{1}{2}(1+\gamma(2+\gamma))\sum_im_i|\dot{q}_i^2 - \hat{\dot{q}}_i^2| + \frac{1}{2}\gamma(2+\gamma)|\dot{q}^\dag M \dot{q}|\,.
\end{equation}
We show in \lem{q_i_diff} in \appx{auxil_lemma} that for any $i$ 
\begin{equation}
    |\dot{q}_i - \hat{\dot{q}}_i|\leq \epsilon_1\dot{q}_i\,,
\end{equation}
for some constant $\epsilon_1$. Using this, we get
\begin{equation}
    |\dot{q}_i^2 - \hat{\dot{q}}_i^2|=|\dot{q}_i - \hat{\dot{q}}_i||\dot{q}_i + \hat{\dot{q}}_i|\leq \epsilon_1(2+\epsilon_1)\dot{q}_i^2\,,
\end{equation}
which gives
\begin{equation}
    |K-\hat{K}|\leq (\epsilon_1(2+\epsilon_1)(1+\gamma(2+\gamma))+\gamma(2+\gamma))K\,.
\end{equation}
By choosing $\gamma=\epsilon_1=\epsilon/16$, we get
\begin{equation}
    |K-\hat{K}|\leq \epsilon K\,.
\end{equation}
\end{proof}

\subsection{Improved kinetic energy estimation}
In this subsection, we explain how to use Quantum Phase Estimation (QPE) and Qubitization \cite{Low2019hamiltonian} and the improvements to this implemented in \cite{npj2018, PRXQuantum.1.020312}. In this approach, we have an algorithm whose complexity does not have a dependence on the condition number of $\sqrt{Q}$ as with the previous approach. The downside is that it has a worse dependence on the error. Recall that in the $y$ basis, we have the differential equation given by \eq{y_basis} reproduced below.
\begin{equation}
    \frac{dy}{dt} = \begin{pmatrix}
        0 & \sqrt{VM^{-1}}\\
        -\sqrt{M^{-1}V} & -\sqrt{M^{-1}}R\sqrt{M^{-1}}
    \end{pmatrix}y+b = \tilde{A}y+b\,.
\end{equation}
In order to solve this equation, we need to be able to block-encode $\sqrt{M^{-1}V}$. This can be done using the following result shown in \cite{babbush2023exponential}.
\begin{theorem}\label{thm:sqrt_A}
Given oracle access to the sparse matrix $A$, one can implement a block-encoding of $\sqrt{A}$ with query complexity
\begin{equation}
   T_Q(A)= O(\|A\|_{\max}s\log(1/\epsilon)\frac{1}{\epsilon}\min(\sqrt{A^{-1}},\frac{1}{\epsilon}))\,,
\end{equation}
and gate complexity greater by a factor of at most 
\begin{equation}
    O(\polylog(\frac{d}{\epsilon}))\,.
\end{equation}
\end{theorem}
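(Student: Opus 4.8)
The plan is to produce the block-encoding of $\sqrt{A}$ in three stages: (1) block-encode $A$ itself from its sparse-access oracles, (2) run a quantum eigenvalue (singular-value) transformation driven by a polynomial that approximates $x\mapsto\sqrt{x}$, and (3) rescale and collect errors. In all the applications of interest $A$ is positive semidefinite (or similar to a PSD matrix, as with $M^{-1}V$), so I take $\sqrt{A}$ to be the principal square root; the non-symmetric cases are reduced to the PSD one by working with $M^{-1/2}VM^{-1/2}$ and conjugating, at the cost of $O(1)$ extra block-encoding arithmetic (\lem{matrix_arithmetics}).

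Stage 1: apply \lem{sparse_matrix} to $A/\|A\|_{\max}$, which gives an $(\alpha,\cdot,\delta_{BE})$ block-encoding of $A$ with $\alpha=\Theta(\|A\|_{\max}\,s)$ using $O(1)$ oracle calls and $\polylog(1/\delta_{BE})$ gates. Since $A\succeq 0$, the encoded operator $A/\alpha$ has spectrum in $[\lambda_{\min}/\alpha,\,1]$, where $\lambda_{\min}=1/\|A^{-1}\|$. Stage 2 is the crux: I need an even polynomial $p$ with $\|p\|_{[-1,1]}\le 1$ and $|p(x)-c\sqrt{|x|}|\le\epsilon'$ over that window, with $c=\Theta(\sqrt{\alpha})$ chosen so $c\sqrt{x}\le1$ at the top of the spectrum. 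There are two ways to bound $\deg p$, and the statement's $\min\!\big(\sqrt{A^{-1}},1/\epsilon\big)$ is exactly the better of them: (i) approximate $\sqrt{|x|}$ uniformly on all of $[-1,1]$ — because $\sqrt{|x|}$ is only H\"older continuous of exponent $1/2$ at the origin, Jackson/Bernstein asymptotics force $\deg p=\Theta(1/\epsilon'^{2})$, which is the source of the bare $1/\epsilon$; or (ii) approximate accurately only on $[\lambda_{\min}/\alpha,1]$, where $\sqrt{x}$ is analytic inside a Bernstein ellipse of ``radius'' $1+\Theta(\sqrt{\lambda_{\min}/\alpha})$, giving $\deg p=O\!\big(\sqrt{\alpha\|A^{-1}\|}\,\log(1/\epsilon')\big)$ via a Chebyshev truncation, while on $[0,\lambda_{\min}/\alpha]$ one just uses $\sqrt{x}\le\sqrt{\lambda_{\min}/\alpha}$ to keep $p$ bounded. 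Equivalently one could use the LCU-of-inversions identity $\sqrt{A}=\tfrac2\pi\int_0^\infty A(A+t^2 I)^{-1}\,dt$ with \lem{matrix_inversion}, where the quadrature supplies the $1/\epsilon$ and the inversions the $\|A^{-1}\|$-dependence.

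Stage 3: feed $p$ and the block-encoding of $A$ into the quantum eigenvalue transformation, using the Hermitian-complement gadget (\defn{Hermitian_complement}) for parity and \lem{matrix_arithmetics} to multiply by $1/c$, yielding a $(\sqrt{\|A\|},\cdot,\epsilon)$ block-encoding of $\sqrt{A}$ with $O(\deg p)$ uses of the block-encoding of $A$ and $O(\deg p)$ single-qubit rotations. Since one block-encoding call is $O(1)$ oracle queries, the query count is $O(\deg p)$; inserting the bound on $\deg p$ from Stage 2 together with the $\alpha$- and $c$-dependent rescalings (which carry the $\|A\|_{\max}s$ prefactor) reproduces the stated $T_Q(A)$, and the $O(\polylog(d/\epsilon))$ gate overhead comes from the rotation angles and the error-$\delta_{BE}$ sparse block-encoding (choosing $\delta_{BE}\ll\epsilon/\deg p$).

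The main obstacle is Stage 2: exhibiting an explicit bounded even polynomial that realizes the branch-point trade-off — near $x=0$ one genuinely cannot beat degree $1/\epsilon^{2}$, but a spectral gap $\lambda_{\min}$ rescues this down to $\sqrt{\|A^{-1}\|}$, and the two pieces must be glued so $\|p\|_{[-1,1]}\le1$ throughout — and then propagating the approximation error $\epsilon'$ and the block-encoding error $\delta_{BE}$ of $A$ through a degree-$\deg p$ transformation without blow-up. The remaining steps are routine assembly of \lem{sparse_matrix}, \lem{matrix_arithmetics}, and standard eigenvalue-transformation error bounds.
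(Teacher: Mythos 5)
The paper does not prove this statement at all: \thm{sqrt_A} is imported verbatim from \cite{babbush2023exponential}, and the only internal hint at its provenance is the later remark in the proof of \thm{class_mech_qpe} that the block-encoding of $\sqrt{M^{-1}V}$ ``leads to a query complexity that depends on $1/\epsilon_{PE}$, where $\epsilon_{PE}$ is the phase estimation error.'' That is, the cited construction runs quantum phase estimation on a block-encoding of $A$ and takes square roots of the estimated eigenvalues; the factor $\frac{1}{\epsilon}\min\big(\sqrt{\|A^{-1}\|},\frac{1}{\epsilon}\big)$ is precisely the inverse phase-estimation precision needed (error $\epsilon\sqrt{\lambda_{\min}}$ when the spectrum is bounded away from zero, $\epsilon^{2}$ otherwise). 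Your route is genuinely different: sparse block-encoding of $A$ followed by an eigenvalue/singular-value transformation with a bounded even polynomial approximating $\sqrt{x}$, with the two degree regimes (Jackson-type $1/\epsilon^{2}$ near the branch point versus Chebyshev/Bernstein-ellipse $O(\sqrt{\alpha\|A^{-1}\|}\log(1/\epsilon))$ on a gapped window) playing the role of the $\min$. This is a valid way to establish the stated $O(\cdot)$ bound, and in fact it proves something stronger in the gapped branch — degree $O\big(\sqrt{s\|A\|_{\max}\|A^{-1}\|}\,\log(1/\epsilon)\big)$ rather than $s\|A\|_{\max}\sqrt{\|A^{-1}\|}\,\epsilon^{-1}\log(1/\epsilon)$ — so your closing claim that the assembly ``reproduces the stated $T_Q(A)$'' is not literally accurate; what you get is a smaller expression whose form differs because the stated formula is an artifact of phase estimation, and you should say you obtain an upper bound implying the theorem. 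Two small points to tighten if you carry this out: the rescaling constant must be $c=\sqrt{\alpha/\|A\|}$ (your $c=\Theta(\sqrt{\alpha})$ tacitly assumes $\|A\|=\Theta(1)$), and the polynomial must match $c\sqrt{x}$ only on the spectral window $[0,\|A\|/\alpha]$ while being capped below $1$ on the rest of $[-1,1]$ — you flag this gluing, and it is standard, but it is the one step that is not off-the-shelf from the lemmas quoted in this paper. The trade-off between the two approaches is the usual one: phase estimation (as in \cite{babbush2023exponential}) is conceptually simple and needs no polynomial construction but pays $1/\epsilon$; your QSVT route needs the explicit bounded approximant but gives exponentially better error dependence when $\|A^{-1}\|$ is modest.
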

Using this result, we prove the following theorem.
\begin{theorem}\label{thm:class_mech_qpe}
There exists a quantum algorithm that solves \prb{class_mech} of estimating to within an additive error of $\epsilon$, the kinetic energy of a classical mechanical system described by its equations of motion as follows
\begin{equation}
    M\ddot{q} + R\dot{q} + Vq +f = 0\,,
\end{equation}
We assume that we are given oracle access to $M$, $V$, $R$ and an oracle to prepare a normalized version of $f$. Let $d$ be the dimension of these matrices and $s$ be the sparsity. Let
\begin{equation}
    k=O(\log T\|\Tilde{A}\|(1+\frac{T\|f\|}{\|x(T)\|})\,,\hspace{0.1in} m=T\|\Tilde{A}\|\,.
\end{equation}
The query complexity of this algorithm is given by 
\begin{equation}
    T_Q=O(\frac{m}{\epsilon^2} \log(m) \poly(s,k,\log d))\,,
\end{equation}
and gate complexity $T_G$ is greater by a factor of at most
\begin{equation}
    O(\poly(k,\log(\frac{1}{\epsilon}),\log(m))\,,
\end{equation}
where
\begin{equation}
    \|\Tilde{A}\|\leq 2\max\{\|M^{-1}V\|,\|M^{-1}R\|,1\}\,.
\end{equation}
\end{theorem}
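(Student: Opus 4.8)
The plan is to solve the kinetic-energy part of \prb{class_mech} entirely in the $y=\sqrt{Q}z$ coordinates of \eq{y_basis}, exploiting that the driving matrix there has a negative semidefinite Hermitian part, and to pay for the change of basis by block-encoding $\sqrt{V}$ through \thm{sqrt_A}. First I would write $\tilde A = A_I + A_R$ with $A_I = Q^{1/2}JQ^{1/2}$ anti-Hermitian and $A_R=\diag(0,-M^{-1/2}RM^{-1/2})\preceq 0$ (using $R\succeq 0$), just as in the proof of \thm{Ham_canon}. The inequality $\|\exp(\tilde A t)\|\leq\|\exp(((\tilde A+\tilde A^\dag)/2)t)\| = \|\exp(A_R t)\|\leq 1$ from \cite{bhatia1996matrix} then gives $C(\tilde A)\leq 1$ with \emph{no} condition-number factor, so the history-state linear system of \cite{krovi2022improved} has condition number $\kappa_L = O(T\|\tilde A\|\,C(\tilde A)) = O(m)$ for $m = T\|\tilde A\|$, and $\|\tilde A\|\leq 2\max\{\|M^{-1}V\|,\|M^{-1}R\|,1\}$ by the same block-norm estimate used for $\|A\|$ in \thm{Ham_canon}. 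This is precisely where the $\kappa(\sqrt Q)$ dependence of \thm{Ham_canon} is eliminated.

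Next I would assemble the block-encoding of $\tilde A$ and the transformed data. The only nontrivial operator occurring in $\tilde A$, in $b=-(0,M^{-1/2}f)^T$, and in $y(0)=(\sqrt V\,q(0),\sqrt M\,\dot q(0))^T$ is $\sqrt V$: the reweightings by $M^{\pm1/2}$ are diagonal and essentially free, and $M^{-1/2}RM^{-1/2}$ follows from \lem{sparse_matrix} and \lem{matrix_arithmetics}. I would block-encode $\sqrt V$ using \thm{sqrt_A}; the crucial point is that the $\min(\sqrt{V^{-1}},1/\epsilon)$ in its query bound caps the cost at a $\poly(\|V\|_{\max},s,1/\epsilon)$ that is \emph{independent of} $\kappa(\sqrt V)$, at the cost of an explicit power of $1/\epsilon$ rather than $\polylog(1/\epsilon)$. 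The normalized $y(0)$ and $b$ are then prepared by applying these block-encodings to $\ket{q(0)}$, $\ket{\dot q(0)}$, $\ket{f}$ with postselection and using the inequality-testing construction of \cite{babbush2023exponential} for the $M^{-1/2}$ weights, exactly as in the proof of \thm{Ham_canon}. (Qubitization enters through \thm{sqrt_A}, and QPE through the norm and overlap estimation below.)

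With these ingredients in hand, I would run the linear-ODE algorithm of \cite{krovi2022improved} to produce the normalized history state $\ket{\psi}\propto\sum_{i=0}^m\ket{i,y_i}$ at cost $O(m\log m\,\poly(s,k,\log d))$, where $k=O(\log(T\|\tilde A\|(1+T\|f\|/\|x(T)\|)))$ is the Taylor truncation order. To read off the kinetic energy, use $K(T)=\tfrac12\|\Pi_2\,y(T)\|^2$ with $\Pi_2=\ket{1}\bra{1}\otimes I$ projecting onto the velocity block $y_2=\sqrt M\dot q$: block-encode $P=\ket{m}\bra{m}\otimes\Pi_2$, estimate $\bra{\psi}P\ket{\psi} = 2K(T)/N_\psi^2$ by a Hadamard/overlap test together with an estimate of $N_\psi$ via \thm{norm_estimate}, and combine the two exactly as in the $b\neq 0$ analysis of \thm{Ham_canon} (the bound \eq{kinetic_diff} carries over verbatim after the change of basis). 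Since the overlap is obtained by direct sampling — one sample being one call to the ODE solver, with no amplitude amplification — this readout contributes a $1/\epsilon^2$ factor, giving $T_Q = O(\tfrac{m}{\epsilon^2}\log(m)\,\poly(s,k,\log d))$ and gate overhead $\poly(k,\log(1/\epsilon),\log m)$.

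The main obstacle is the precision bookkeeping. The operator $\sqrt V$ is available only to some finite block-encoding accuracy $\delta$, and this $\delta$ perturbs $\tilde A$, $b$, and $y(0)$ simultaneously; one has to check, using the $\polylog$ dependence on target accuracy of \cite{krovi2022improved} and the Lipschitz stability of the final overlap in \eq{kinetic_diff}, that taking $\delta=\poly(\epsilon/m)$ suffices for an additive-$\epsilon$ estimate of $K(T)$, and then verify that the $1/\delta$ cost in \thm{sqrt_A} combined with the sampling overhead collapses to the single $1/\epsilon^2$ stated rather than a higher power. Everything else — the block-norm bound on $\|\tilde A\|$, the $C(\tilde A)\leq 1$ step, and the overlap-to-energy conversion — is a direct adaptation of arguments already proved for \thm{Ham_canon}.
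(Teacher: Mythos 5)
Your overall construction is the same as the paper's: work in the $y=\sqrt{Q}z$ basis, split $\tilde A=\tilde A_I+\tilde A_R$ with $\tilde A_I$ anti-Hermitian and $\tilde A_R\preceq 0$ so that $\|\exp(\tilde A t)\|\le\|\exp(\tilde A_R t)\|\le 1$ and hence $C(\tilde A)\le 1$, block-encode the square root via \thm{sqrt_A}, run the solver of \cite{krovi2022improved} to get the history state, and read out $K(T)=\tfrac12\|\Pi_2 y(T)\|^2$ via a block-encoded projector, norm estimation (\thm{norm_estimate}) and a Hadamard-type overlap estimate with the error bookkeeping of \eq{kinetic_diff}. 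All of that matches the paper. The gap is in the final complexity accounting, and it is exactly the point you flagged but did not resolve: you attribute the $1/\epsilon^2$ to \emph{naive sampling} of the Hadamard test ($1/\epsilon^2$ independent runs of the ODE solver, ``no amplitude amplification''), while simultaneously paying, \emph{inside each run}, the explicit $1/\epsilon$-type factor that \thm{sqrt_A} charges per use of the block-encoding of $\sqrt V$ (the factor $\tfrac{1}{\epsilon}\min(\sqrt{A^{-1}},\tfrac1\epsilon)$ is unavoidable in that theorem; there is no $\polylog(1/\delta)$ branch, and if you insist on accuracy $\delta=\poly(\epsilon/m)$ the per-use cost grows further). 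Multiplying these, your scheme costs at least $O(m/\epsilon^3)$ queries, not the claimed $O\!\big(\tfrac{m}{\epsilon^2}\log(m)\poly(s,k,\log d)\big)$; the two factors of $1/\epsilon$ do not ``collapse.''

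The paper's accounting avoids this by splitting the two factors differently: the readout (norm estimation at relative error $\gamma\sim\epsilon$ together with the overlap estimate) is performed coherently, contributing only a single $1/\epsilon$ multiplicative overhead on the solver cost (this is also what makes the $T_K=O(\tfrac1\epsilon\kappa_L\cdot\ldots)$ claim of \thm{Ham_canon} work), and the second $1/\epsilon$ comes from the square-root block-encoding with phase-estimation accuracy $\epsilon_{PE}=\epsilon$. To repair your argument, replace the $1/\epsilon^2$-sample readout by amplitude/norm estimation as in the proof of \thm{Ham_canon} (so the readout overhead is $1/\epsilon$, not $1/\epsilon^2$), and then charge the remaining $1/\epsilon$ to the block-encoding of $\sqrt{M^{-1}V}$; you must also justify that taking the block-encoding accuracy to be $\epsilon$ (rather than $\epsilon/m$) suffices, or else absorb the extra factor into the $\poly$ terms explicitly. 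As written, the claimed bound $T_Q=O(\tfrac{m}{\epsilon^2}\log(m)\poly(s,k,\log d))$ does not follow from your readout strategy.
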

\begin{proof}
We can first block encode $\sqrt{M^{-1}V}$ as a unitary $U_1$ using \thm{sqrt_A}. Using this block-encoding, we can block encode the following matrix
\begin{equation}
    \begin{pmatrix}
        0 & \sqrt{VM^{-1}}\\
        -\sqrt{M^{-1}V} & 0
    \end{pmatrix}\,.
\end{equation}
This can be implemented by the following circuit
\begin{equation}
    U_2 = cU_1^\dag(iY\otimes I)cU_1\,,
\end{equation}
where $Y$ is the Pauli matrix on a single qubit (also denoted $\sigma_y$) and $cU_1$ is the controlled version of the unitary $U_1$ controlled by the first qubit. The identity in the tensor product is of the same dimension as $U_1$. Next, we block encode the sparse matrix 
\begin{equation}
    \begin{pmatrix}
        0&0\\0&-\sqrt{M^{-1}}R\sqrt{M^{-1}}
    \end{pmatrix}\,,
\end{equation}
as $U_3$. Finally, we use block-encoded matrix arithmetic to perform $U_2+U_3$. To extract the kinetic energy, we first write it as follows. Initial state preparation and the oracle to prepare $b$ can be done as before using the techniques from \cite{babbush2023exponential}. Recall that $y$ is
\begin{equation}
    y=\begin{pmatrix}
        \sqrt{V}q\\
        \sqrt{M^{-1}}p
    \end{pmatrix}\,.
\end{equation}
Therefore, the kinetic energy can be written as
\begin{equation}
    K=\frac{1}{2}\dot{q}^TM \dot{q} = \frac{1}{2}y^TPy\,,
\end{equation}
where 
\begin{equation}
    P=\begin{pmatrix}
    0 & 0 \\0&I    
    \end{pmatrix}\,,
\end{equation}
which is the projector onto the second block of $y$. Similar to the previous approach, we can implement a block-encoding (denoted $U_P$) of $P$ tensored with the time register as follows (one can also implement a unitary that puts a sign on the basis states corresponding to $P$).
\begin{equation}
    \ket{m}\bra{m}\otimes P\,.
\end{equation}
In a manner similar to the proof of \thm{Ham_canon}, we prepare the history state that has the unnormalized approximate solution $\hat{y}$ at time $T$. This state can be written as 
\begin{equation}
    \ket{\phi} = \frac{1}{N_\phi}\sum_{i=0}^m \ket{i,y_i}\,,
\end{equation}
and the query complexity of preparing this state is
\begin{equation}
    O(T \|\Tilde{A}\| C(\Tilde{A}) \log(\|\Tilde{A}\|C(\Tilde{A})T) \poly(s,k,\log d))\,,
\end{equation}
where
\begin{equation}
    C(\Tilde{A}) = \sup_{t\in [0,T]}\exp(\Tilde{A}t)\,.
\end{equation}
We now estimate the quantity $C(\Tilde{A})$ in this case. First, we can write $\Tilde{A}$ as
\begin{equation}\label{eq:A_1A_2}
    \Tilde{A}=\begin{pmatrix}
        0 & \sqrt{VM^{-1}}\\
        -\sqrt{M^{-1}V} & -\sqrt{M^{-1}}R\sqrt{M^{-1}}
    \end{pmatrix} = \begin{pmatrix}
        0 & \sqrt{VM^{-1}}\\
        -\sqrt{M^{-1}V} & 0
    \end{pmatrix} + \begin{pmatrix}
        0 & 0\\
        0 & -\sqrt{M^{-1}}R\sqrt{M^{-1}}
    \end{pmatrix}\,.
\end{equation}
We denote the terms in the above sum $\Tilde{A}_I$ and $\Tilde{A}_R$. Now using the fact that
\begin{equation}
    \|\exp(\Tilde{A}t)\|\leq \|\exp(\frac{\Tilde{A}+\Tilde{A}^\dag}{2}t)\|\,,
\end{equation}
and the fact that the first term in the sum on the right hand side of \eq{A_1A_2} i.e., $\Tilde{A}_I$ is anti-Hermitian, we get
\begin{equation}
    \|\exp(\Tilde{A}t)\|\leq \|\exp(\Tilde{A}_Rt)\|\,.
\end{equation}
Now since $\Tilde{A}_R$ is negative definite, we have $\|\exp(\Tilde{A}t)\|\leq 1$, which makes $C(\Tilde{A})\leq 1$.

Next, we estimate the norm $N_\phi$ using \thm{norm_estimate}. Finally, we perform the Hadamard test to estimate the inner product of $\ket{\phi}$ with $U_P$. This produces the following estimate of the normalized kinetic energy.
\begin{equation}
    \hat{K} = \frac{N_\phi}{2}\bra{\phi}U_P\ket{\phi}\,.
\end{equation}
The error analysis is similar to the one in the proof of \thm{Ham_canon}. The main difference is that the block-encoding of $\sqrt{M^{-1}V}$ leads to a query complexity that depends on $1/\epsilon_{PE}$, where $\epsilon_{PE}$ is the phase estimation error. Choosing $\epsilon_{PE}=\epsilon$, we get the quadratic dependence of the run-time on the error. The overall query complexity is as stated in the theorem.
\end{proof}

\subsection{Coupled oscillators with damping}
In this section, we focus specifically on coupled oscillators with damping and forcing. Using the specific structure of the matrices in this problem, in \cite{babbush2023exponential}, a different square-root of $Q$ for this system was shown to be efficiently implementable as a block-encoding. There, it was shown that there exists a matrix $B$ which is $d\times d^2$ such that $B^\dag B=\sqrt{M^{-1}}V\sqrt{M^{-1}}$ and that this matrix can be block-encoded without the overhead of $1/\epsilon$ that phase-estimation requires. Here we show that one can include damping and solve this problem using differential equation solvers. Going back to the $x$ basis, we have the equation
\begin{equation}
    \frac{dx}{dt}= Ax+b\,,
\end{equation}
where 
\begin{equation}
    A=\begin{pmatrix}
        0&I\\-M^{-1}V&-M^{-1}R
    \end{pmatrix}\,.
\end{equation}
We now define the basis, which we denote $\tilde{x}$, as follows.
\begin{equation}
    \tilde{x} = \begin{pmatrix}
        \sqrt{M} &0\\0&\sqrt{M}
    \end{pmatrix}x\,.
\end{equation}
In this basis, the differential equation becomes
\begin{equation}
    \frac{d\tilde{x}}{dt} = 
    \begin{pmatrix}
        \sqrt{M} &0\\0&\sqrt{M}
    \end{pmatrix} 
    \begin{pmatrix}
        0&I\\-M^{-1}V&-M^{-1}R
    \end{pmatrix}
    \begin{pmatrix}
        \sqrt{M^{-1}} &0\\0&\sqrt{M^{-1}}
    \end{pmatrix} \tilde{x} + \begin{pmatrix}
        \sqrt{M} &0\\0&\sqrt{M}
    \end{pmatrix} b
    \,.
\end{equation}
Since $b=(0,\sqrt{M}f)^T$, we have
\begin{equation}
    \frac{d\tilde{x}}{dt} =
    \begin{pmatrix}
        0&I\\-M^{-1}VM^{-1}&-M^{-1}RM^{-1}
    \end{pmatrix}\tilde{x} - \begin{pmatrix}
        0\\ \sqrt{M^{-1}}f
    \end{pmatrix}\,.
\end{equation}
We will write this as 
\begin{equation}
    \frac{d\tilde{x}}{dt} = (J\tilde{Q} + \tilde{A}_R)\tilde{x} +\tilde{b}\,,
\end{equation}
where $J$ was defined before and
\begin{equation}
    \tilde{Q} = 
    \begin{pmatrix}
        M^{-1}VM^{-1}&0\\0&I
    \end{pmatrix}\,,
\end{equation}
and 
\begin{equation}
    \tilde{A}_R = \begin{pmatrix}
        0&0\\0&-M^{-1}RM^{-1}
    \end{pmatrix}\,.
\end{equation}
Now define the matrix $B$ as follows (see \cite{babbush2023exponential}).
\begin{equation}
    BB^\dag = \sqrt{M^{-1}}V\sqrt{M^{-1}}\,,
\end{equation}
where $B$ is a $d\times d(d+1)/2$ dimensional matrix (recall that $V$ is a $d\times d$ matrix). Let us denote $d_1=d(d+1)/2$ for brevity. Now define the $2d\times (d_1+d)$ matrix $\tilde{B}$ as follows.
\begin{equation}
    \tilde{B} =
    \begin{pmatrix}
        B_{d\times d_1}&0_{d\times d}\\0_{d\times d_1}&I_{d\times d}
    \end{pmatrix}\,.
\end{equation}
Note that we have
\begin{equation}
    \tilde{B}\tilde{B}^\dag = 
    \begin{pmatrix}
        BB^\dag & 0_{d\times d}\\
        0_{d\times d}&I_{d\times d}
    \end{pmatrix}\,,
\end{equation}
which is exactly $\tilde{Q}$. Now define the basis $\tilde{y}$ as $\tilde{B}^\dag\tilde{x}$. In this basis, we have
\begin{equation}
    \frac{d\tilde{y}}{dt} = \tilde{B}^\dag(J\tilde{B}\tilde{B}^\dag+\tilde{A}_R)\tilde{x}+\tilde{B}^\dag\tilde{b}\,.
\end{equation}
It is easy to check that 
\begin{equation}
    \tilde{B}^\dag\tilde{A}_R=\tilde{A}_R=\tilde{A}_R\tilde{Q} = \tilde{A}_R\tilde{B}^\dag\tilde{B}\,.
\end{equation}
Using this, we get
\begin{equation}
    \frac{d\tilde{y}}{dt} = (\tilde{B}^\dag J\tilde{B}+\tilde{A}_R)\tilde{y}+\tilde{B}^\dag\tilde{b}\,,
\end{equation}
where 
\begin{equation}
    \tilde{B}^\dag\tilde{b}=
    \begin{pmatrix}
    0\\ B^\dag\sqrt{M^{-1}}f    
    \end{pmatrix}\,.
\end{equation}
Now that we have the differential equation in the $\tilde{y}$ basis, we can prove the following result. In the following, rather than make queries to a oracle for $V$, we assume access to an oracle for spring constants $S$ as in \cite{babbush2023exponential} (where it is denoted $K$) as well as an oracle for $M$.
\begin{theorem}
When we restrict \prb{class_mech} to the specific case of damped coupled oscillators with forcing, there exists a quantum algorithm that produces an estimate of the kinetic energy at time $T$ to within an additive error $\epsilon$ that has query complexity
\begin{equation}
    T_Q=O(\frac{m}{\epsilon}\poly(k,\log \frac{1}{\epsilon},\log m))
\end{equation}
and gate complexity greater by a factor of at most
\begin{equation}
    \polylog(m,\frac{1}{\epsilon},d,\frac{M_{max}}{M_{\min}})
\end{equation}
where
\begin{equation}
    m=Ts\max\bigg(\frac{R_{\max}}{M_{\min}},\sqrt{\frac{S_{\max}}{M_{\min}}}\bigg)\,,
\end{equation}
and
\begin{equation}
    k=O(\log m(1+\frac{T\|f\|}{\|\tilde{y}(T)\|})\,.
\end{equation}
\end{theorem}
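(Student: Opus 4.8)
The plan is to combine the reformulation of the damped-oscillator dynamics in the $\tilde y$ basis derived above with two external ingredients: the efficient block-encoding of the ``square-root'' matrix $B$ (with $BB^\dag=\sqrt{M^{-1}}V\sqrt{M^{-1}}$) from \cite{babbush2023exponential}, which avoids the $\Theta(1/\epsilon)$ overhead that the generic construction of \thm{sqrt_A} would incur, and the linear-ODE solver of \cite{krovi2022improved} together with the norm-estimation/amplitude-estimation read-out used in the proof of \thm{Ham_canon}. Concretely, I would solve $\frac{d\tilde y}{dt}=\tilde A\tilde y+\tilde B^\dag\tilde b$ with $\tilde A=\tilde B^\dag J\tilde B+\tilde A_R$, form its history state, and extract the kinetic energy from it.

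The first step is a block-encoding of $\tilde A$. Since $\tilde B=\diag(B,I)$ one has $\tilde B^\dag J\tilde B=\left(\begin{smallmatrix}0&B^\dag\\-B&0\end{smallmatrix}\right)$, so it suffices to: block-encode $B$ from the spring-constant oracle for $S$ and the mass oracle for $M$ as in \cite{babbush2023exponential}, whose subnormalization is $O(s\sqrt{S_{\max}/M_{\min}})$ and whose gate cost is only $\polylog(d,1/\epsilon)$; assemble $\left(\begin{smallmatrix}0&B^\dag\\-B&0\end{smallmatrix}\right)$ by a controlled-$B$/controlled-$B^\dag$ sandwich exactly as $U_2$ was built in the proof of \thm{class_mech_qpe}; block-encode the sparse matrix $\tilde A_R$ from the oracles for $R$ and $M$ via \lem{sparse_matrix} (subnormalization $O(sR_{\max}/M_{\min})$); and add the two via \lem{matrix_arithmetics}. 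The combined subnormalization is $\alpha=O\!\big(s\max\{R_{\max}/M_{\min},\sqrt{S_{\max}/M_{\min}}\}\big)$, so the solver of \cite{krovi2022improved} uses $m=O(T\alpha)$ time steps, matching the theorem. I would then note that $\tilde B^\dag J\tilde B$ is anti-Hermitian (since $J^\dag=-J$) while $\tilde A_R\preceq 0$ (since $R\succeq 0$), so $(\tilde A+\tilde A^\dag)/2=\tilde A_R$; the inequality $\|\exp(Bt)\|\le\|\exp(\tfrac{B+B^\dag}{2}t)\|$ of \cite{bhatia1996matrix} then gives $\|\exp(\tilde At)\|\le 1$, i.e.\ $C(\tilde A)\le1$, so the linear system fed to the ODE solver has condition number $\kappa_L=O(m)$ — with no dependence on $\kappa(\sqrt Q)$, unlike \thm{Ham_canon}.

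Next I would prepare $\tilde y(0)=\tilde B^\dag\tilde x(0)=(B^\dag\sqrt M\,q(0),\ \sqrt M\,\dot q(0))^T$ using the oracles for $q(0),\dot q(0),M$ and the block-encoding of $B^\dag$, and the source $\tilde B^\dag\tilde b=(0,\ B^\dag\sqrt{M^{-1}}f)^T$ using the oracle for $f$ and the inequality-testing state preparation of \cite{Ineq_testing} as in \cite{babbush2023exponential} (the relevant norms being assumed known, cf.\ \defn{oracles}). Running \cite{krovi2022improved} yields the history state $\ket{\tilde\phi}=\tfrac1{N_{\tilde\phi}}\sum_{i=0}^m\ket{i,\tilde y_i}$. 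Because $\tilde B^\dag=\diag(B^\dag,I)$ leaves the lower block of $\tilde x$ — which equals $\sqrt M\dot q$ — unchanged, the kinetic energy is $K=\tfrac12\dot q^\dag M\dot q=\tfrac12\tilde y^\dag P\tilde y$ with $P=\diag(0,I)$, exactly as in \thm{class_mech_qpe}. So I would block-encode $\ket{m}\!\bra{m}\otimes P$, estimate $\bra{\tilde\phi}(\ket{m}\!\bra{m}\otimes P)\ket{\tilde\phi}$ by amplitude estimation, estimate $N_{\tilde\phi}$ via \thm{norm_estimate}, and output $\hat K=\tfrac12N_{\tilde\phi}^{2}\bra{\tilde\phi}(\ket{m}\!\bra{m}\otimes P)\ket{\tilde\phi}$. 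The error decomposition is then the one in the proof of \thm{Ham_canon}: taking the solver error, the norm-estimation precision $\gamma$ and the amplitude-estimation precision all $\Theta(\epsilon)$ gives $|\hat K-K|\le\epsilon$, at a query cost dominated by norm estimation ($\kappa_L/\gamma=O(m/\epsilon)$) and amplitude estimation ($O(1/\epsilon)$ repetitions of an $O(m)$-cost circuit), i.e.\ $O(\tfrac m\epsilon\poly(k,\log\tfrac1\epsilon,\log m))$ with $k=O(\log m\,(1+T\|f\|/\|\tilde y(T)\|))$ the Taylor-truncation order, and an extra $\polylog(m,1/\epsilon,d,M_{\max}/M_{\min})$ factor in gate complexity. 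Because the $B$-block-encoding carries no $1/\epsilon$ cost (unlike \thm{sqrt_A}), the overall error dependence is $1/\epsilon$ rather than the $1/\epsilon^2$ of \thm{class_mech_qpe}.

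I expect the main obstacle to be the block-encoding bookkeeping: importing the $B$-construction of \cite{babbush2023exponential} and tracking the subnormalizations through the controlled-$B$ sandwich and the subsequent addition so that they collapse to exactly $s\max\{R_{\max}/M_{\min},\sqrt{S_{\max}/M_{\min}}\}$, and verifying that $\tilde y(0)$ — which involves the rectangular map $\tilde B^\dag$ — can be prepared within the model of \defn{oracles} with a known normalization. Once these are settled, the rest follows mechanically from \thm{Ham_canon}, \thm{norm_estimate}, and the $C(\tilde A)\le1$ argument.
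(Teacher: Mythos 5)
Your proposal follows the paper's proof essentially step for step: solve the ODE in the $\tilde y = \tilde B^\dag\tilde x$ basis, block-encode $\tilde B^\dag J\tilde B$ via the efficient $B$-construction of \cite{babbush2023exponential} (avoiding the $1/\epsilon$ overhead of \thm{sqrt_A}), block-encode the sparse $\tilde A_R$ and add, use $\|\exp(\tilde At)\|\le\|\exp(\tilde A_Rt)\|\le 1$ from the Hermitian/anti-Hermitian split to get $C(\tilde A)\le 1$, then estimate the kinetic energy from the history state exactly as in \thm{Ham_canon}. The only substantive addition is your explicit controlled-$B$/controlled-$B^\dag$ sandwich for assembling $\begin{pmatrix}0&B^\dag\\-B&0\end{pmatrix}$ (the paper simply cites \cite{babbush2023exponential} for that block-encoding directly), and one small slip you share with the paper's text: since $\tilde B^\dag=\diag(B^\dag,I)$ acts as the identity on the lower $d$ components, $\tilde B^\dag\tilde b$ is simply $(0,-\sqrt{M^{-1}}f)^T$, with no extra factor of $B^\dag$, though this has no bearing on the complexity claim.
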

\begin{proof}
The main idea is to solve the differential equation for $\tilde{y}$ and obtain the history state. From the history state, we can get the kinetic energy by block-encoding the projector onto the velocity space tensored with the final time i.e.,
\begin{equation}
    U_P = \bra{m}\ket{m}\otimes P_V\,,
\end{equation}
where $P_V$ is the projector on to the velocity component of $\tilde{y}$. The differential equation can be solved using the algorithm of \cite{krovi2022improved} provided some conditions are satisfied. First, we need a block-encoding of the matrix
\begin{equation}\label{eq:A_y_tilde}
    \tilde{B}^\dag J\tilde{B}+\tilde{A}_R = 
    \begin{pmatrix}
        0&B^\dag\\
        -B&-\sqrt{M^{-1}}R\sqrt{M^{-1}}
    \end{pmatrix}
    \,.
\end{equation}
This can be done by block-encoding $\tilde{B}^\dag J\tilde{B}$ and $\tilde{A}_R$ separately and adding them using matrix arithmetic. The block-encoding of $\tilde{B}^\dag J\tilde{B}$ is described in \cite{babbush2023exponential} and takes $O(\log d + \log^2(M_{\max}/M_{\min}\epsilon))$ and provides a block-encoding with parameters
\begin{equation}
    (\sqrt{2sS_{\max}/M_{\min}},2\log d+\log1/\epsilon,\epsilon)\,.
\end{equation}
The initial state preparation and an oracle for $\sqrt{M^{-1}}f$ is also similar to \cite{babbush2023exponential}. Finally, we need to compute the norm of the matrix exponential of the matrix in \eq{A_y_tilde}. We can see that this matrix has an anti-Hermitian and a Hermitian components. The anti-Hermitian component is $\tilde{B}^\dag J\tilde{B}$ and the Hermitian part is $\tilde{A}_R$. As in the previous sections, we can bound the norm of the exponential by
\begin{equation}
    \exp((\tilde{B}^\dag J\tilde{B}+\tilde{A}_R)t)\leq \exp(\tilde{A}_Rt)\leq 1\,,
\end{equation}
since $R$ is a positive matrix. This gives the stated complexity. The analysis of the error obtained in the kinetic energy estimation is the same as before.
\end{proof}

\subsection{Hardness of kinetic energy estimation}\label{sec:hardness}
In this subsection, we prove the main hardness result that estimating the kinetic energy of a $d$ dimensional coupled oscillator system governed by \eq{M_R_V_b_eq} i.e.,
\begin{equation}
    M\ddot{q} + R\dot{q} + Vq = 0\,,
\end{equation}
is $\BQP$ hard when the damping strength (or the norm of $R$) is bounded by an inverse polynomial in $\log d$. To prove this result, we need the following result proved in \cite{babbush2023exponential}, which shows the following. Here we give an equivalent formulation of the hardness result.
\begin{theorem}\label{thm:R_eq_0_hardness}
There exists a family of instances of coupled oscillators such that it is $\BQP$ hard to estimate the normalized kinetic energy $K/E$ to within additive precision $\epsilon$, where $K$ is the kinetic energy and $E$ is the total energy.
\end{theorem}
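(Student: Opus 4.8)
The plan is to obtain this statement as a direct reformulation of the $\BQP$-completeness result of \cite{babbush2023exponential}. Recall from \sec{quad} that for an undamped network the change of variables $y=\sqrt{Q}z$, with $z=(q,p)^T$ and $Q=\diag(V,M^{-1})$, turns the oscillator dynamics $M\ddot q=-Vq$ into a Schr\"odinger equation $\dot y=iHy$ with Hermitian generator $H=i\sqrt{Q}J\sqrt{Q}$; the two blocks of $y$ are $\sqrt{V}q$ and $\sqrt{M}\dot q$, so $\|y\|^2=q^TVq+\dot q^TM\dot q=2E$ (a conserved quantity) while the squared norm of the velocity block is $2K$. Consequently $K(t)/E$ is exactly the probability $\|\Pi_{\mathrm{vel}}\,y(t)\|^2/\|y(t)\|^2$ of the velocity subspace in the normalized state $y(t)/\|y(t)\|$. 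Now \cite{babbush2023exponential} compiles an arbitrary polynomial-size quantum circuit on $n$ qubits into a coupled-oscillator network of $\poly(n)$ masses and springs, with $\poly(n)$-bounded sparsity, spring constants, masses, and simulation time $T$, chosen so that $H$ simulates the circuit (via a clock/history-state construction) and the velocity-subspace probability of $y(T)$ reproduces the circuit's acceptance probability up to a constant gap.

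First I would fix such a family of instances: by the constant promise gap of the underlying $\BQP$ computation there are thresholds $a>b$ with $a-b=\Omega(1)$ so that $K(T)/E\ge a$ on yes-instances and $K(T)/E\le b$ on no-instances. Second, any algorithm that estimates $K/E$ at time $T$ on this family to additive error $\epsilon<(a-b)/2$ decides the $\BQP$-complete language by comparing its output to the midpoint $(a+b)/2$; since $\epsilon$ can be taken to be any fixed constant below the (constant) quantity $(a-b)/2$, the estimation task is $\BQP$-hard. For completeness the reverse containment also holds --- simulate $H$ by Hamiltonian simulation and apply amplitude estimation to the velocity block, placing the problem in $\BQP$ --- but this is not needed for the statement.

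All the technical work is in the encoding step: verifying that the oscillator instance realizing a given circuit keeps polynomially bounded sparsity, spring constants, masses, and evolution time while preserving a constant gap in $K/E$. This is precisely what \cite{babbush2023exponential} establishes, so the main obstacle is already discharged there; the only new content is the observation that their decision problem is literally the task of estimating $K/E$ to constant additive precision, which is why we present it as an equivalent reformulation rather than a new result.
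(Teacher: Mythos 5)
Your proposal is correct and matches the paper's treatment: the paper presents this theorem as an equivalent reformulation of the $\BQP$-hardness result of \cite{babbush2023exponential}, with its accompanying remark noting exactly your key observation that the normalized kinetic energy $K/E$ equals the output probability of the compiled quantum circuit, so estimating it to constant additive precision decides the underlying promise problem. Your write-up simply makes the change of basis $y=\sqrt{Q}z$ and the threshold/gap argument explicit, which is the same route.
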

\begin{remark}
The above formulation follows from the fact that in \cite{babbush2023exponential}, it was shown that the normalized kinetic energy is equal to the output probability of a quantum circuit and estimating the latter to within additive precision is $\BQP$ hard.   
\end{remark}
\begin{remark}
It was shown in \cite{babbush2023exponential} that the instances can be chosen such that the masses are all equal to identity and the matrix of coupling constants $V$ has a condition number independent of the number of oscillators. The initial conditions can also be chosen to be $q(0)=(0,\dots,0)^T$ and $\dot{q}(0)=(1,-1,0,\dots,0)^T$.
\end{remark}
When the masses are all unity, the kinetic energy can be written as
\begin{equation}
    K=\frac{1}{2}\dot{q}^\dag \dot{q}\,,
\end{equation}
where $\dot{q}$ is given by
\begin{equation}
    \dot{q}(t) = \cos(\sqrt{V}t)\dot{q}(0)\,.
\end{equation}
Before we prove our result, we need the following lemma which derives a formula for the exponential of a matrix of the form $JQ$, where $Q$ is block diagonal.
\begin{lemma}\label{lem:exp_JQ}
Let 
\begin{equation}
    Q=\begin{pmatrix}
        V&0\\0&I
    \end{pmatrix}\,,
\end{equation}
and (as before)
\begin{equation}
    J=\begin{pmatrix}
        0&I\\-I&0
    \end{pmatrix}\,.
\end{equation}
Then we have
\begin{equation}
    \exp(JQt) = 
    \begin{pmatrix}
    \cos(\sqrt{V}t) & \sqrt{V^{-1}}\sin(\sqrt{V}t)\\
    -\sqrt{V}\sin(\sqrt{V}t) & \cos(\sqrt{V}t)
    \end{pmatrix}\,.
\end{equation}
\end{lemma}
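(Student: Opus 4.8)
The plan is to verify the claimed formula for $\exp(JQt)$ by showing that the right-hand side matrix satisfies the same linear ODE and initial condition as $\exp(JQt)$, which by uniqueness of solutions forces equality. Concretely, let $\Phi(t)$ denote the claimed right-hand side. I would first check $\Phi(0) = I$, which is immediate since $\cos(0)=I$, $\sin(0)=0$. Then I would compute $\Phi'(t)$ entrywise using the functional calculus for the positive matrix $V$: for instance $\tfrac{d}{dt}\cos(\sqrt V t) = -\sqrt V\sin(\sqrt V t)$ and $\tfrac{d}{dt}\big(\sqrt{V^{-1}}\sin(\sqrt V t)\big) = \cos(\sqrt V t)$, where all these operators commute because they are functions of the single operator $V$. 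The goal is to verify $\Phi'(t) = JQ\,\Phi(t)$, where
\begin{equation}
    JQ = \begin{pmatrix} 0 & I \\ -I & 0\end{pmatrix}\begin{pmatrix} V & 0 \\ 0 & I\end{pmatrix} = \begin{pmatrix} 0 & I \\ -V & 0\end{pmatrix}\,.
\end{equation}

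Carrying out the block multiplication $JQ\,\Phi(t)$ and comparing block-by-block with $\Phi'(t)$ is the bulk of the work, but it is routine: the $(1,1)$ block of $JQ\,\Phi$ is $-\sqrt V\sin(\sqrt V t)$, matching $\tfrac{d}{dt}\cos(\sqrt V t)$; the $(1,2)$ block is $\cos(\sqrt V t)$, matching $\tfrac{d}{dt}\big(\sqrt{V^{-1}}\sin(\sqrt V t)\big)$; the $(2,1)$ block is $-V\cos(\sqrt V t) = -\sqrt V\cdot\sqrt V\sin(\sqrt V t)\cdot(\text{sign check})$ — here one uses $-V\cos(\sqrt V t)$ versus $\tfrac{d}{dt}\big(-\sqrt V\sin(\sqrt V t)\big) = -V\cos(\sqrt V t)$, which agrees; and the $(2,2)$ block is $-\sqrt V\sin(\sqrt V t)$ versus $\tfrac{d}{dt}\cos(\sqrt V t)$, which agrees. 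Since $\Phi$ and $t\mapsto\exp(JQt)$ both solve $X'(t) = JQ\,X(t)$ with $X(0)=I$, the Picard–Lindelöf uniqueness theorem for linear ODEs gives $\Phi(t) = \exp(JQt)$ for all $t$.

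An alternative I might mention is to diagonalize directly: conjugating by $\mathrm{diag}(\sqrt V, I)$ sends $JQ$ to $\sqrt Q J\sqrt Q$ (with $\sqrt Q = \mathrm{diag}(\sqrt V, I)$), which is anti-Hermitian of the form $\begin{pmatrix} 0 & \sqrt V \\ -\sqrt V & 0\end{pmatrix}$, whose exponential is $\begin{pmatrix}\cos(\sqrt V t) & \sin(\sqrt V t)\\ -\sin(\sqrt V t) & \cos(\sqrt V t)\end{pmatrix}$ by the standard $2\times 2$ rotation identity applied through the functional calculus; undoing the conjugation recovers the stated formula. I don't expect any genuine obstacle here — the only thing to be careful about is keeping track of which factor of $\sqrt V$ or $\sqrt{V^{-1}}$ appears where, and using that all the relevant operators are functions of $V$ and hence commute, so that the scalar trigonometric identities lift verbatim to the matrix setting.
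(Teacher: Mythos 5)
Your proof is correct, but it takes a different route from the paper. The paper derives the formula constructively from the power series: it observes that $(JQ)^2 = I\otimes(-V)$, splits the Taylor series of $\exp(JQt)$ into even and odd powers, and resums them into $\cos(\sqrt{V}t)$ and $\sin(\sqrt{V}t)$ blocks. You instead take the stated matrix as a candidate $\Phi(t)$, check $\Phi(0)=I$ and $\Phi'(t)=JQ\,\Phi(t)$ block by block (your block computations are all correct, including the $(2,2)$ entry $-V\sqrt{V^{-1}}\sin(\sqrt{V}t)=-\sqrt{V}\sin(\sqrt{V}t)$), and conclude by uniqueness of solutions of the linear initial value problem. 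Your verification is shorter and essentially mechanical, at the cost of needing the answer in advance, whereas the paper's series computation produces the formula without guessing; your alternative sketch, conjugating $JQ$ by $\sqrt{Q}=\diag(\sqrt{V},I)$ to the anti-Hermitian rotation generator $\sqrt{Q}J\sqrt{Q}$ and exponentiating, is also correct and is actually closest in spirit to how the paper handles $\exp(At)$ elsewhere (e.g.\ in the proof of \thm{Ham_canon}). The only point worth making explicit in either of your arguments is that $V$ is positive definite, so that $\sqrt{V}$, $\sqrt{V^{-1}}$ and the functional calculus you invoke (and the commutation of all these functions of $V$) are legitimate; this is the standing assumption in the setting where the lemma is applied.
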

\begin{proof}
Exponentiating $JQt$, we have
\begin{equation}\label{eq:exp}
    \exp(JQt) = \sum_{k=0}^\infty \frac{(JQt)^k}{k!}\,.
\end{equation}
To calculate this, observe that
\begin{equation}
    (JQ)^2 = 
    \begin{pmatrix}
    0 & I\\-V & 0    
    \end{pmatrix}
    \begin{pmatrix}
    0 & I\\-V & 0    
    \end{pmatrix} = 
    \begin{pmatrix}
    -V & 0\\0 & -V    
    \end{pmatrix} =
    I\otimes (-V)\,,
\end{equation}
where $I$ is a $2\times 2$ identity matrix. Now consider one of the terms of the right hand side of $\eq{exp}$. When $k=2\ell$ is even, we have
\begin{equation}
    (JQt)^k = t^{2\ell} (-1)^\ell (I\otimes V^\ell) \,.
\end{equation}
Therefore, the even terms in \eq{exp} are
\begin{equation}
    \sum_{\ell=0}^\infty \frac{t^{2\ell}(-1)^\ell}{(2\ell)!} (I\otimes \sqrt{V}^{2\ell}) = I\otimes \cos(\sqrt{V}t)\,.
\end{equation}
Similarly, the odd terms can be calculated to be
\begin{equation}
    (JQt)^{2\ell+1} = JQ (JQ)^{2\ell} = JQ t^{2\ell+1} (-1)^\ell (I\otimes \sqrt{V}^{2\ell}) \,.
\end{equation}
This can be written as
\begin{equation}
    t^{2\ell+1} (-1)^\ell J 
    \begin{pmatrix}
    \sqrt{V} &0 \\0 & \sqrt{V^{-1}}    
    \end{pmatrix}
    (I\otimes \sqrt{V}^{2\ell+1})\,.
\end{equation}
Therefore, the odd terms can be written as
\begin{equation}
    J\begin{pmatrix}
    \sqrt{V} &0 \\0 & \sqrt{V^{-1}}    
    \end{pmatrix}
    \sum_{\ell=0}^\infty \frac{t^{2\ell+1}(-1)^\ell}{(2\ell+1)!}(I\otimes \sqrt{V}^{2\ell+1}) = J\begin{pmatrix}
    \sqrt{V} &0 \\0 & \sqrt{V^{-1}}    
    \end{pmatrix} (I\otimes \sin(\sqrt{V}t))\,.
\end{equation}
Putting them together, we get 
\begin{equation}
    \exp(JQt) = \begin{pmatrix}
        \cos(\sqrt{V}t) & \sqrt{V^{-1}}\sin(\sqrt{V}t)\\
        -\sqrt{V}\sin(\sqrt{V}t)&\cos(\sqrt{V}t)
    \end{pmatrix}\,.
\end{equation}
\end{proof}

From the above lemma, we see that for the initial state given by $q(0)=(0,\dots,0)^T$ and $\dot{q}(0)=(1,-1,0,\dots,0)^T$, we have
\begin{equation}
    \begin{pmatrix}
        q(t)\\\dot{q}(t)
    \end{pmatrix} = \exp(JQt)\begin{pmatrix}
        q(0)\\\dot{q}(0)
    \end{pmatrix}\,,
\end{equation}
where
\begin{equation}
    Q = \begin{pmatrix}
        V & 0 \\0 & I
    \end{pmatrix}\,.
\end{equation}
Using \lem{exp_JQ}, we can write
\begin{equation}
    \dot{q}(t)=\cos(\sqrt{V}t)\dot{q}(0)\,.
\end{equation}
We can now prove our main hardness result.
\begin{theorem}\label{thm:hardness}
For the classical mechanical system defined in \prb{class_mech}, when we restrict to damped coupled oscillators, let $K(T)$ denote the kinetic energy at time $T$ and $E$ denote the total energy at time $t=0$. There exists a family of instances such that it is $\mathsf{BQP}$ hard to estimate the normalized kinetic energy $K(T)/E$ to within an additive error $\epsilon$ when $T=\polylog(d)$ and $\|R\|T\leq \epsilon$ (where $d$ is the dimension of the system).
\end{theorem}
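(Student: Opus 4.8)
The plan is to reduce from the undamped problem. By \thm{R_eq_0_hardness} and the remarks following it, there is a family of coupled oscillator instances on which estimating the normalized kinetic energy $K(T)/E$ to additive precision $\epsilon$ is $\BQP$ hard, and these instances can be chosen so that $M=I$, the coupling matrix $V$ has condition number $\kappa(V)=O(1)$, the time horizon is $T=\polylog(d)$, and the initial data is $q(0)=(0,\dots,0)^T$, $\dot q(0)=(1,-1,0,\dots,0)^T$; in particular $E=\tfrac12\|\dot q(0)\|^2=1$ and $\|x(0)\|^2=2$. I would take precisely these instances, switch on a damping matrix $R$ with $\|R\|\,T\le\epsilon$, and argue that the normalized kinetic energy at time $T$ is stable under this perturbation, so that hardness transfers.

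Concretely, write the damped dynamics in the $x$ basis as $\dot x=Ax$ with $A=A_0+A_R$, where $A_0=JQ$, $Q=\diag(V,I)$, and $A_R=\diag(0,-R)$, so that $x_0(T)=\exp(A_0T)x(0)$ is the undamped trajectory and $x_R(T)=\exp(AT)x(0)$ the damped one. Duhamel's formula gives
\[
\exp(AT)-\exp(A_0T)=\int_0^T \exp(A_0(T-s))\,A_R\,\exp(As)\,\d s\,,
\]
and I would bound the three factors inside the integral separately: $\|\exp(A_0\tau)\|\le\kappa(\sqrt{Q})$ since $\exp(JQ\tau)$ is conjugate via $\sqrt{Q}$ to a unitary (this is the computation behind \eq{C_A}); $\|A_R\|=\|R\|$; and $\|\exp(As)\|\le C(A)\le\kappa(\sqrt{Q})$ by the same argument (conjugating by $\sqrt{Q}$ splits $A$ into an anti-Hermitian part and the negative-definite part $A_R$, and $R\ge 0$ forces $\|\exp(A_R t)\|\le 1$). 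Hence $\|\exp(AT)-\exp(A_0T)\|\le\kappa(\sqrt{Q})^2\|R\|T$, and since $\kappa(\sqrt{Q})=O(1)$ on these instances and $\|R\|T\le\epsilon$, this is $O(\epsilon)$; therefore $\|x_R(T)-x_0(T)\|=O(\epsilon)\|x(0)\|=O(\epsilon)$.

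Since $\dot q(T)$ is the lower half of $x(T)$, it follows that $\|\dot q_R(T)-\dot q_0(T)\|=O(\epsilon)$, while $\|\dot q_R(T)\|,\|\dot q_0(T)\|\le\kappa(\sqrt{Q})\|x(0)\|=O(1)$, so
\begin{align*}
|K_R(T)-K_0(T)| &=\tfrac12\big|\,\|\dot q_R(T)\|^2-\|\dot q_0(T)\|^2\,\big|\\
&\le\tfrac12\,\|\dot q_R(T)-\dot q_0(T)\|\,\big(\|\dot q_R(T)\|+\|\dot q_0(T)\|\big)=O(\epsilon)\,.
\end{align*}
As $E=1$, the same estimate holds for $|K_R(T)/E-K_0(T)/E|$. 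Consequently, for $\epsilon$ at most a sufficiently small constant, an algorithm that estimates the damped normalized kinetic energy to additive error $\epsilon$ also estimates the undamped one to additive error $O(\epsilon)$, which contradicts \thm{R_eq_0_hardness}; and $\|R\|\le\epsilon/T$ is inverse in $\polylog(d)$ as claimed.

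The main obstacle is making the perturbation bound genuinely linear in $\|R\|T$ without picking up factors growing with $d$: this requires that both $\exp(A_0 t)$ and the damped propagator $\exp(At)$ have $O(1)$ operator norm on $[0,T]$, which is exactly where we use the constant condition number of $V$ on the hard instances together with the fact (from the proof of \thm{Ham_canon}) that positive damping only shrinks the propagator norm. Everything else is bookkeeping with Duhamel's formula and the elementary inequality $|a^2-b^2|\le|a-b|(|a|+|b|)$.
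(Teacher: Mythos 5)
Your proposal is correct in its overall reduction (perturb the $\BQP$-hard undamped instances of \thm{R_eq_0_hardness} by a weak damping term and show the normalized kinetic energy moves by $O(\epsilon)$), but the technical route differs from the paper's. You bound the propagator difference in operator norm via Duhamel, $\|\exp(AT)-\exp(JQT)\|\leq \kappa(\sqrt{Q})^2\|R\|T$, and then invoke the special structure of the hard instances ($M=I$, constant condition number of $V$, $E=1$) to turn this into an $O(\epsilon)$ additive error on $K(T)/E$. The paper instead inserts Dyson's formula directly inside the quadratic form $x(0)^T\exp(-QJt+A_Rt)\,P\,\exp(JQt+A_Rt)x(0)$ and uses the exact identities $\sqrt{Q^{-1}}P\sqrt{Q^{-1}}=P$ and $\sqrt{Q}A_R\sqrt{Q^{-1}}=A_R$ (valid because the lower block of $Q$ is the identity), so that every conjugating factor of $\sqrt{Q}$ is absorbed into the initial vector, giving $|K-K_R|\leq 2E\|R\|t+E\|R\|^2t^2$ via $\|\sqrt{Q}x(0)\|^2=2E$. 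That bound carries no condition-number or scale factors and is automatically normalized by $E$, so the perturbation estimate holds for \emph{every} instance, whereas yours is tailored to the hard family. One caveat about your version: $\kappa(\sqrt{Q})$ with $Q=\diag(V,I)$ equals $\max(\|\sqrt{V}\|,1)\max(\|\sqrt{V^{-1}}\|,1)$, which is controlled by the \emph{spectrum} of $V$, not merely its condition number; the remark you cite only asserts $\kappa(V)=O(1)$, so you are implicitly also using that $\|V\|$ and $\|V^{-1}\|$ are $O(1)$ on the hard instances (true for the known construction with bounded spring constants and unit masses, but worth stating, and precisely the dependence the paper's conjugation trick avoids). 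With that point made explicit, your argument establishes the theorem as stated, since it only requires exhibiting one hard family.
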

\begin{proof}
When all the mass terms are unity, the kinetic energy can be written as 
\begin{equation}
    K(t) = \frac{1}{2}\dot{q}^T(t)\dot{q}(t)\,,
\end{equation}
which can be written as
\begin{equation}
    K(t)=\frac{1}{2}x(t)^T Px(t)\,,
\end{equation}
where $x(t) = (q(t),\dot{q}(t))^T$ and $P$ is the projector on the subspace spanned by $\dot{q}(t)$. When $R=0$, we also have that
\begin{equation}
    x(t) = \exp(JQt)x(0)\,,
\end{equation}
which gives
\begin{equation}
    K(t)=\frac{1}{2}x(0)^T\exp(-QJt)P\exp(JQt)x(0)\,,
\end{equation}
using the fact that $J^T=-J$ and $Q^T=Q$. Similarly, for non-zero $R$, we have
\begin{equation}\label{eq:K_R}
    K_R(t) = \frac{1}{2}x(0)^T\exp(-QJt+A_Rt)P\exp(JQt + A_Rt)x(0)\,,
\end{equation}
where 
\begin{equation}
    A_R = \begin{pmatrix}
        0 &0\\0&-R
    \end{pmatrix}\,.
\end{equation}
Now using Dyson's formula, we have
\begin{equation}
    \exp(JQt + A_Rt) = \exp(JQt) + \int_0^t \exp(JQ(t-s))A_R\exp(JQs+A_Rs)ds\,,
\end{equation}
and similarly for $\exp(-QJt+A_Rt)$. Plugging this into \eq{K_R} above, we get 
\begin{align}
    &2K_R(t) \\
    &= x(0)^T\exp(-QJt)P\exp(JQt)x(0)\label{eq:first_term}\\
    & + \int_0^t x(0)^T\exp(-QJt)P\exp(JQ(t-s))A_R\exp(JQs+A_Rs)x(0)ds\label{eq:second_term}\\
    & + \int_0^t x(0)^T\exp(-QJ(t-s))A_R\exp(-QJs+A_Rs)P\exp(JQt)x(0)ds\label{eq:third_term}\\
    & + \int_0^t\int_0^t x(0)^T\exp(-QJ(t-s_1))A_R\exp(-QJs_1+A_Rs_1)P\nonumber \\
    &\exp(JQ(t-s_2)A_R\exp(JQs_2+A_Rs_2)x(0)ds_1ds_2\label{eq:fourth_term}\,.
\end{align}
The first term \eq{first_term} can be seen to be $K(t)$. The second term \eq{second_term} is
\begin{align}
    & \int_0^t x(0)^T\exp(-QJt)P\exp(JQ(t-s))A_R\exp(JQs+A_Rs)x(0)ds\\
    &= \int_0^t x(0)^T\sqrt{Q}\exp(-\sqrt{Q}J\sqrt{Q}t)\sqrt{Q^{-1}}P\sqrt{Q^{-1}}\exp(\sqrt{Q}J\sqrt{Q}(t-s))\nonumber \\
    &\sqrt{Q}A_R\sqrt{Q^{-1}}\exp(\sqrt{Q}J\sqrt{Q}s+\sqrt{Q}A_R\sqrt{Q^{-1}}s)\sqrt{Q}x(0) ds\,.
\end{align}
Note that since 
\begin{equation}
    P=\begin{pmatrix}
        0 & 0\\0&I
    \end{pmatrix}\,,\hspace{0.1in} 
    Q=\begin{pmatrix}
       V & 0\\0&I
    \end{pmatrix}\,
    \hspace{0.1in}\text{and}
    \hspace{0.1in}
    A_R=\begin{pmatrix}
        0 & 0\\0&-R
    \end{pmatrix}\,,
\end{equation}
we have
\begin{equation}
    \sqrt{Q^{-1}}P\sqrt{Q^{-1}} = P\,,\hspace{0.1in}\text{and}\hspace{0.1in}\sqrt{Q}A_R\sqrt{Q^{-1}} = A_R\,.
\end{equation}
The above equation gives us
\begin{equation}
    \|\exp(\sqrt{Q}J\sqrt{Q}s+\sqrt{Q}A_R\sqrt{Q^{-1}}s)\|\leq \|\exp(A_Rs)\|\leq 1\,,
\end{equation}
where we have used the fact that $\sqrt{Q}J\sqrt{Q}$ is anti-Hermitian and that $A_R$ is negative definite. All of this gives us that the norm of the second term is
\begin{equation}
    \|x(0)\sqrt{Q}\|^2\|R\|t\,.
\end{equation}
Now, we have that 
\begin{equation}
    \|x(0)\sqrt{Q}\|^2 = x(0)^TQx(0) = q^TVq + \dot{q}^T\dot{q} = 2E\,,
\end{equation}
where $E$ is the total energy of the classical mechanical system. The norm of the third term \eq{third_term} and the fourth term \eq{fourth_term} can be bounded in a similar way and we get
\begin{equation}
   |K-K_R|\leq  2E\|R\|t + E\|R\|^2t^2\,.
\end{equation}
Therefore
\begin{equation}
    \frac{1}{E}|K-K_R|\leq 2\|R\|t(1+\frac{1}{2}\|R\|t)\,.
\end{equation}

Suppose there is a quantum algorithm that can compute an estimate $\hat{K}$ such that 
\begin{equation}
    |\hat{K} - \frac{K_R}{E}|\leq \epsilon\,,
\end{equation}
Then $\hat{K}$ also satisfies
\begin{equation}
    |\hat{K} - \frac{K}{E}|\leq |\hat{K} - \frac{K_R}{E}| + \frac{1}{E}|K-K_R| \leq \epsilon + 2\|R\|t(1+\frac{1}{2}\|R\|t) \,,
\end{equation}
Therefore, if 
\begin{equation}
    2\|R\|t(1+\frac{1}{2}\|R\|t)\leq \epsilon\,,
\end{equation}
i.e., if $\|R\|t\leq \epsilon/4$, then we have
\begin{equation}
    |\hat{K} - \frac{K}{E}|\leq 2\epsilon\,.
\end{equation}
Thus the estimate $\hat{K}$ also gives an additive approximation to the kinetic energy when $R=0$, which is a $\BQP$ hard problem by \thm{R_eq_0_hardness}.
\end{proof}

\section{Quantum algorithms for optimal control}\label{sec:optimal_control}
In this section, we present quantum algorithms for the vector and matrix Riccati equations. Like our quantum algorithm for the estimation of kinetic energy of damped mechanical systems, our quantum algorithm for the vector Riccati equation highlights the fact that it is sometimes useful to have the history state that gives access to the unnormalized solution state. Note that in this section, we reuse previous variables such as $x$, $y$, $A$ and $b$ with new definitions.

\subsection{Vector Riccati equation}
Recall the vector Riccati equation from \eq{Riccati_ivp} i.e.,
\begin{equation}
    \dot{y}=F_0+F_1y+yF_2y+yF_3\,,\hspace{1cm} y(0)=y_0\,,
\end{equation}
where $y$ and $F_0$ are $d$ dimensional column vectors, $F_1$ is a $d\times d$ matrix, $F_2$ is a $d$ dimensional row vector and $F_3$ is a complex number. Suppose the linear differential equation obtained by the M\"{o}bius transformation from \sec{Mobius} is
\begin{equation}\label{eq:LODE}
    \dot{x}=Ax+b\,,\hspace{1in}x(0)=x_0\,,
\end{equation}
where
\begin{equation}
    A=\begin{pmatrix}
        F_1&F_0\\F_2&F_3
    \end{pmatrix}\,,
\end{equation}
and 
\begin{equation}
    b=\begin{pmatrix}
        F_1w\\F_2w
    \end{pmatrix}\,,
\end{equation}
where $w$ is any time-independent vector. Here
\begin{equation}
    x=(u,v)^T\,,
\end{equation}
is such that $y=(u+w)v^{-1}$. We now show how to construct oracles to access the entries of $A$ and prepare a state proportional to $b$.
\begin{lemma}\label{lem:A_oracle}
Using the oracles for $F_i$, $i=0,1,2,3$, we can construct an oracle $O_A$ and $O_b$ with a constant overhead in gate complexity.
\end{lemma}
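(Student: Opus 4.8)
The plan is to assemble the sparse-access oracles for the $(d+1)\times(d+1)$ matrix $A$ (whose four blocks are $F_1,F_0,F_2,F_3$, as recalled above) out of those of the individual $F_m$'s, and then to get $O_b$ almost for free by observing that the M\"obius constant term is simply $A$ applied to a vector of our choosing. For the value oracle I would, on input $(i,j)\in\{1,\dots,d+1\}^2$, compute the two bits $[i\le d]$ and $[j\le d]$ with an $O(\log d)$-gate comparator and use them to control a four-way multiplexer that calls $O_a$ of $F_1$ on $\ket{i,j}$, $O_a$ of $F_0$ on $\ket{i,1}$, $O_a$ of $F_2$ on $\ket{1,j}$, or writes the hard-wired bit string of the scalar $F_3$ into the value register. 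For the row oracle $O_r^A$ I would adopt the convention that, in every row that meets column $d+1$, the column-$(d+1)$ entry is declared to be the \emph{first} nominal nonzero of that row; this lets one build $O_r^A$ without ever testing whether an $F_0$- or $F_3$-entry happens to vanish (at the price of a harmless over-count of the sparsity). Then $O_r^A\ket{i,k}$ returns $d+1$ when $k=1$ and, otherwise, calls $O_r$ of $F_1$ on $\ket{i,k-1}$ if $i\le d$ and $O_r$ of $F_2$ on $\ket{1,k-1}$ if $i=d+1$, passing the ``fewer-than-$k$-nonzeros'' flag through after the obvious adjustment from ambient dimension $d$ to $d+1$; the column oracle $O_c^A$ is built symmetrically, delegating columns $1,\dots,d$ to $O_c$ of $F_1$ and column $d+1$ to $O_c$ of $F_0$. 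Each query to $O_A$ thus costs $O(1)$ calls to the $O_r,O_c,O_a$ oracles of \defn{oracles} for the $F_m$'s plus $O(\polylog d)$ reversible elementary gates, and one reads off $s_r(A)\le 1+\max\{s_r(F_1),\mathrm{nnz}(F_2)\}$, $s_c(A)\le 1+\max\{s_c(F_1),\mathrm{nnz}(F_0)\}$.

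For $O_b$, note that by \prop{Riccati_ODE} we have $b=(F_1w,F_2w)^T=A\,(w,0)^T$ with $w$ a free time-independent vector; choosing $w$ to be a computational basis vector makes $\ket{(w,0)^T}$ preparable with no gates. Feeding $O_A$ into \lem{sparse_matrix} produces an $(\alpha,\cdot,\cdot)$-block-encoding $U_A$ of $A$ with $\alpha=\sqrt{s_r(A)s_c(A)}$, and then the product rule of \lem{matrix_arithmetics} combines $U_A$ with the trivial preparation of $(w,0)^T$ into a block-encoding of the vector $b$; equivalently, running $U_A$ on the block-encoding ancillas together with $\ket{(w,0)^T}$ and post-selecting the ancillas on $\ket0$ yields $b/\|b\|$. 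If the solver of \cite{krovi2022improved} requires a deterministic preparation of $b/\|b\|$ (and the number $\|b\|$) rather than a block-encoding of $b$, one wraps this in amplitude amplification (and amplitude estimation of the post-selection probability for $\|b\|$); apart from that wrapper --- whose cost is the standard factor $\sqrt{s_r(A)s_c(A)}/\|b\|$, which is $O(1)$ if $w$ is taken so that $\|A(w,0)^T\|=\Theta(\alpha)$ --- the construction of $O_b$ again uses only $O(1)$ calls to the $F_m$-oracles and $O(\polylog d)$ gates.

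The one genuinely delicate point is the bookkeeping at the ``seam'' $d+1$: one must keep the row, column, and value oracles of $A$ mutually consistent about where --- and with what multiplicity --- the $F_0$-, $F_2$- and $F_3$-entries sit, and must translate the ``fewer than $k$ nonzeros'' flag value correctly between the differing ambient dimensions of $F_0$, $F_1$, $F_2$ and $F_3$. I would also state plainly that ``constant overhead in gate complexity'' holds literally for $O_A$ and for producing a block-encoding of the vector $b$, whereas producing the \emph{normalized} state $b/\|b\|$ incurs the additional (benign, problem-dependent) amplitude-amplification factor $\sqrt{s_rs_c}/\|b\|$ identified above.
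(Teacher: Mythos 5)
Your construction of $O_A$ is essentially the paper's: condition on which of the four blocks the index pair $(i,j)$ falls in and delegate to the corresponding $F_m$-oracle. You fill in the row- and column-index oracles $O_r^A,O_c^A$, which the paper leaves implicit, and your seam convention at column/row $d+1$ is a sensible way to keep the three oracles mutually consistent; this part is sound and slightly more careful than the paper's one-line display.

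Your $O_b$ construction, however, takes a genuinely different and less economical route than the paper, and it is where the argument has a gap. You go via a block-encoding of $A$ (Lemma~\ref{lem:sparse_matrix}) applied to $\ket{(w,0)^T}$ and post-selection, then invoke amplitude amplification with cost factor $\sqrt{s_r(A)s_c(A)}/\|b\|$, arguing this is $O(1)$ ``if $w$ is taken so that $\|A(w,0)^T\|=\Theta(\alpha)$.'' But for $w=e_j$ a basis vector, $A(w,0)^T$ is precisely the $j$-th column of $\begin{pmatrix}F_1\\F_2\end{pmatrix}$; its norm is at most $\sqrt{s_c}\cdot\max_{i}|F(i,j)|$ and can be arbitrarily small relative to $\alpha=\sqrt{s_r s_c}$ if the entries are small, so the stipulated choice of $w$ need not exist, and the amplification factor is not controlled by the sparsity alone. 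The paper sidesteps all of this: with $w=e_1$, $b$ is the first column of $\begin{pmatrix}F_1\\F_2\end{pmatrix}$, which (by the $O(1)$ column sparsity of $F_1$ and $F_2$) has only $O(1)$ nonzero entries at locations and with values readable via $O_r,O_a$; one then prepares the $O(1)$-sparse state $b/\|b\|$ directly by standard sparse-state-preparation, with truly constant overhead and no post-selection or amplitude amplification. You should replace the block-encode-and-amplify detour for $O_b$ with this direct sparse-vector preparation; as written, the constant-overhead claim for $O_b$ does not follow.
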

\begin{proof}
To get $O_A$, we can use the $F_i$ and apply them conditioned on the row and column index. Specifically, $O_A$ is 
\begin{equation}
O_A=\sum_{i,j=1}^{d-1}\ket{i,j}\bra{i,j}\otimes F_1 + \sum_{i=1}^{d-1}\ket{i,d}\bra{i,d}\otimes F_0 + \sum_{j=1}^{d-1}\ket{d,j}\bra{d,j}\otimes F_2 + \ket{d,d}\bra{d,d}\otimes F_3\,.
\end{equation}
To create $O_b$, we can take $\ket{w}$ to be the all-zeros basis state $\ket{0}$ and therefore
\begin{equation}
    b=\begin{pmatrix} e \\f\end{pmatrix}\,,
\end{equation}
where the $d$ dimensional vector $e$ and the number $f$ are
\begin{equation}
    e(i)=F_1(i,1)\,,\text{ and }\, f=F_2(1)\,.
\end{equation}
Since $F_i$ are assumed to be sparse with a constant number of nonzero entries in every row or column, $O_b$ can be constructed using standard techniques with constant gate complexity.
\end{proof}
To solve the above equation, we will use the result from \cite{krovi2022improved} stated below. Note that this version is to produce the history state and not the normalized solution state.
\begin{theorem}\label{thm:LODE}
Given an instance of a linear inhomogeneous equation \eq{LODE} where $A$ has sparsity $s$, dimension $d$, $C(A)=\sup_{t\in [0,T]}\exp(At)$ and oracles as given in \defn{oracles}, there exists a quantum algorithm that produces a quantum state proportional to  
\begin{equation}\label{eq:y}
    \ket{z}\sim \sum_{i=0}^{m-1}\ket{i,x_i} + \sum_{i=m}^{2m-1}\ket{i,x_m}\,,
\end{equation}
where $m=T\|A\|$ is the number of time steps, $x_i$ are $\epsilon$ approximate solutions at time steps $i=0,\dots, m$. Let
\begin{equation}
    k=O(\log T\|A\|(1+\frac{T\|b\|}{\|x(T)\|})\,.
\end{equation}
This algorithm succeeds with a constant success probability and with query complexity
\begin{equation}
    O\Big(mC(A)\poly\Big(\log d,s,k,\log(\frac{1}{\epsilon}),\log(mC(A))\Big)\Big)\,,
\end{equation}
and the gate complexity is greater by a factor of at most
\begin{equation}
    O\Big(\poly\Big(k,\log(\frac{1}{\epsilon}),\log(m)\Big)\Big)\,.
\end{equation}
\end{theorem}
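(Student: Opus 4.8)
The plan is to follow \cite{BCOW17, krovi2022improved}: turn the ODE \eq{LODE} into one large sparse linear system $L\vec x=\vec c$ whose normalized solution is exactly the history state \eq{y}, block-encode $L$, invert it with the quantum linear systems algorithm of \thm{block_QLSA}, and push the condition-number dependence onto $C(A)$. Concretely, choose a step size $h$ with $\|A\|h=\Theta(1)$ so that $m=T\|A\|$ is the number of steps, fix a truncation order $k$ as in the statement, and on each step replace Duhamel's formula $x((j{+}1)h)=\exp(Ah)x(jh)+\int_0^h \exp(A(h-\tau))b\,d\tau$ by its degree-$k$ Taylor polynomial, introducing auxiliary variables for the successive partial Taylor sums exactly as in \cite{BCOW17}. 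Stacking all variables into $\vec x$ turns the recurrence into $L\vec x=\vec c$, with $L$ lower block-triangular, identity diagonal blocks, off-diagonal blocks built from $-Ah/\ell!$ and fixed rationals, and $\vec c$ encoding $x_0$ via the oracle $O_{init}$ of \defn{oracles} and copies of $b$ via the oracle $O_b$ of \lem{A_oracle}. Appending $\sim m$ further rows that enforce $x_i=x_m$ for $i=m,\dots,2m-1$ makes the solution of $L\vec x=\vec c$ proportional to $\sum_{i=0}^{m-1}\ket{i,x_i}+\sum_{i=m}^{2m-1}\ket{i,x_m}$ and puts a constant fraction of the amplitude on the final-time block.

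\textbf{Condition number.} This is the crux. One shows $\|L\|=O(1)$ (from the $O(s+k)$ sparsity, $\|A\|h=O(1)$, and the Taylor weights), and — after rescaling $\|L\|$ to $1$ — that $\|L^{-1}\|=\kappa_L=O(mC(A)\,\polylog)$. For the inverse, note that its blocks are, up to the final padding, powers $T_k(Ah)^\ell$ of the truncated propagator $T_k(Ah)=\sum_{n=0}^{k}(Ah)^n/n!$ with $\ell\le m$, plus Taylor-sum corrections. From the telescoping identity $T_k(Ah)^\ell-\exp(A\ell h)=\sum_{j=0}^{\ell-1}\exp(A(\ell-1-j)h)\big(T_k(Ah)-\exp(Ah)\big)T_k(Ah)^{j}$ and the hypothesis $\|\exp(At)\|\le C(A)$ for all $t\le T$, an induction on $\ell$ gives $\|T_k(Ah)^\ell\|\le 2C(A)$ uniformly in $\ell\le m$, provided $\|T_k(Ah)-\exp(Ah)\|=O(1/(mC(A)))$; since $\|T_k(Ah)-\exp(Ah)\|$ decays like $1/k!$ when $\|A\|h=O(1)$, this holds once $k$ is only logarithmically large, and the extra factor $1+T\|b\|/\|x(T)\|$ in the statement's $k$ accounts for the inhomogeneous block and the normalization of the output against $\|x(T)\|$. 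Summing the $O(m)$ genuine blocks and the $O(m)$ padding blocks then yields the claimed $\kappa_L$.

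\textbf{Block-encoding, inversion, and output.} The matrix $L$ is $O(s+k)$-sparse with explicitly computable entries (powers of entries of $A$ obtained from $O_A$, together with fixed rationals computed reversibly), so \lem{sparse_matrix} gives an $(\alpha,O(\log d+\polylog),\delta)$ block-encoding of $L$ with $\alpha=\poly(s,k)$ using $O(1)$ calls to the oracles of \defn{oracles}. Feeding this (renormalized so that $\|L\|=1$, which by the remark after \thm{block_QLSA} does not affect the asymptotics) and the state-preparation oracle for $\vec c$ assembled from $O_{init}$ and $O_b$ into \thm{block_QLSA} produces the normalized state $L^{-1}\vec c\propto\ket z$ to error $\epsilon$ using $O(\kappa_L\log(1/\epsilon))$ oracle calls, i.e. $O\big(mC(A)\poly(\log d,s,k,\log(1/\epsilon),\log(mC(A)))\big)$ queries; the amplitude amplification internal to the QLSA provides the stated constant success probability. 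The gate overhead beyond the oracle calls is the cost of computing Taylor coefficients and running the sparse-matrix arithmetic inside the block-encoding, which is $\poly(k,\log(1/\epsilon),\log m)$.

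\textbf{Main obstacle.} The delicate point is the condition-number step: showing that the truncated-propagator powers stay $O(C(A))$ uniformly over all $\le m$ steps rather than growing geometrically. The wrong move is to bound $\|\exp(A\ell h)\|$ by $e^{\|A\|\ell h}=e^{\|A\|T}$, which may be exponentially large; one must instead use $\|\exp(A\ell h)\|\le C(A)$ directly and close an induction, which forces $k$ to be chosen logarithmically large — large enough that $m$ times the per-step truncation error (suitably normalized against $\|x(T)\|$, whence the $T\|b\|/\|x(T)\|$ term) is negligible. Getting the exact form of $k$ and the polylog factors right is where essentially all the bookkeeping lives; everything else is assembly of \thm{block_QLSA}, \lem{sparse_matrix}, and \lem{A_oracle}.
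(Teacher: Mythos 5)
The paper does not prove this theorem at all: it is imported verbatim from \cite{krovi2022improved} (with \cite{BCOW17} as the antecedent), so there is no in-paper proof to diverge from. Your reconstruction — Taylor-series linearization into a sparse system $L\vec{x}=\vec{c}$ with padded final-time blocks, the condition-number bound $\kappa_L=O(mC(A)\cdot\mathrm{poly}(k))$ obtained by controlling powers of the truncated propagator against $C(A)$ via the telescoping/induction argument with $k$ logarithmically large, and then sparse block-encoding plus the QLSA of \thm{block_QLSA} — is essentially the argument of that cited work, and it is correct (your aside that a constant fraction of amplitude sits on the final-time block is unnecessary for this history-state version and in general carries a $g$-dependence, but nothing in the proof relies on it).
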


The quantum algorithm to solve the Riccati equation \eq{Riccati_ivp} consists of the following steps. In the steps below, for any unnormalized vector $\phi$, the normalized version is denoted by $\ket{\phi}$.
\begin{enumerate}
    \item We take three registers, one for the time step, one for the Taylor series level and one for the solution of the differential equation (and other states such as the initial state). Define $z_{in}$ as
    \begin{equation}
        z_{in} = \ket{0,0,x_0} + h\sum_{i=1}^{m} \ket{i,1,b} + \sum_{i=m+1}^{2m}\ket{i,0,w}\,,
    \end{equation}
    where $\ket{w}=\ket{0}$, $x_0 = [y_0,1]^T$ (with $\ket{x_0}$ being the normalized version of $x_0$) and where $y_0$ is the initial state of the Riccati equation in \eq{Riccati_ivp} and $h=1/\|A\|$.
    Let us denote
    \begin{equation}
        \psi_{in}=\ket{0,0,x_0} + h\sum_{i=1}^{m} \ket{i,1,b}\,,
    \end{equation}
    and
    \begin{equation}
        w_{in} = \sum_{i=m+1}^{2m}\ket{i,0,w}\,.
    \end{equation}
    In the first step, we construct the state $\ket{z_{in}}$.
    \item Next, construct a linear system for the ODE \eq{LODE} from \cite{krovi2022improved} i.e.,
    \begin{equation}
        Lz=z_{in}\,,
    \end{equation}
    and solve the linear system to obtain the final state $\ket{z}$. This unnormalized state can be written as 
    \begin{equation}
        z=L^{-1}z_{in} = L^{-1}\psi_{in} + L^{-1}w_{in}\,.
    \end{equation}
    Note that $w_{in}=Lw_{in}$ since $L$ acts as the identity when the first register is in the support of the states $m+1$ to $2m$. Therefore
    \begin{equation}\label{eq:z}
        z=L^{-1}\psi_{in} + w_{in}=\sum_{i=1}^m\ket{i,0,x_i} + \sum_{i=m+1}^{2m}\ket{i,0,x_m} + \sum_{i=m+1}^{2m}\ket{i,0,w}\,.
    \end{equation}
    \item Measure the time register and post-select outcomes $m+1,\dots ,2m$ to get a state proportional to
    \begin{equation}
        \psi_T=x_m+w\,.
    \end{equation}
    \item Finally, perform the two outcome measurement of the computational basis with outcomes $P = \{1,\dots,d\}$ and $I-P = \{d+1\}$. Post-select on the outcome $k\in\{1,\dots, d\}$ (recall that $x_m$ is $d+1$ dimensional) to get the state $\ket{\hat{y}}$, which is a normalized version of the approximate solution $\hat{y}$. We show in the next section that $\hat y$ is the approximate version of the actual solution $y(t)$ of equation \eq{Riccati_ivp}. We can write this state as 
    \begin{equation}
        \ket{\hat{y}} = \frac{P\psi_T}{\|P\psi_T\|}\,,
    \end{equation}
    where $P$ denotes the orthogonal projector onto the first $d$ components of the $d+1$ dimensional state $\psi_T$.
\end{enumerate}    

Next we analyze all the aspects of the quantum algorithm for the Riccati equation such as its accuracy, success probability and the gate and query complexity of its circuit implementation.
\subsubsection{Solution error}
In this subsection, we bound the norm of the error between the normalized solution and the normalized approximate solution i.e,
\begin{equation}
    \|\ket{y}-\ket{\hat{y}}\|=\Bigg\|\frac{P(x+w)}{\|P(x+w)\|}-\frac{P(\psi_T)}{\|P(\psi_T)\|}\Bigg\|\,,
\end{equation}
where $P$ is the projector onto the components $1,\dots ,d$ of the $d+1$ dimensional vectors $x_m$, $w$ and $x$. We first bound the error between unnormalized versions of these states in the lemma below. First recall that $\|w\|=1$, $P(x+w)=u$ and $(I-P)(x+w)=v$, where $u$ and $v$ are from \eq{u_v}. We use this notation below. 
\begin{lemma}\label{lem:soln_error}
Suppose that the relative error in estimating $x$ is $\epsilon_1$ i.e.,
\begin{equation}
    \|x-\hat{x}\|\leq \epsilon_1\|x\|\,,
\end{equation}
then the error between $y$ and $\hat{y}$ can be bounded as
\begin{equation}
    \|y - \hat{y}\|\leq \epsilon\,,
\end{equation}
where
\begin{equation}
    \epsilon =2\epsilon_1(1+\frac{1}{\|y\|}+\frac{1}{\|u\|})\,,
\end{equation}
\end{lemma}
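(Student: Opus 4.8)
The plan is to compare the two solution directions directly, using that both are unit multiples of their ``numerator'' vectors $u=P(x+w)$ and $\hat u=P(\hat x+w)$ and that the M\"{o}bius transformation ties the scalar denominator $v=(I-P)(x+w)$ to $\|u\|$ and $\|y\|$. Since $y=uv^{-1}$ has the same direction as $u$, the normalized solution is $u/\|u\|$ and the normalized approximate solution is $\hat u/\|\hat u\|$, so I would bound $\bigl\|u/\|u\|-\hat u/\|\hat u\|\bigr\|$ and then rewrite the factor $\|x\|/\|u\|$ that emerges in terms of $\|y\|$ and $\|u\|$.

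The first step is the elementary inequality that for nonzero vectors $a,b$,
\begin{equation}
    \left\|\frac{a}{\|a\|}-\frac{b}{\|b\|}\right\|\;\le\;\frac{2\|a-b\|}{\|a\|}\,,
\end{equation}
which follows by inserting $\pm\,b/\|a\|$, applying the triangle inequality, and using the reverse triangle inequality $\bigl|\,\|a\|-\|b\|\,\bigr|\le\|a-b\|$. Taking $a=u$ and $b=\hat u$, and noting $u-\hat u=P(x-\hat x)$ so that $\|u-\hat u\|\le\|x-\hat x\|\le\epsilon_1\|x\|$ by hypothesis, this gives $\|y-\hat y\|\le 2\epsilon_1\|x\|/\|u\|$. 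The second step is to bound $\|x\|$: writing $x=(x_1,x_2)^T$, recalling $\|w\|=1$ and that $w$ contributes only to the first block of $x+w$ (so $v=x_2$ and $\|x_1\|\le\|u\|+1$), we get
\begin{equation}
    \|x\|=\sqrt{\|x_1\|^2+|v|^2}\;\le\;\|x_1\|+|v|\;\le\;\|u\|+1+|v|\,.
\end{equation}
Since $y=uv^{-1}$ with $v$ a scalar gives $|v|=\|u\|/\|y\|$, substituting into the two displays yields
\begin{equation}
    \|y-\hat y\|\;\le\;\frac{2\epsilon_1\bigl(\|u\|+1+\|u\|/\|y\|\bigr)}{\|u\|}\;=\;2\epsilon_1\left(1+\frac{1}{\|u\|}+\frac{1}{\|y\|}\right)\,,
\end{equation}
which is the claimed $\epsilon$.

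I expect the only delicate point to be keeping the bookkeeping of the shift $w$ straight: one must distinguish the $d$-dimensional shift appearing in $y=(u+w)v^{-1}$ from its embedding into the $(d+1)$-dimensional state space of the linear ODE, and use that this embedding touches only the ``numerator'' block, so that $v$ is literally the last component of $x$ and $\|w\|=1$ carries over. There is no genuine analytic obstacle. In particular, because we only ever compare normalized directions we never invert $\hat v$, so the potential near-singularity of $v$ that has to be controlled elsewhere --- via \lem{conjugate} and the quantitative Jacobi condition assumed in this paper --- is irrelevant here; the factors $1/\|u\|$ and $1/\|y\|$ in $\epsilon$ are just the unavoidable price of the additive term $w$.
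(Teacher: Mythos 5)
Your proposal is correct and takes essentially the same route as the paper's proof: the same splitting of the normalized difference (equivalent to the bound $\left\|\frac{a}{\|a\|}-\frac{b}{\|b\|}\right\|\le \frac{2\|a-b\|}{\|a\|}$ with $a=P(x+w)$, $b=P(\hat x+w)$), combined with the same estimate $\|x\|=\|(x+w)-w\|\le \|u\|\left(1+\frac{1}{\|u\|}+\frac{1}{\|y\|}\right)$ obtained from $\|w\|=1$ and $y=uv^{-1}$.
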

\begin{proof}
First consider
\begin{align}
    &\|P(x+w) - P(\hat x + w)\| \\
    &= \|Px -P\hat x \|\\
    &\leq \epsilon_1\|x\|\\
    &=\epsilon_1\|(x+w)-w\|\\
    &\leq \epsilon_1\Bigg(1+\frac{1}{\|y\|}+\frac{1}{\|u\|}\Bigg)\|u\|\,,
\end{align}
where in the last line, we used the fact that $\|w\|=1$ and $y=u v^{-1}$.

Using this, we can now bound $\|\ket{y}-\ket{\hat{y}}\|$ as follows.
\begin{align}
    \|\ket{y}-\ket{\hat{y}}\|&=\Bigg\|\frac{P(x+w)}{\|P(x+w)\|}-\frac{P(\psi_T)}{\|P(\psi_T)\|}\Bigg\|\\
    &=\Bigg\|\frac{P(x+w)}{\|P(x+w)\|}-\frac{P(\hat x + w)}{\|P(\hat x + w)\|}\Bigg\|\\
    &\leq \Bigg\|\frac{P(x+w)}{\|P(x+w)\|}-\frac{P(\hat x + w)}{\|P(x + w)\|}\Bigg\| + \Bigg\|\frac{P(\hat x + w)}{\|P(x+w)\|}-\frac{P(\hat x + w)}{\|P(\hat x + w)\|}\Bigg\|\\
    &\leq \frac{\epsilon}{2} + \|P(\hat x + w)\|\frac{\Big|\|P(\hat x + w)\| - \|P(x+w)\|\Big|}{\|P(\hat x + w)\|\|P(x+w)\|}\\
    &=\epsilon\,.
\end{align}
\end{proof}

\subsubsection{Probability of success}
\begin{lemma}\label{lem:success_prob}
The probability of obtaining the state $\hat{y}/\|y\|$, where $\hat{y}$ is the approximate solution to the Riccati equation is bounded below as follows.
\begin{equation}
    p_{success}\geq \frac{1}{108g^2}\,,
\end{equation}
where 
\begin{equation}
    g=\frac{\max_{t\in [0,T]}\|u(t)\|^2}{\|u(T)\|^2}\,.
\end{equation}
\end{lemma}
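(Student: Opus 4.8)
The plan is to express $p_{success}$ as the product of the two post-selection probabilities occurring in Steps~3 and~4 of the algorithm, evaluate each of them in terms of the norms of the history-state components, and then lower-bound the resulting ratio. To begin, I would record the normalized state produced by the linear-systems solve: combining the two padding blocks in \eq{z}, the unnormalized state is $z=\sum_{i=1}^{m}\ket{i,0,x_i}+\sum_{i=m+1}^{2m}\ket{i,0,x_m+w}$, so that $\|z\|^2=\sum_{i=1}^{m}\|x_i\|^2+m\|x_m+w\|^2$. Measuring the time register and post-selecting on the outcomes $m+1,\dots,2m$ then succeeds with probability $p_1=m\|x_m+w\|^2/\|z\|^2$ and leaves the solution register proportional to $\psi_T=x_m+w$. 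Since $w$ satisfies $\|w\|=1$ and lies entirely in the first $d$ coordinates, $P(x_m+w)=u_m+w$ and $(I-P)(x_m+w)=v_m$, so the subsequent measurement of $P$ versus $I-P$ succeeds with probability $p_2=\|u_m+w\|^2/\|x_m+w\|^2$. Multiplying, $p_{success}=p_1p_2=m\|u_m+w\|^2/\bigl(\sum_{i=1}^m\|x_i\|^2+m\|x_m+w\|^2\bigr)$.

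Next I would lower-bound this ratio. For the denominator I would use $\sum_{i=1}^m\|x_i\|^2\le m\max_i\|x_i\|^2$ and $\|x_m+w\|^2\le 2\|x_m\|^2+2\|w\|^2$, together with $\max_i\|x_i\|^2\ge 1$ (which holds because $v(0)=I$ contributes a unit $v$-block to $\|x_1\|$), to get $\|z\|^2\le c_0\,m\max_i\|x_i\|^2$ for an explicit constant $c_0$. Writing $\|x_i\|^2=\|u_i\|^2+\|v_i\|^2$ and using the relation $y=uv^{-1}$ together with the quantitative Jacobi hypothesis (so that $v(t)$ is well-conditioned) to control $\|v_i\|$ in terms of the $u$-norms, I would bound $\max_i\|x_i\|^2$ by a constant multiple of $\max_{t\in[0,T]}\|u(t)\|^2$. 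Finally, converting the discrete approximate values $u_i$ to the exact continuous solution via the accuracy guarantees of \thm{LODE} and \lem{soln_error} gives $\|u_i\|\le(1+\delta)\max_{t\in[0,T]}\|u(t)\|$ and $\|u_m+w\|\ge\|u_m\|-1\ge\tfrac12\|u(T)\|$ whenever $\|u(T)\|$ is not $O(1)$ (and when $\|u(T)\|=O(1)$ one has $g=\Theta(1)$ and the claim is immediate). Assembling these estimates and carrying the $O(1)$ factors yields a bound of the form $p_{success}\ge \|u(T)\|^2/\bigl(108\max_{t\in[0,T]}\|u(t)\|^2\bigr)$, which, since $g\ge 1$, in particular gives the stated $p_{success}\ge 1/(108\,g^2)$.

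The step I expect to be the main obstacle is the control of $\max_i\|x_i\|^2$: one must argue that neither the $v$-block of the M\"obius linearization nor the discretization and truncation error of the ODE solver inflates this quantity beyond a constant multiple of $\max_{t\in[0,T]}\|u(t)\|^2$, so that the success probability ends up governed only by $g$ and a clean numerical constant rather than by $\kappa_V$, the number of time steps $m$, or the time horizon $T$. The remaining work — writing out the state, identifying the two post-selection probabilities, and estimating the sums — is routine bookkeeping.
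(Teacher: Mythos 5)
Your route is structurally the same as the paper's: write $p_{\mathrm{success}}=\frac{m\|x_m+w\|^2}{\|z\|^2}\cdot\frac{\|P(x_m+w)\|^2}{\|x_m+w\|^2}=\frac{m\|P(x_m+w)\|^2}{\|z\|^2}$, upper-bound $\|z\|^2=\sum_{i\le m}\|x_i\|^2+m\|x_m+w\|^2$ through the solver's guarantee that the discrete iterates are within a factor $(1+\gamma)$ of the continuous solution norms, lower-bound the numerator by $(1-\epsilon)^2\|u(T)\|^2$, and assemble the constants using $g\ge 1$. The genuine gap is exactly the step you flag: you need $\max_i\|x_i\|^2=O(\max_{t}\|u(t)\|^2)$, i.e.\ control of the $v$-block, and the fix you sketch does not deliver it. The quantitative Jacobi hypothesis bounds the condition number of $v(t)$, i.e.\ $\|v\|\,\|v^{-1}\|$, but gives no bound of $\|v\|$ in terms of $\|u\|$; to relate them you would have to write $v=y^{-1}(u+w)$ and import $\|y(t)^{-1}\|$ (or $1/\|y\|$ in the vector case) as an extra parameter, so the bound you would actually obtain depends on $\kappa_V$ or $\|y^{-1}\|$, quantities absent from the stated $1/(108g^2)$. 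Separately, your fallback in the numerator is incorrect: $\|u(T)\|=O(1)$ does not imply $g=\Theta(1)$, since $g$ compares $\|u(T)\|$ with $\max_t\|u(t)\|$, which can be arbitrarily larger; the case split is also unnecessary once you adopt the paper's convention (stated just before \lem{soln_error}) that $u:=P(x+w)$, under which the numerator bound $\|P(x_m+w)\|\ge(1-\epsilon)\|u(T)\|$ follows directly from \lem{soln_error} with no triangle-inequality detour.

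For comparison, the paper's own proof handles the step you identified as the obstacle in a single line: it invokes Theorem 8 of \cite{krovi2022improved} to get $\|z\|^2\le 4m(1+\gamma)^2\big[\max_t\|x(t)\|^2+1\big]$ and then passes to $12m(1+\gamma)^2 u_{\max}^2$, i.e.\ it uses $\max_t\|x(t)\|^2+1\le 3u_{\max}^2$ without justification. So your instinct about where the difficulty lies is sound --- that is precisely the tersest point of the published argument --- but your proposal neither closes that step nor reproduces a bound depending only on $g$; as written it would yield a weaker statement with additional dependence on $\kappa_V$ (or on $\|y^{-1}\|$), and it contains the erroneous $\|u(T)\|=O(1)\Rightarrow g=\Theta(1)$ shortcut in the numerator.
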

\begin{proof}
The probability of getting $\hat{y}$ can be broken down into two parts. The first part is the probability of getting the state $\psi_T$ and the second is the projector onto the first $d$ components. The overall probability can be written as follows.
\begin{equation}
    p_{\mathrm{success}}=\frac{m\|x_m+w\|^2}{\|z\|^2}\frac{\|P(x_m+w)\|^2}{\|x_m+w\|^2}=\frac{m\|P(x_m+w)\|^2}{\|z\|^2}\,,
\end{equation}
where we use $x_m$ for $\hat x$ from \eq{z}. We now upper bound the denominator. Recall from \eq{z} that
\begin{equation}
    \|z\|^2\leq \sum_{i=1}^m \|x_i\|^2 + m\|x_m+w\|^2\,.
\end{equation}
Using Theorem 8 from \cite{krovi2022improved}, we can bound the above quantity in terms of the maximum norm of the solution. Using this, we have
\begin{align}
    \|z\|^2&\leq m(1+\gamma)^2\max_{t\in [0,T]}\|x(t)\|^2 + m(1+\gamma)^2\|x(T)+w\|^2\\
    &\leq 4m(1+\gamma)^2\Big[\max_{t\in [0,T]}\|x(t)\|^2 +1\Big]\\
    &\leq 12m(1+\gamma)^2u_{max}^2\,.
\end{align}
The probability can now be bounded as (assuming $\epsilon,\gamma<1/2$)
\begin{align}
    p_{\mathrm{success}}&\geq \frac{m\|P(x_m+w)\|^2}{12m(1+\gamma)^2u_{max}^2}\\
    &\geq \frac{(1-\epsilon)^2\|u(T)\|^2}{12(1+\gamma)^2u_{max}^2}\\
    &\geq \frac{1}{108g^2}\,,
\end{align}
where
\begin{equation}
    g=\frac{\max_{t\in [0,T]}\|u(t)\|^2}{\|u(T)\|^2}\,.
\end{equation}

\end{proof}

\subsubsection{Main result}
Collecting these results together, we have our main result for Riccati equations. As pointed out in \sec{C(A)}, the result below gives a quantum algorithm for the Riccati nonlinear equation in the highly nonlinear regime.
\begin{theorem}
For the problem of producing a quantum state proportional to the solution of the Riccati differential equation described in \prb{Riccati_vector}, the quantum algorithm for the Riccati equation produces a quantum state proportional to the solution to within additive error $\epsilon$ with a constant success probability with gate complexity given by
\begin{equation}
    O\Big(g T\|A\|C(A)sk\polylog\Big(d,k,\frac{1}{\epsilon},T\|A\|C(A)\Big)\Big)\,,
\end{equation}
and the query complexity is
\begin{equation}
    O\Big(g T\|A\|C(A)sk\polylog\Big(k,\frac{1}{\epsilon},T\|A\|C(A)\Big)\Big)\,,
\end{equation}
where
\begin{align}
    &C(A) \leq \begin{cases}
			C_d(1+C_dT\|F_2\|), & \text{if $F_0=0$ and $F_2\neq 0$}\\
            C_d(1+C_dT\|F_0\|), & \text{if $F_2=0$ and $F_0\neq 0$}\\
            \exp(\mu + \|F_0\| + \|F_2\|), & \text{if both nonzero}
		 \end{cases}\\
    &\|A\| \leq \max(\|F_0\|+\|F_1\|,\|F_2\|+\|F_3\|)\,,
\end{align}
where $\mu$ is the log-norm of $F_1$ and $C_d$ is defined in \eq{C_d} and
\begin{equation}
    k=O(\log (1+\frac{\|b\|Te^2}{\|x(T)\|}))\,.
\end{equation}
\end{theorem}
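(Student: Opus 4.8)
The plan is to run the four-step procedure stated above, using the linear-ODE solver of \cite{krovi2022improved} (\thm{LODE}) as the workhorse, and then to bound in turn the solution error, the post-selection probability, and the complexity parameters $C(A)$ and $\|A\|$. By \lem{A_oracle} the oracles $O_A,O_b$ required by \defn{oracles} are built from the given oracles for $F_0,\dots,F_3$ at constant overhead, and $\ket{x_0}$ with $x_0=[y_0,1]^T$ is prepared from the oracle for $y_0$ (here $w$ is the fixed unit vector $e_1$, so $b=(F_1 e_1, F_2 e_1)\neq 0$, which is why $\|b\|$ enters the Taylor order $k$). Feeding these into \thm{LODE} produces a normalized history state proportional to $z$ in \eq{z}, whose blocks $x_i$ are $\epsilon_1$-accurate relative to the true solution by Theorem~8 of \cite{krovi2022improved}; by \prop{Riccati_ODE} the post-processed block $x_m$ encodes the Riccati solution $y(T)=(u(T)+w)v(T)^{-1}$. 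Measuring the time register and post-selecting $m{+}1,\dots,2m$, then measuring $\{P,I-P\}$ with $P=\{1,\dots,d\}$ and post-selecting $P$, yields the normalized approximate solution $\ket{\hat y}$.

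For the error I would invoke \lem{soln_error}, which gives $\|\ket{y}-\ket{\hat y}\|\le 2\epsilon_1(1+1/\|y\|+1/\|u\|)$, so the ODE solver is run with $\epsilon_1=\Theta(\epsilon/(1+1/\|y\|+1/\|u\|))$; since $\epsilon_1$ enters \thm{LODE} only through $\log(1/\epsilon_1)$, this is absorbed into the $\polylog(1/\epsilon)$ factor, under the standing assumption that $\|y\|,\|u\|$ are not superpolynomially small. For the success probability, \lem{success_prob} gives $p_{\mathrm{success}}\ge 1/(108g^2)$ with $g=\max_{t\in[0,T]}\|u(t)\|^2/\|u(T)\|^2$; amplitude amplification around the good subspace (time register in $\{m{+}1,\dots,2m\}$ and solution register in $P$, which is easy to reflect about) boosts this to a constant at multiplicative cost $O(1/\sqrt{p_{\mathrm{success}}})=O(g)$, and a doubling/fixed-point variant handles the case where $g$ is not known in advance at the same overhead. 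Multiplying the per-run query complexity $O(mC(A)\,\polylog(\dots))$ of \thm{LODE}, with $m=T\|A\|$, by this $O(g)$ factor and carrying over the $\poly(k,\log(1/\epsilon),\log m)$ gate-versus-query overhead verbatim yields the two stated complexities.

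It remains to read off $C(A)$ and $\|A\|$ from the block form $A=\begin{pmatrix}F_1&F_0\\F_2&F_3\end{pmatrix}$. Splitting $A=A_1+A_2$ with $A_1=\mathrm{diag}(F_1,F_3)$ and $A_2$ off-diagonal, the three cases are exactly: \lem{C(A)_F_0} when $F_0=0$ (there $A_2$ is strictly block-lower-triangular, so the Dyson series terminates after one integral, giving $C(A)\le C_d(1+C_dT\|F_2\|)$); its mirror image when $F_2=0$; and \lem{C(A)_F_neq_0} when both are nonzero, combined with the perturbation bound $\|\exp((A_1+A_2)t)\|\le\exp((\mu+\|A_2\|)t)$ with $\mu$ the log-norm of $A_1$ (controlled by that of $F_1$ in the regime of interest) and $\|A_2\|\le\|F_0\|+\|F_2\|$. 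The bound on $\|A\|$ follows from a blockwise estimate on a unit vector $(v_1,v_2)^T$, exactly as in the proof of \thm{Ham_canon}, giving $\|A\|\le\max(\|F_1\|+\|F_0\|,\|F_2\|+\|F_3\|)$ up to constant factors hidden in $O(\cdot)$. The main obstacle I anticipate is the bookkeeping in the middle step: verifying that the two nested post-selections compose correctly with amplitude amplification and that the relative-error requirement of \lem{soln_error} propagates through \thm{LODE} without any dependence on $\|y\|$ or $\|u\|$ leaking out of the logarithmic factors into the polynomial part of the complexity.
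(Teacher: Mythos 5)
Your proposal is correct and follows essentially the same route as the paper: it assembles \thm{LODE} with \lem{A_oracle}, the algorithm's two post-selection steps, \lem{soln_error}, \lem{success_prob}, and the bounds from \lem{C(A)_F_0} and \lem{C(A)_F_neq_0} together with the blockwise estimate for $\|A\|$, which is exactly how the paper obtains the stated complexities. The only differences are cosmetic: the paper additionally spells out the explicit circuit (controlled rotations and oracle calls) preparing the initial state $\ket{z_{in}}$, while you instead make explicit the amplitude-amplification bookkeeping behind the linear-in-$g$ factor, which the paper leaves implicit by deferring to \cite{krovi2022improved}.
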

\begin{proof}
We discuss the construction of the initial state first. In a manner similar to \cite{BCOW17, krovi2022improved}, we first construct the state
\begin{equation}
    \frac{\|x_0\|}{N_{\mathrm{init}}}\ket{0,0,0,0} + \frac{\sqrt{m}h}{N_{\mathrm{init}}}\ket{0,1,0,0} + \frac{\sqrt{m}}{N_{\mathrm{init}}}\ket{0,0,1,0}\,,
\end{equation}
where 
\begin{equation}
    N_{\mathrm{init}}=\sqrt{\|x_0\|^2 + m h^2\|b\|^2 + m}\,.
\end{equation}
Next, controlled on the second and third registers being in $00$ and $10$ respectively, we apply $O_x$ and $O_b$ to get
\begin{equation}
    \frac{\|x_0\|}{N_{\mathrm{init}}}\ket{0,0,0,\bar{x}_0} + \frac{\sqrt{m}h}{N_{\mathrm{init}}}\ket{0,1,0,\bar{b}} + \frac{\sqrt{m}}{N_{\mathrm{init}}}\ket{0,0,1,0}\,.
\end{equation}
Now conditioned on the second and third registers being in $10$ apply the rotation on the first register that takes
\begin{equation}
    \ket{0}\rightarrow \frac{1}{\sqrt{m}}\sum_{i=0}^{m-1}\ket{i}\,,
\end{equation}
and conditioned on the second and third registers being in $01$, apply the rotation
\begin{equation}
    \ket{0}\rightarrow \frac{1}{\sqrt{m}}\sum_{i=m}^{2m-1}\ket{i}\,.
\end{equation}
This produces the initial state
\begin{equation}
    \ket{z_0}=\frac{1}{N_{\mathrm{init}}}\Big[\ket{0,0,x_0} + h\sum_{i=1}^{m-1} \ket{i,1,b} + \sum_{i=m+1}^{2m}\ket{i,0,w}\Big]\,.
\end{equation}
The next step of the algorithm is to implement the quantum algorithm described in \cite{krovi2022improved}. The implementation complexity of that algorithm is proved in \cite{krovi2022improved}. The bound on $\|A\|$ comes from its block structure and the bound on $C(A)$ comes from \lem{C(A)_F_0} and \lem{C(A)_F_neq_0}. This gives the query and gate complexities in the theorem.
\end{proof}

\subsection{Matrix Riccati equation}\label{sec:mre_imp}
In this subsection, we give an algorithm to implement the solution of the matrix Riccati equation described in \eq{Riccati_ivp}. We assume that we have a block encoded version of the linearized matrix $A$ from \eq{linear_Riccati}. This can be obtained from oracle access to the matrices $F_i$ using \lem{A_oracle}. Our goal is to create a block encoding of the solution $y(t)$, which is an $N\times M$ matrix. The algorithm consists of the following steps.
\begin{enumerate}
    \item Following \cite{krovi2022improved}, we first create a block encoded unitary that encodes the linear system that produces the solution of the linear ODE from \eq{linear_Riccati} and invert the linear system to give a unitary $U_1$ (which is a $(\alpha_1,a_1,\epsilon_1)$ block encoding of $L^{-1}$). More precisely, we create the block encoded matrix $L$ from \cite{krovi2022improved}, which is given by
    \begin{equation}
        L=I-N\,,
    \end{equation}
    where
    \begin{align}
        &N=M_1(I-M_2)^{-1}\\
        &M_1=\sum_{j=0}^{k-1}\ket{j+1}\bra{j}\otimes \frac{Ah}{j+1}\\
        &M_2=\sum_{j=0}^k\ket{0}\bra{j}\otimes I\,.
    \end{align}
    It was shown in \cite{krovi2022improved} that $L^{-1}$ acts on the initial state $z'_{in}$ and produces a state $z'$ that encodes the solution of the linear ODE i.e.,
    \begin{equation}
        L^{-1}z'_{in}=z'\,,
    \end{equation}
    where 
    \begin{equation}
        z'_{in}=\ket{0,0,x_0} + h\sum_{i=0}^{m-1}\ket{i,1,b'}\,.
    \end{equation}
    In the equation above and in \cite{krovi2022improved}, $b'$ is a vector which can be prepared efficiently as a normalized quantum state. 
    
    \item In our case, the initial state $x_0$ and the constant term in the linear ODE $b$ are $(N+M)\times M$ matrices. Recall that $b$ is the matrix
    \begin{equation}
        b=\begin{pmatrix}e\\f\end{pmatrix}=\begin{pmatrix}F_1w\\F_2w\end{pmatrix}\,,
    \end{equation}
    where $w$ can be chosen to be any time-independent matrix of the same dimension as $x_0$.

    To be able to use the algorithm from \cite{krovi2022improved}, we encode these as block unitary matrices as follows i.e., we want to block-encode the matrix
    \begin{equation}
        z_{in}=\ket{0,0}\bra{0,0}\otimes x_0 + h\sum_{i=0}^{m-1}\ket{i,1}\bra{0,0}\otimes b\,.
    \end{equation}
    Every column of $z_{in}$ when acted upon by $L^{-1}$ gives a column of the solution matrix. To implement $z_{in}$ as a block unitary, we can use sparse matrix encoding to first block-encode
    \begin{equation}
        \ket{0,0}\bra{0,0}\otimes x_0 + \sqrt{m}h\ket{0,1}\bra{0,0}\otimes b\,.
    \end{equation}
    Next we implement a unitary on the left taking
    \begin{equation}
        \ket{0,0}\rightarrow\ket{0,0}\hspace{0.1in}\text{and}\hspace{0.1in}\ket{0,1}\rightarrow\frac{1}{\sqrt{m}}\sum_{i=0}^{m-1}\ket{i,1}\,.
    \end{equation}
    Denote this block encoded unitary $U_2$ with parameters $(\alpha_2,a_2,\epsilon_2)$. 
    
    \item Next we use the block-encoding of $L^{-1}$ and a block encoding of $z_{in}$ to get a block encoding of the solution i.e., $L^{-1}z_{in}$. Denote by $U_3$ the block encoding of $L^{-1}z_{in}$ obtained by multiplying the block encodings of $L^{-1}$ and $z_{in}$. Let $U_3$ be a $(\alpha_3,a_3,\epsilon_3)$ block encoding of $L^{-1}z_{in}$.
    
    \item We now have a unitary that block encodes the solution $x_T$. This can be written as 
    \begin{equation}
        x_T=\begin{pmatrix}u(T)\\v(T) \end{pmatrix}\,.
    \end{equation}
    Using matrix arithmetic, we can block encode $x_T+w$ as $U_4$ with parameters $(\alpha_4,a_4,\epsilon_4)$.
    \item Now conditioned on the lower block, we can invert $v$ and create a block unitary $U_5$. Assume that it is a $(\alpha_5,a_5,\epsilon_5)$ block encoding of $v^{-1}$.
    \item Finally, we multiply $U_4$ and $U_5$ to create a block version $U_6=U_4U_5$ of the solution $y_T=(u+w) v^{-1}$ with parameters $(\alpha_6,a_6,\epsilon_6)$.
\end{enumerate}

We compute the block-encoding triples and analyze the algorithm below. 
\subsubsection{Analysis of the algorithm}
In this subsection, we show that the algorithm above produces a solution for the matrix Riccati equation. Specifically, we prove the following.
\begin{theorem}
For the matrix Riccati equation defined in \eq{Riccati_ivp}, let $\kappa_V$ be the condition number of $V(t)$ coming from the Jacobi condition. Then, the above algorithm produces a matrix encoded as a block unitary, which is $\epsilon$ close to the solution with query and gate complexity given by
\begin{equation}
    O(s\kappa_V\kappa_L\polylog(\frac{1}{\epsilon},s,\kappa_V,\kappa_L,M+N))\,.
\end{equation}
where
\begin{align}
    &C(A) \leq \begin{cases}
			C_d(1+C_dT\|F_2\|), & \text{if $F_0=0$ and $F_2\neq 0$}\\
            C_d(1+C_dT\|F_0\|), & \text{if $F_2=0$ and $F_0\neq 0$}\\
            \exp(\mu + \|F_0\| + \|F_2\|), & \text{if both nonzero}
		 \end{cases}\\
    &\|A\| \leq \max(\|F_0\|+\|F_1\|,\|F_2\|+\|F_3\|)\,,
\end{align}
where the log-norm of $F_1$ is denoted $\mu$ and
\begin{equation}
    C_d=\max\{\sup_{t\in [0,T]}\|\exp(F_1t)\|,\sup_{t\in [0,T]}\|\exp(F_3t)\|\}\,,
\end{equation}
and 
\begin{equation}
    \kappa_L=O(C(A)T\|A\|)\,.
\end{equation}
\end{theorem}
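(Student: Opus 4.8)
The plan is to prove the theorem by bookkeeping the block-encoding triple $(\alpha_i,a_i,\epsilon_i)$ produced at each of the six steps of the algorithm, using the arithmetic rules of \lem{matrix_arithmetics} for sums and products and the inversion rule of \lem{matrix_inversion} for the two places where a matrix inverse is taken, namely the history-state matrix $L$ and the lower block $v(T)$. The only genuinely large subnormalizations that appear are $\kappa_L=O(C(A)T\|A\|)$, from inverting $L$, and $\kappa_V$, from inverting $v(T)$; everything else is $O(1)$ or polylogarithmic, so the final cost is the product of the two inversion costs, which is exactly the claimed $O(s\kappa_V\kappa_L\polylog(1/\epsilon,s,\kappa_V,\kappa_L,M+N))$. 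The two remaining ingredients are the bound on $C(A)$ (needed for $\kappa_L$) and the bound on $\|A\|$, both of which are already available.

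First I would set up the input oracles: by \lem{A_oracle} the oracles for $F_0,\dots,F_3$ give an oracle for $A$ from \eq{linear_Riccati} with constant overhead, and \lem{sparse_matrix} turns this into an $(O(s),O(\log(M+N)),\epsilon')$ block encoding of $A$ (after the usual rescaling so matrix entries have magnitude at most one). Following \cite{krovi2022improved}, the matrix $L=I-N$ with $N=M_1(I-M_2)^{-1}$ is assembled from this block encoding by matrix arithmetic, and its condition number is $\kappa_L=O(C(A)T\|A\|)$; applying \lem{matrix_inversion} to the Hermitian dilation of $L$ produces $U_1$, an $(O(\kappa_L),a_1,\epsilon_1)$ block encoding of $L^{-1}$ at gate cost $\tilde O(s\kappa_L)$. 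In parallel, Step~2 builds $U_2$, an $(\alpha_2,a_2,\epsilon_2)$ block encoding of $z_{in}$, by sparse-matrix encoding of $\ket{0,0}\bra{0,0}\otimes x_0+\sqrt{m}h\ket{0,1}\bra{0,0}\otimes b$ followed by a fixed unitary spreading $\ket{0,1}$ over the time register; here $\alpha_2$ is an $O(1)$-type quantity in $\|x_0\|$ and $\|b\|$. Step~3 multiplies them: item~3 of \lem{matrix_arithmetics} gives $U_3=U_1U_2$, an $(\alpha_1\alpha_2,a_1+a_2,\alpha_1\epsilon_2+\alpha_2\epsilon_1)$ block encoding of $L^{-1}z_{in}$, whose relevant columns carry $x_T=(u(T),v(T))^T$.

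Next, Step~4 adds the fixed matrix $w$ (trivially block-encoded) to get $U_4$ block-encoding $x_T+w=(u(T)+w,\,v(T))^T$. Step~5 is the second nontrivial inversion: conditioned on the lower block, apply \lem{matrix_inversion} to $v(T)$ to produce $U_5$, an $(O(\kappa_V),a_5,\epsilon_5)$ block encoding of $v(T)^{-1}$; invertibility of $v(T)$ with condition number at most $\kappa_V$ is guaranteed by \lem{conjugate} together with the quantitative Jacobi hypothesis of the theorem, and one rescales $v(T)$ by an a-priori bound on $\|v(T)\|$ so that the hypothesis $I/\kappa\le H\le I$ of \lem{matrix_inversion} holds. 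Step~6 multiplies $U_4$ and $U_5$ to obtain $U_6$, a block encoding of $y_T=(u(T)+w)v(T)^{-1}$ with subnormalization $O(\kappa_V)$ times polynomial factors — the claimed output. For the error budget I would compose backwards: final error $\epsilon$ in $U_6$ forces $\epsilon_4,\epsilon_5=\epsilon/\poly(\kappa_V)$, which via \lem{matrix_inversion} forces the block encoding fed into the $v$-inversion to have error $o(\epsilon/\kappa_V^2\log^3(\kappa_V^2/\epsilon))$, hence $\epsilon_1,\epsilon_2=\epsilon/\poly(\kappa_L,\kappa_V)$; since the gate cost of \lem{matrix_inversion} and \lem{sparse_matrix} is polylogarithmic in $1/\epsilon_i$, all of this only contributes a $\polylog(1/\epsilon,\kappa_L,\kappa_V)$ factor.

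Finally, to close the loop I would invoke \lem{C(A)_F_0} for the cases $F_0=0$ or $F_2=0$ and \lem{C(A)_F_neq_0} together with the log-norm remark for the case both are nonzero, which gives the three-way bound on $C(A)$ appearing in the statement, and $\|A\|\le\max(\|F_0\|+\|F_1\|,\|F_2\|+\|F_3\|)$ follows from the $2\times 2$ block structure exactly as in the proof of \thm{Ham_canon}. I expect the main obstacle to be controlling the product of subnormalizations across the six block encodings: one has to verify that the only large factors are $\kappa_L$ (once) and $\kappa_V$ (once), with all intermediate $\alpha_i$ bounded by $O(1)$ or polylog, and correspondingly that the error tolerances demanded at the innermost steps stay inverse-polynomial in $\kappa_L,\kappa_V$ rather than inverse-exponential, so that the polylog-in-$1/\epsilon$ cost of the subroutines is enough to absorb them.
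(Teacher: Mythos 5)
Your proposal follows essentially the same route as the paper's proof: it tracks the block-encoding triples $(\alpha_i,a_i,\epsilon_i)$ through the same six steps, identifies the two inversions ($L$ and $v(T)$) as the sole sources of the $\kappa_L$ and $\kappa_V$ factors, propagates the error budget backwards through \lem{matrix_inversion}, and closes with the same $C(A)$ and $\|A\|$ bounds from \lem{C(A)_F_0}, \lem{C(A)_F_neq_0} and the block structure. The only cosmetic difference is that the paper records the subnormalization of $z_{in}$ as $\alpha_2=s$ (giving the final $2\kappa_V(2\kappa_Ls+1)$ subnormalization) rather than treating it as $O(1)$, but since you carry the sparsity $s$ into the final cost anyway, the argument is the same.
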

\begin{proof}
We now bound the query and gate complexities of the algorithm to produce a state $\epsilon$ close to the solution. From \cite{krovi2022improved}, we have that the condition number of $L$ is $\kappa_L=O(C(A)m)$, where $m=T\|A\|$.
\begin{enumerate}
    \item To implement $U_1$, which is a block encoding of $L^{-1}$, the query and gate complexities are given by \lem{matrix_inversion}. If we block encode $L$ as a $(s,\log(m(N+M)),\delta)$ block encoding, then $L^{-1}$ is a $(\alpha_1,a_1,\epsilon_1)$ block encoding, where
    \begin{align}
        &\alpha_1 = 2\kappa_L\\
        &a_1=\log (m(N+M)) + O(\log (\kappa_L^2\log 1/\epsilon))\,.
    \end{align}
    We also need to pick $\delta$ according to \lem{matrix_inversion} as
    \begin{equation}
        \delta = o(\epsilon/\kappa_L^2\log^3(\kappa_L^2/\epsilon))\,.
    \end{equation}
    The gate complexity is
    \begin{equation}
        T_1=O(s\kappa_L(\log (m(N+M))+T_U)\log^2(\kappa_L^2/\epsilon))\,,
    \end{equation}
    where the cost of block encoding $L$ is
    \begin{equation}
        T_U=O(\log(m(N+M)) + \log^{2.5}(s^2/\delta))\,.
    \end{equation}
    
    \item To implement $U_2$ as a $(\alpha_2,a_2,\epsilon_2)$ block encoding of $z_{in}$, we can use \lem{sparse_matrix}. The parameters are
    \begin{align}
        &\alpha_2=s\\
        &a_2 = \log (m(N+M))\,.
    \end{align}
    The query complexity is $O(1)$ and the gate complexity is $O(\log(m(N+M))+\log^{2.5}(s^2/\epsilon_2))$.
    \item Multiplying two block encoded unitaries can be done using \lem{matrix_arithmetics}. The complexity of implementing $U_3$ is $O(T_1+T_2)$ and the parameters are $(\alpha_3,a_3,\epsilon_3)$, where
    \begin{align}
        &\alpha_3=\alpha_1\alpha_2\\
        &a_3=a_1+a_2\\
        &\epsilon_3=\epsilon_1\alpha_2+\epsilon_2\alpha_1\,.
    \end{align}
    \item This step can be done using matrix arithmetic. A block encoding of $w$ with parameters $(1,\log m(N+M),\epsilon)$ can be done in $T_w=O(\log(m(N+M))+\log^{2.5}(1/\epsilon))$ gates. Therefore, the gate complexity of block encoding $u+w$ is $T_4=O(T_3+T_w)$.
    \item Inverting $v$ can be done using \lem{matrix_inversion}. We can get a $(\alpha_5,a_5,\epsilon_5)$ block encoding of $v^{-1}$ where
    \begin{align}
        &\alpha_5=2\kappa_V\\
        &a_5 = O(\log M + \log(\kappa_V^2\log(1/\epsilon_5)))\,.
    \end{align}
    The gate complexity is 
    \begin{equation}
        T_5 = O(\alpha_4\kappa_V(\log M + T_4)\log^2(\kappa_V^2/\epsilon))\,.
    \end{equation}
    \item Finally, we multiply the block encoded unitaries of $u+w$ and $v^{-1}$ to get a $(\alpha_6,a_6,\epsilon_6)$ block encoding of $y(T)$, where
    \begin{align}
        &\alpha_6 = \alpha_5\alpha_4\\
        &a_6 = a_5+a_4\\
        &\epsilon_6=\alpha_4\epsilon_5 + \alpha_5\epsilon_4\,.
    \end{align}
    The gate complexity is $T_6=O(T_5+T_4)$.
\end{enumerate}
Putting it together, we get a 
\begin{equation}
    (2\kappa_V(2\kappa_Ls+1),\polylog(\kappa_L,\kappa_V,1/\epsilon)+\log(M+N),\epsilon)
\end{equation}
block-encoding of the solution with query complexity
\begin{equation}
    O(s\kappa_V\kappa_L\polylog(\frac{1}{\epsilon},s,\kappa_V,\kappa_L,M+N))\,.
\end{equation}
\end{proof}

\subsection{Applications}
\subsubsection{Linear quadratic regulator}\label{sec:matrix_Riccati}
In this section, we describe how to implement the solution to the matrix Riccati equation as a block encoded unitary. Recall from \prop{Riccati_ODE} that the matrix Ricatti equation can be linearized in a larger dimension. More specifically, we can define an $(N+M)\times (N+M)$ matrix
\begin{equation}
    A=\begin{pmatrix}F_1 & F_0\\F_2 & F_3 \end{pmatrix}\,,
\end{equation}
such that if the solution of the linear ODE
\begin{equation}
    \dot{x}=Ax+b\,,
\end{equation}
is
\begin{equation}
    x(t) = \begin{pmatrix}u(t)\\v(t) \end{pmatrix}\,,
\end{equation}
then the solution of the matrix Riccati equation
is given by 
\begin{equation}
    y(t)=u(t)v^{-1}(t)\,.
\end{equation}
The initial condition for the linear ODE is
\begin{equation}
    x(0) = \begin{pmatrix}y_0\\I \end{pmatrix}\,,
\end{equation}
where $y_0$ is the initial condition of the Riccati equation. Note that $x$ is an $(N+M)\times M$ matrix and $y$ is an $N\times M$ matrix.

We now explain how to solve a boundary value problem (which is needed for the LQR problem) using the above techniques. First, we need the following lemma.
\begin{lemma}\label{lem:lin-ode-reverse}
For the linear ODE $\dot{x}=Ax+b$, if at time $t_f$, the solution is $x_f$ obtained from an initial condition $x_0$, then we can to $x_0$ from the initial condition $x_f$ by time evolving the equation $\dot{x}=-Ax-b$.
\end{lemma}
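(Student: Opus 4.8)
The plan is to exhibit an explicit change of time variable that turns a solution of $\dot{x}=Ax+b$ into a solution of $\dot{x}=-Ax-b$ while swapping the roles of the two endpoints. Concretely, given a solution $x(t)$ of $\dot{x}=Ax+b$ on $[0,t_f]$ with $x(0)=x_0$ and $x(t_f)=x_f$, I would define $z(t):=x(t_f-t)$ for $t\in[0,t_f]$ and compute its derivative by the chain rule.

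The key steps, in order, are: (i) write $\dot{z}(t) = -\dot{x}(t_f-t)$; (ii) substitute the defining ODE $\dot{x}(t_f-t)=A\,x(t_f-t)+b = A\,z(t)+b$ to conclude $\dot{z}(t)=-A\,z(t)-b$, i.e.\ $z$ solves the time-reversed equation; (iii) evaluate at the endpoints, $z(0)=x(t_f)=x_f$ and $z(t_f)=x(0)=x_0$, so the reversed flow started from $x_f$ indeed arrives at $x_0$ at time $t_f$. If one wants to be fully explicit, one can instead invoke the variation-of-parameters formula $x(t)=e^{At}x_0+\int_0^t e^{A(t-s)}b\,\d s$ and verify directly that $z(t)=e^{-At}x_f-\int_0^t e^{-A(t-s)}b\,\d s$ satisfies $z(t_f)=x_0$; the substitution $u=t_f-s$ in the integral makes the two expressions coincide. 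Uniqueness of solutions to linear ODEs (which holds since $A$ is a fixed matrix and the right-hand side is globally Lipschitz) guarantees that this $z$ is \emph{the} solution of the reversed initial value problem, so there is no ambiguity.

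There is essentially no obstacle here: the statement is a one-line consequence of the chain rule together with well-posedness of linear ODEs, and the only thing to be slightly careful about is bookkeeping the sign and the reflection $t\mapsto t_f-t$ consistently (in particular that the constant term $b$ also flips sign, not just the linear part $A$). I would present the $z(t)=x(t_f-t)$ argument as the main proof since it is basis-free and avoids any integral manipulation, and mention the variation-of-parameters computation only as a remark for readers who prefer the closed form. This lemma will be used to convert the terminal condition $P(t_f)=P_f$ of the LQR Riccati equation into an initial value problem of the type handled by the algorithms above, simply by solving $\dot{x}=-Ax-b$ forward from the (lifted) terminal data.
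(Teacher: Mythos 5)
Your proposal is correct, and your main argument takes a slightly different route from the paper's. The paper proves the lemma by writing down the explicit variation-of-parameters solution $x_f=\exp(At_f)x_0+\int_0^{t_f}\exp(At)\,b\,\d t$, solving for $x_0$, and using the substitution $t\mapsto t_f-t$ to recognize the result as the solution of $\dot{x}=-Ax-b$ started from $x_f$ — i.e.\ exactly the computation you relegate to a remark. Your primary argument instead sets $z(t)=x(t_f-t)$ and applies the chain rule, then invokes uniqueness of solutions to identify $z$ with the reversed flow. Both are valid; your chain-rule version is arguably cleaner in that it avoids any integral manipulation, makes the sign flip of $b$ transparent, and would extend verbatim to time-dependent coefficients (with reflected arguments), whereas the paper's computation is tied to the constant-coefficient closed form $\exp(At)$ — which is harmless here since the paper only treats time-independent $A$ and $b$. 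The uniqueness step you cite is the one ingredient the paper leaves implicit (it simply matches the two closed-form expressions), and including it is a mild strengthening rather than a gap. Your closing remark about the intended use — converting the terminal condition $P(t_f)=P_f$ of the LQR Riccati equation into an initial value problem solvable by the forward algorithm — matches how the paper deploys the lemma in Section~\ref{sec:matrix_Riccati}.
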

\begin{proof}
We have that 
\begin{equation}
    x_f=\exp(At_f) x_0 + \int_0^{t_f}\exp(At)b dt\,. 
\end{equation}
Using this it can be seen that
\begin{equation}
    x_0=\exp(-At_f)x_f -\exp(-At_f)\int_0^{t_f}\exp(At)b dt=\exp(-At_f)x_f +\int_0^{t_f}\exp(-At)(-b) dt\,,
\end{equation}
where we have used the invariance of the integral under the change of variable $t\rightarrow t_f-t$.
\end{proof}

Suppose we need to solve a boundary value problem given by the condition $y_{t_f}=y_f$, then we can first create a matrix $x_f$ as follows.
\begin{equation}
    x_f=\begin{pmatrix}y_f\\I
    \end{pmatrix}\,.
\end{equation}
We can run the algorithm for a linear ODE by block encoding $-A$ and $-b$. This will give us a block-encoding of $x_0$ by conditioning on the time register. More specifically, we get the following block-encoded unitary
\begin{equation}
    L^{-1}z_{in} = \sum_{i=1}^m \ket{i,0}\bra{0,0}\otimes y_i\,,
\end{equation}
where $y_i$ is the block-encoding of the solution at different times. At time $m=T\|A\|$, we have the block-encoding of the approximate solution $y_m$. By applying a shift operator on the first register, we can bring it into a standard block-encoding as in \defn{block_encoding}. We can then use this block-encoding of $y(0)$ to run the algorithm forwards i.e., now by block-encoding $A$ and $b$ to the final time $T$ to get a block encoding of $y(T)$. 

In the next subsection, we describe the algorithm to solve the Hamilton-Jacobi-Bellman (HJB) equation using its connection to the matrix Riccati equation, which produces a block encoding of a solution to the matrix Riccati equation.

\subsubsection{Quantum algorithm for the linear quadratic regulator problem}\label{sec:hjb_alg}
In this subsection, we will use the results on matrix Riccati equation from above to give a quantum algorithm to solve the linear quadratic regulator problem. For this problem, let us denote
\begin{equation}
    F_0=Q\,,F_1=F\,,F_2=GR^{-1}G^\dag\,,F_3=F^\dag\,,
\end{equation}
with their dimension being $N$ and sparsity $s$.
\begin{theorem}\label{thm:HJB_algorithm}
There exists a quantum algorithm that produces a quantum solution to the linear quadratic regulator, which is a special case of the HJB equation \eq{HJB_special}, with query and gate complexity
\begin{equation}
    O(s\kappa_V\kappa_L\polylog(\frac{1}{\epsilon},s,\kappa_V,\kappa_L,N))
\end{equation}
where (as before)
\begin{align}
    &C(A) \leq \begin{cases}
			C_d(1+C_dT\|F_2\|), & \text{if $F_0=0$ and $F_2\neq 0$}\\
            C_d(1+C_dT\|F_0\|), & \text{if $F_2=0$ and $F_0\neq 0$}\\
            \exp(\mu + \|F_0\| + \|F_2\|), & \text{if both nonzero}
		 \end{cases}\\
    &\|A\| \leq \max(\|F_0\|+\|F_1\|,\|F_2\|+\|F_3\|)\,,
\end{align}
where the log-norm of $F_1$ is denoted $\mu$ and
\begin{equation}
    C_d=\max\{\sup_{t\in [0,T]}\|\exp(F_1t)\|,\sup_{t\in [0,T]}\|\exp(F_3t)\|\}\,,
\end{equation}
and 
\begin{equation}
    \kappa_L=O(C(A)T\|A\|)\,.
\end{equation}
\end{theorem}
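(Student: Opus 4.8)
The plan is to realize the LQR problem as a special case of the matrix Riccati equation and then apply the matrix Riccati algorithm established in \sec{mre_imp}. By \lem{Riccati_HJB}, for the quadratic cost $C(x,u,t)=u^\dag R u + x^\dag Q x$ and linear dynamics $f(x,u,t)=Fx+Gu$, the optimal cost-to-go has the form $V^\ast(x,t)=x^\dag P(t)x$ with $P(t)$ solving the matrix Riccati equation \eq{mre-hjb}, and the optimal feedback law is $\bar u(t)=-R^{-1}G^\dag P(t)x(t)$. Matching \eq{mre-hjb}, integrated backward from the terminal time $t_f$, to the standard form \eq{Riccati_ivp} gives the identification $F_0=Q$, $F_1=F$, $F_2=GR^{-1}G^\dag$, $F_3=F^\dag$, with the signs and transposes absorbed by the time reversal $t\mapsto t_f-t$. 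Hence a ``quantum solution'' of the LQR problem means a block encoding of $P(T)$ together with the block-encoded feedback gain $-R^{-1}G^\dag P(T)$, and it suffices to run the matrix Riccati pipeline on this particular $A$ with blocks $F_1=F$, $F_0=Q$, $F_2=GR^{-1}G^\dag$, $F_3=F^\dag$.

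First I would assemble the block-encoding inputs: given sparse oracles for $F$, $G$, $Q$, $R$ of sparsity $s$, use \lem{sparse_matrix} to block-encode each, \lem{matrix_inversion} to block-encode $R^{-1}$, and \lem{matrix_arithmetics} to form $F_2=GR^{-1}G^\dag$ and then assemble $A$ and the constant term $b=(F_1w,F_2w)^\top$ exactly as in \lem{A_oracle}; this adds only $\polylog$ overhead and a constant factor in the effective sparsity. Because the LQR problem specifies a terminal condition $P(t_f)=P_f$ rather than an initial one, I would then invoke \lem{lin-ode-reverse}: form $x_f=(P_f,I)^\top$, run the linear-ODE solver of \cite{krovi2022improved} on $\dot x=-Ax-b$ to obtain a block encoding of $x(0)=(u(0),v(0))^\top$, recover $y(0)=u(0)v(0)^{-1}$ by inverting the lower block (\lem{matrix_inversion}) and multiplying (\lem{matrix_arithmetics}), bring the time register into the standard form of \defn{block_encoding} by a shift, and finally run the solver forward with $A$ and $b$ to the desired time $T\le t_f$. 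This is precisely the workflow of \sec{matrix_Riccati}, so the query and gate complexity, the bound on $\|A\|$ from the block structure, the bound on $C(A)$ from \lem{C(A)_F_0} and \lem{C(A)_F_neq_0} according to whether $F_0$ or $F_2$ vanishes, and $\kappa_L=O(C(A)T\|A\|)$ are all inherited verbatim from the matrix Riccati theorem of \sec{mre_imp}, yielding $O(s\kappa_V\kappa_L\polylog(1/\epsilon,s,\kappa_V,\kappa_L,N))$.

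To finish, I would post-compose the block encoding of $P(T)$ with block encodings of $R^{-1}$ and $G^\dag$ via \lem{matrix_inversion} and \lem{matrix_arithmetics} to obtain a block encoding of the feedback gain $-R^{-1}G^\dag P(T)$; applying this to a state- or block-encoded $x(T)$ produces the optimal control $\bar u(T)$, which solves \eq{HJB_special}. The error accounting is routine: every matrix-arithmetic step composes additively in the block-encoding error up to subnormalization factors, so choosing the internal precisions as $\poly(\epsilon,1/\kappa_V,1/\kappa_L)$, mirroring the matrix Riccati proof, delivers final error $\epsilon$ with no change to the asymptotics.

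I expect the main obstacle to be twofold, and both concern the parameters inherited from the matrix Riccati theorem rather than the reduction itself. First, one must justify the quantitative Jacobi condition for the LQR instance, i.e. that $V(t)$ in \lem{conjugate} (equivalently $v(t)$ in \prop{Riccati_ODE}) stays invertible with $\kappa_V=\poly(N)$ over the interval used; for the standard stabilizable/detectable LQR this is exactly the classical statement that there are no conjugate points, and some care is needed so that integrating backward from $P_f$ does not create a singularity of $v$. Second, to land inside the $\polylog(N)$-nonlinearity regime of \sec{C(A)} one must bound $\|F_2\|=\|GR^{-1}G^\dag\|\le\|G\|^2\|R^{-1}\|$ and the log-norm $\mu$ of $F_1=F$, and verify that $\mu+\|F_0\|+\|F_2\|$ meets the growth condition \eq{mu_cond} on the chosen horizon $T$ --- that is, that the LQR data of practical interest indeed has mild enough nonlinearity and dissipation. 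Everything else is a direct specialization of the algorithm of \sec{mre_imp}.
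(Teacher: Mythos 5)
Your proposal follows essentially the same route as the paper: identify $F_0=Q$, $F_1=F$, $F_2=GR^{-1}G^\dag$, $F_3=F^\dag$ via \lem{Riccati_HJB}, block-encode $A$ from the oracles for $R,Q,F,G$, run the matrix Riccati algorithm of \sec{mre_imp}, and then post-compose with block encodings of $R^{-1}$ and $G^\dag$ to obtain $\bar{u}(t)=-R^{-1}G^\dag Px(t)$, inheriting the stated bounds on $\|A\|$, $C(A)$ and $\kappa_L$. Your explicit treatment of the terminal condition via \lem{lin-ode-reverse} and your caveats about the quantitative Jacobi condition and the size of $\|F_2\|$ are consistent with what the paper handles in \sec{matrix_Riccati} and assumes via $\kappa_V$, so this is the same argument, just with some of the surrounding assumptions spelled out.
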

\begin{proof}
The quantum algorithm essentially constructs a block-encoding of the solution $P$ of the matrix Riccati equation defined in \eq{mre-hjb} and then uses \eq{mjb-soln} to construct the solution. We discuss these steps next.

First, using oracle access to the matrices $R,Q,F$ and $G$, we can construct block-encoded unitaries corresponding to these matrices. As above, let us denote
\begin{equation}
    F_0=Q\,,F_1=F\,,F_2=GR^{-1}G^\dag\,,F_3=F^\dag\,,
\end{equation}
and denote their block-encoded unitary versions $U_0$ through $U_3$. To construct the matrix $A$ defined in \eq{linear_Riccati}, we conditionally apply each of the unitaries $U_i$ i.e., first create
\begin{equation}
    V_1=\ket{0}\bra{0}\otimes U_1 + \ket{1}\bra{1}\otimes U_3\,,
\end{equation}
then create the off-diagonals as follows.
\begin{equation}
    V_2=(X\otimes I)(\ket{0}\bra{0}\otimes U_2 + \ket{1}\bra{1}\otimes U_0)=\ket{1}\bra{0}\otimes U_2 + \ket{0}\bra{1}\otimes U_0\,.
\end{equation}
Adding these two block matrices i.e., $U=V_1+V_2$, we get a block encoded version of the linear operator $A$. This takes $O(s)$ queries, where $s$ is the maximum sparsity of the matrices $R,Q, F$ and $G$.

Next, we use block-encoded implementation of matrix Riccati algorithm from \sec{mre_imp} to create a block-encoded version of $P$ from \eq{mre-hjb}.

Then, we create a block encoding of $R^{-1}$ and $G^\dag$. Using these and the oracle for $x(t)$, we can create the state
\begin{equation}
    \bar{u}(t)=-R^{-1}G^\dag Px(t)\,.
\end{equation}
The query and gate complexities follow from the complexity of the matrix Riccati algorithm and the complexity of matrix arithmetic.
\end{proof}

\section{Conclusions and open questions}\label{sec:conclusions}
We have presented quantum algorithms to simulate realistic mechanical systems such as ones that satisfy the following differential equation.
\begin{equation}
    M\ddot{q}+R\dot{q}+Vq=s\,,
\end{equation}
where $q(t)$ is the position vector of the mechanical system. This includes damped coupled oscillators extending the work of \cite{babbush2023exponential}. We have shown that when the damping strength is inverse polynomial in the number of qubits, then the problem of estimating the kinetic energy of the system is $\BQP$ hard. This means that unless $\BPP=\BQP$, no classical algorithm can solve this problem efficiently. 

We then give quantum algorithms to simulate the vector and matrix Riccati equation. Using the algorithm for a matrix Riccati equation, one can solve the linear quadratic regulator problem. The LQR problem is a fundamental problem in control theory and involves a quadratic nonlinearity. Our quantum algorithm can efficiently solve a quadratic nonlinear differential equation for a large nonlinearity ($R\gg 1$ in the language of \cite{Liue2026805118}). To our knowledge, this is the first work that gives an efficient algorithm to solve a differential equation with $R>1$ with rigorous performance bounds. Aside from the LQR problem, the Riccati equation is used in predicting the stability of magnetohydrodynamics (MHD) \cite{Glasser}.

We list a few open problems that arise from this work. First, applying the algorithm to estimate kinetic energy of damped coupled oscillators to estimate other quantities of interest would be useful. Second, the quantum algorithm for damped coupled oscillators is efficient for damping strengths that are $\polylog(d)$, where $d$ is the dimension. However, the hardness result requires that the strength be less than some inverse polynomial. Can it be improved to a constant damping strength? A related question is if there are classical algorithms that can estimate the kinetic energy in the presence of large enough damping strength. Third, are there other end-to-end applications of the quantum algorithm for the matrix Riccati equation? Another important direction would be to generate accurate resource for end-to-end applications like the stability of magnetohydrodynamics for instances at utility scale.

\section{Acknowledgments}
This material is based upon work supported by the U.S. Department of Energy, Office of Science, Office of Fusion Energy Sciences, under Award Number DE-SC0020264 and work supported by the Defense Advanced Research Projects Agency (DARPA) under Contract No. HR001122C0063.



\appendix
\section*{Appendices}
\section{Auxiliary lemmas}\label{appx:auxil_lemma}
\begin{lemma}\label{lem:q_i_diff}
When $b=0$ in \eq{M_R_V_b_eq}, the quantum algorithm from \sec{class_ham_sim} outputs $\hat{\dot{q}}$ such that we have
\begin{equation}
    |\dot{q}_i - \hat{\dot{q}}_i|\leq \epsilon\dot{q}_i\,,
\end{equation}
for some constant $\epsilon$.
\end{lemma}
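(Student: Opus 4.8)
The plan is to reduce the claim to the relative $\ell_2$ error guarantee already used in the proof of \thm{Ham_canon}. When $b=0$ the linear ODE \eq{Ham_eq_linear} is homogeneous, so $x(T)=\exp(AT)x(0)$, and the quantum ODE solver of \cite{krovi2022improved} invoked there outputs a state proportional to an approximate solution $\hat x(T)$ obeying $\|\hat x(T)-x(T)\|_2\le\delta\,\|x(T)\|_2$ for a tunable accuracy parameter $\delta$ that enters the query and gate complexity only polylogarithmically. The velocity $\dot q$ is the lower block of $x(T)$ and the algorithm's output $\hat{\dot q}$ is the corresponding block of $\hat x(T)$, so each coordinate difference $\dot q_i-\hat{\dot q}_i$ is a single entry of $\hat x(T)-x(T)$, whence
\[
|\dot q_i-\hat{\dot q}_i|\;\le\;\|\hat x(T)-x(T)\|_2\;\le\;\delta\,\|x(T)\|_2 .
\]

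Given a target per‑coordinate error $\epsilon$ (in the corollary $\epsilon=\epsilon/16$), I would then pick $\delta$ small enough that $\delta\,\|x(T)\|_2\le\epsilon\,\min_i|\dot q_i|$. Because $b=0$ there is no source term able to drive an individual velocity coordinate to zero independently of the approximation error, so under the mild genericity that $\dot q_i\neq0$ this choice is legitimate, and since \eq{Ham_eq_linear} is homogeneous it costs only an extra $\polylog\!\big(\|x(T)\|_2/(\epsilon\min_i|\dot q_i|)\big)$ factor. With this $\delta$ we obtain $|\dot q_i-\hat{\dot q}_i|\le\epsilon\,|\dot q_i|$ for every $i$, which is the asserted bound (the statement being read with $|\dot q_i|$ on the right, equivalently with $\dot q_i>0$ for the instances of interest, and the sign of $\hat{\dot q}_i$ agreeing with that of $\dot q_i$ once $\epsilon<1$).

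The main obstacle is exactly this passage from an $\ell_2$‑relative error to a coordinate‑wise‑relative error: the bound $\delta\|x(T)\|_2$ is controlled by the largest components of $x(T)$, whereas the lemma needs it small compared with the \emph{smallest} velocity component. Homogeneity is what makes this go through — absent the forcing term $b$, both the exact solution and the numerical error are images of the same vector $x(0)$ under, respectively, $\exp(AT)$ and a matrix polynomial approximating it, so shrinking $\delta$ shrinks the error uniformly across coordinates without any coordinate being pinned near zero by a constant drift. If one wishes to eliminate any dependence on $\min_i|\dot q_i|$ altogether, an alternative is to split the index set into coordinates with $|\dot q_i|\ge\eta\|\dot q\|_2$ and the rest, control the former group as above and fold the (collectively $O(\eta)$) contribution of the latter group directly into the kinetic‑energy error estimate of the corollary; I expect the direct route above to yield the cleaner write‑up.
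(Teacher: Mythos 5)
Your route differs from the paper's in a way that leaves a real gap. The paper does not pass through the aggregate $\ell_2$ guarantee at all: it invokes a \emph{per-component} relative bound from the proof of Theorem~3 of \cite{krovi2022improved}, namely $\|\Pi_i(x(T)-\hat{x})\|\le \frac{me^3}{(k+1)!}\|\Pi_i x(T)\|$, and then simply chooses the Taylor truncation order $k$ so that $\frac{me^3}{(k+1)!}\le\epsilon$. That bound is uniform over coordinates and involves no quantity like $\min_i|\dot q_i|$. Your argument instead starts from $\|\hat x(T)-x(T)\|\le\delta\|x(T)\|$ and tries to buy the coordinate-wise statement by taking $\delta\le\epsilon\min_i|\dot q_i|/\|x(T)\|$. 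This is where the proof breaks as a proof of the lemma as stated: individual velocity components can be arbitrarily small or exactly zero at the readout time $T$ (an oscillator at a turning point has $\dot q_i(T)=0$), and homogeneity of the ODE does not prevent this --- it only says the exact solution and the error are both linear images of $x(0)$, not that the error in coordinate $i$ shrinks proportionally to $|\dot q_i(T)|$. So your ``mild genericity'' clause is a genuine extra hypothesis, and even when it holds your parameter choice requires a priori knowledge of (a lower bound on) $\min_i|\dot q_i|$ and introduces a complexity dependence on $\|x(T)\|/\min_i|\dot q_i|$ that is absent from the paper's lemma and from the corollary that uses it.

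The fix is exactly the ingredient you are missing: use the componentwise relative error guarantee of the truncated-Taylor ODE solver (as the paper does), rather than the $\ell_2$-relative guarantee, so that $|\dot q_i-\hat{\dot q}_i|\le\epsilon|\dot q_i|$ holds for every $i$ with $\epsilon$ controlled solely by $k$ and $m$. Your fallback suggestion --- splitting coordinates by whether $|\dot q_i|\ge\eta\|\dot q\|$ and absorbing the small ones into the kinetic-energy error --- is closer to something that could be made rigorous for the corollary, but it proves a different (weaker) statement than the lemma and would require reworking the corollary's error budget rather than citing the lemma as written.
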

\begin{proof}
Let $x=(q,\dot{q})^T$ and let $\Pi_i$ be the projector onto the $i^{th}$ component of $x$. From the proof of Theorem 3 of \cite{krovi2022improved}, we have
\begin{equation}
    \|\Pi_i(x(T)-\hat{x})\|\leq \frac{me^3}{(k+1)!}\|\Pi_ix(T)\|\leq \epsilon x_i\,,
\end{equation}
where we assume that we have chosen $k$ as in \cite{krovi2022improved} such that 
\begin{equation}
  \frac{me^3}{(k+1)!}\leq \epsilon\,.  
\end{equation}
By choosing $i$ to correspond to $\dot{q}_i$, we get the statement in the lemma.
\end{proof}

\bibliographystyle{ieeetr}
\bibliography{qdiff}

\end{document}